\newtheorem{theorem}{Theorem}
\newtheorem{lemma}[theorem]{Lemma}
\newtheorem{definition}[theorem]{Definition}
\newtheorem{remark}[theorem]{Remark}
\newtheorem{proposition}[theorem]{Proposition}
\newtheorem{example}[theorem]{Example}
\newtheorem{fact}[theorem]{Fact}
\newcommand{\abs}[1]{\left|#1\right|}		
\newcommand{\E}{\mathop{\mathbb{E}}}  		
\newcommand{\R}{\mathop{\mathbb{R}}}  		
\newcommand{\N}{\mathbb{N}} 			  		
\newcommand{\F}{\mathbb{F}}					
\newcommand{\Z}{\mathbb{Z}}					
\newcommand{\C}{\mathbb{C}}					
\newcommand{\mc}{\mathcal}
\newcommand{\eps}{\mathop{\epsilon}}
\newcommand{\set}[1]{\left\{ #1 \right\}}   
\newcommand{\brac}[1]{\left( #1 \right)}    
\newcommand{\sqbrac}[1]{\left[ #1 \right]}  
\newcommand{\Var}{\textnormal{Var}}				
\newcommand{\up}[1]{^{\brac{#1}}}
\newcommand{\ind}{\mathds{1}}
\newcommand{\supp}{\textnormal{supp}} 	
\newcommand{\modt}{\textnormal{ (mod 2)}} 
\newcommand{\modk}{\textnormal{ (mod $k-1$)}} 
\newcommand{\Lin}{\textnormal{Lin}} 
\newcommand{\Sym}{\textnormal{Sym}}  
\let\leq=\leqslant 
\let\geq=\geqslant %
\begin{document}


\title{Biased Linearity Testing in the 1\% Regime}
\author{Subhash Khot\thanks{Department of Computer Science, Courant Institute of Mathematical Sciences, New York University. E-mail: \href{khot@cs.nyu.edu}{\texttt{khot@cs.nyu.edu}.} Research supported by NSF Award CCF-1422159, NSF Award CCF-2130816, and the Simons Investigator Award.} \and Kunal Mittal\thanks{Department of Computer Science, Princeton University. E-mail: \href{kmittal@cs.princeton.edu}{\texttt{kmittal@cs.princeton.edu}.} Research supported by NSF Award CCF-2007462, and the Simons Investigator Award.}}
\date{\today}			 					
\maketitle
\begin{abstract}
We study linearity testing over the $p$-biased hypercube $(\{0,1\}^n, \mu_p^{\otimes n})$ in the 1\% regime.
For a distribution $\nu$ supported over $\{x\in \{0,1\}^k:\sum_{i=1}^k x_i=0 \textnormal{ (mod 2)} \}$, with marginal distribution $\mu_p$ in each coordinate, the corresponding $k$-query linearity test $\textnormal{Lin}(\nu)$ proceeds as follows:
Given query access to a function $f:\{0,1\}^n\to \{-1,1\}$, sample $(x_1,\dots,x_k)\sim \nu^{\otimes n}$, query $f$ on $x_1,\dots,x_k$,  and accept if and only if $\prod_{i\in [k]}f(x_i)=1$.
	
Building on the work of Bhangale, Khot, and Minzer (STOC '23), we show, for $0<p\leq \frac{1}{2}$, that if $k \geq  1+\frac{1}{p}$, then there exists a distribution $\nu$ such that the test $\textnormal{Lin}(\nu)$ works in the 1\% regime; that is, any function $f:\{0,1\}^n\to \{-1,1\}$ passing the test $\textnormal{Lin}(\nu)$ with probability $\geq \frac{1}{2}+\epsilon$, for some constant $\epsilon>0$, satisfies $\Pr_{x\sim \mu_p^{\otimes n}}[f(x)=g(x)] \geq \frac{1}{2}+\delta$, for some linear function $g$, and a constant $\delta = \delta(\epsilon)>0$.
	
Conversely, we show that if $k < 1+\frac{1}{p}$, then no such test $\textnormal{Lin}(\nu)$ works in the 1\% regime. 
Our key observation is that the linearity test $\textnormal{Lin}(\nu)$ works if and only if the distribution $\nu$ satisfies a certain pairwise independence property.

\end{abstract}

\section{Introduction}

A function $f:\set{0,1}^n\to \set{-1,1}$ is said to be linear over $\F_2$\footnote{by identifying the range $\F_2$ with $\set{-1,1}$, under the map $b\to (-1)^b$}, if there exists a set $S\subseteq [n]$, such that $f(x) = \prod_{i\in S} \brac{-1}^{x_i}$; this function is denoted by $\chi_S$.
The classical linearity testing problem, asks, given query access\footnote{the algorithm is allowed to ask/query the value of $f(x)$ at any $x\in \set{0,1}^n$} to a function $f:\set{0,1}^n\to \set{-1,1}$, to distinguish between the following two cases\footnote{the algorithm is allowed to answer arbitrarily for functions $f$ which violate both the conditions}:
\begin{enumerate}
	\item $f$ is a linear function.
	\item $f$ is far from being linear; that is, for every linear function $\chi_S$, the functions $f$ and $\chi_S$ disagree on many points.
\end{enumerate}

Linearity testing was first studied by Blum, Luby, and Rubinfeld, who gave a very simple 3-query test for this problem~\cite{BLR93}.
This test, known as the BLR test, proceeds in the following manner: Sample $x, y \sim \set{0,1}^n$ uniformly and independently; query $f$ at $x, y,$ and $x\oplus y$, and accept if and only if $f(x\oplus y)=f(x)\cdot f(y)$.
Observe that this test accepts all linear functions with probability 1.
Blum, Luby and Rubinfeld proved that any function $f$ passing this test with high probability ($1-\delta$, for some small $\delta>0$), must agree with some linear function $\chi_S$ on most (at least $1-O(\delta)$ fraction of) points in $\set{0,1}^n$.
This result, with the acceptance/agreement probability close to 1, is known as the 99\%-regime of the test.

It was shown later~\cite{BCHKS96, KLX10} that the above result extends to the 1\% regime as well; more precisely, for every $\delta\in [0,1]$, and $f:\set{0,1}^n\to \set{-1,1}$ such that 
\[\E_{x,y\sim\set{0,1}^n}\sqbrac{f(x)\cdot f(y)\cdot f(x\oplus y)} \geq \delta,\]
there exists $S\subseteq [n]$ such that $\E_{x\sim \set{0,1}^n}\sqbrac{f(x)\cdot\chi_S(x)} \geq \delta$. 

The above test is of fundamental importance in theoretical computer science, and has several applications; for example, it is one of the ingredients in the proof of the celebrated PCP theorem~\cite{FGLSS96, AS98, ALMSS98}.
Furthermore, the analysis of the BLR test by Bellare et al.~\cite{BCHKS96} is one of the early uses of Fourier analysis over the boolean hypercube, an area which now plays a crucial role in many diverse subfields of mathematics and computer science, like complexity theory, harness of approximation, learning theory, coding theory, social choice theory, etc.~\cite{Don14}.

In this work, we are interested in the problem of linearity testing over the $p$-biased hypercube.
For $p\in (0,1)$, we denote by $\mu_p$ the $p$-biased distribution on $\set{0,1}$, which assigns probability $p$ to 1, and $1-p$ to 0.
The $p$-biased hypercube refers the set $\set{0,1}^n$, with the $n$-fold product measure $\mu_p^{\otimes n}$.
Linearity testing, in this $p$-biased setting, asks to distinguish between linear functions, and functions which are far (with respect to the $p$-biased measure) from being linear.

The 99\% regime of this problem is well-understood~\cite{KS09, DFH19}, and a simple 4-query test works in this case (see Example~\ref{eg:dfh19_test} below).
The question for the 1\% regime turns out to be significantly more challenging for any $p\not= 1/2$, and was wide open until a recent work of Bhangale, Khot and Minzer~\cite{BKM23b} made significant progress.
In particular, for every $p\in \brac{\frac{1}{3}, \frac{2}{3}}$, they give a 4-query test that works in the 1\% regime.

Building upon the work of Bhangale, Khot and Minzer, we consider a very general class of tests, where, very roughly, some $k$ queries $x_1,\dots, x_k \in \set{0,1}^n$, satisfying $\sum_{i\in [k]}x_i = 0 \modt$ are chosen, and the test accepts $f:\set{0,1}^n\to \set{-1,1}$ if $\prod_{i\in [k]} f(x_i) = 1$.
We shall require the following definitions:

\begin{definition}\label{defn:distr_class} (Class of Distributions)
	For $k\in \N,\ p\in (0,1)$, we define $\mc D(p,k)$ to be the class of all distributions $\nu$ on $\set{0,1}^k$ having $\mu_p$ as the marginal distribution on each coordinate $i\in [k]$, and such that $\supp(\nu)\subseteq \set{x\in \set{0,1}^k: \sum_{i=1}^kx_i = 0 \modt}$. 
	We say that such a distribution $\nu$ has \emph{full even-weight support}, if the above inclusion is an equality.
		
	For a distribution $\nu \in \mc D(p,k)$, we say that $i\in [k]$ is a \emph{pairwise independent coordinate}, if for each $j\in [k], j\not= i$, it holds that $\E_{X\sim \nu}\sqbrac{X_i\cdot X_j} = p^2$.
	We say that $\nu$ is \emph{pairwise independent}, if all its coordinates are pairwise independent.
\end{definition} 

\begin{definition}\label{defn:lin_test} (Class of Linearity Tests)
	For a distribution $\nu \in \mc D(p,k)$, we define a corresponding linearity test, denoted by $\Lin(\nu)$, as follows.
	Given query access to a function $f:\set{0,1}^n\to \set{-1,1}$:
	Sample\footnote{here, by $x = (x_1,\dots, x_k) \sim \nu^{\otimes n}$, we mean that for each $j\in [n]$, sample $(x_1\up{j},\dots, x_k\up{j})\sim \nu$ independently (also see Section~\ref{sec:prelims} for notation).} $x = (x_1,\dots, x_k) \sim \nu^{\otimes n}$, and accept if and only if $f(x_1)\cdot f(x_2)\cdots f(x_k) =1$.
\end{definition}

Note that every linear function passes such a test with probability 1\footnote{When $k$ is even, affine functions of the form $\pm \chi_S$ also pass the test with probability 1. In this work, we shall ignore the distinction between these functions and linear functions.}.
More strongly, each query in $\nu^{\otimes n}$ having marginal distribution $\mu_p^{\otimes n}$ ensures that functions that are close to linear (with respect to $p$-biased measure) are also accepted with high probability; in the property testing literature, such tests are called tolerant.
Furthermore, this is a very general class of linearity tests, containing many of the mentioned previously tests, as demonstrated by the following examples:

\begin{example}
	The BLR test uses $\nu$ to be uniform over ${\set{x\in \set{0,1}^3 : x_1+x_2+x_3 = 0 \modt}}$.
\end{example}
\begin{example}\label{eg:dfh19_test}
	The 4-query $p$-biased test of~\cite{DFH19} (for the 99\% regime) uses a distribution $\nu$ over $\set{0,1}^4$ of the following form:
	With probability $p_0$, set all coordinates to 0; with probability $p_1$, set all coordinates to 1; and with probability $1-p_0-p_1$, sample uniformly from the set $\set{x\in \set{0,1}^4 : x_1+x_2+x_3+x_4 = 0 \modt}$.
	Note that each coordinate has bias $p_1 + \frac{1}{2}\brac{1-p_0-p_1}$, and $p_0,p_1$ are chosen so that this equals $p$.
\end{example}

In this work, we analyze the precise conditions under which tests in Defintion~\ref{defn:lin_test} work for linearity testing, in the 1\% regime.
Our main result (proven in Section~\ref{sec:putting_together}) is the following:
\begin{theorem}\label{thm:intro_querybias_main_thm}
	Let $p\in (0,1)$.
	\begin{enumerate}
		\item For every integer $k > 1 + \frac{1}{\min\set{p,1-p}}$, there exists a distribution $\nu\in \mc D(p,k)$, such that the test $\Lin(\nu)$ is a $k$-query linearity test over the $p$-biased hypercube, for the 1\% regime.
		
		That is, for every $\eps>0$, there exists a $\delta>0$, such that for every large $n\in \N$, and every function $f:\set{0,1}^n\to [-1,1]$ satisfying\[ \abs{\E_{(X_1,\dots,X_k)\sim \nu^{\otimes n}} \sqbrac{\prod_{i\in [k]} f(X_i)} }\geq \eps,\]
		there exists a set $S\subseteq [n]$, such that $ \abs {\E_{X\sim \mu_p^{\otimes n}}\sqbrac{f(X)\cdot \chi_S(X)}} \geq \delta.$
		
		\item The above point also holds for all integers $k\geq 3$ with $p=\frac{1}{k-1}$, and for all even integers $k\geq 4$ with $p = 1-\frac{1}{k-1}$.
		
		\item Conversely, for every positive integer $k < 1+\frac{1}{\min\set{p,1-p}}$, and every distribution $\nu\in \mc D(p,k)$, the test $\Lin(\nu)$ fails in the 1\% regime.
		
		That is, there exists a constant $\alpha>0$, such that for every large $n\in \N$, there exists a function $f:\set{0,1}^n\to \set{-1,1}$ satisfying \[ \abs{\E_{(X_1,\dots,X_k)\sim \nu^{\otimes n}} \sqbrac{\prod_{i\in [k]} f(X_i)} }\geq \alpha,\]
		and such that for every $S\subseteq [n]$, it holds that $ \abs {\E_{X\sim \mu_p^{\otimes n}}\sqbrac{f(X)\cdot \chi_S(X)}} \leq o_n(1)$.
	\end{enumerate}
\end{theorem}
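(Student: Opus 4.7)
The plan is to reduce all three parts of Theorem~\ref{thm:intro_querybias_main_thm} to a single structural characterization (hinted at in the abstract): the test $\Lin(\nu)$ works in the 1\% regime if and only if $\nu$ satisfies the pairwise independence property of Definition~\ref{defn:distr_class}. Given this characterization, parts (1) and (2) follow from constructing appropriate pairwise independent distributions in $\mc D(p,k)$, while part (3) follows from showing no such distribution can exist in the prohibited range.

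For the existence/nonexistence analysis, I would use an elementary second-moment argument. If $\nu \in \mc D(p,k)$ is pairwise independent, then the weight $S = \sum_{i=1}^k X_i$ satisfies $\E\sqbrac{S} = kp$ and $\E\sqbrac{S(S-1)} = k(k-1)p^2$, and the support assumption forces $S \in \set{0,2,4,\dots,W_{\max}}$ with $W_{\max} = 2\lfloor k/2\rfloor$. The pointwise inequality $S(S-1)\geq S$ on nonnegative even integers yields $(k-1)p \geq 1$. For $k$ odd, the bounded-support inequality $S(W_{\max} - S)\geq 0$ translates (after computing moments) to $W_{\max} \geq 1 + (k-1)p$, which, combined with $W_{\max} = k-1$, gives $(k-1)(1-p)\geq 1$; for $k$ even the symmetry $x \mapsto \mathbf{1}-x$ preserves even parity, and applying the first-moment bound to the push-forward distribution in $\mc D(1-p,k)$ yields the same bound. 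Together these prove $k \geq 1 + 1/\min\set{p,1-p}$, establishing the impossibility in part (3). For parts (1)--(2), I would construct an explicit symmetric pairwise independent distribution supported on weights $\set{0, 2, W_{\max}}$; solving a $3\times 3$ linear system in the weight probabilities matches the marginal and pair-correlation moments, and the boundary cases of part (2) are precisely those where one of the three weights vanishes (leaving a two-weight distribution, which is why the $p = 1 - 1/(k-1)$ boundary needs $k$ even so that $W_{\max} = k$).

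For the characterization itself, the easier direction is that if $\nu$ fails to be pairwise independent---say $\E_\nu\sqbrac{X_i X_j} = p^2 + \eta$ for some pair with $\eta \neq 0$---then one builds a junta-like $f$ exploiting this second-moment deviation to pass the test with non-negligible probability while being arbitrarily far (as $n\to\infty$) from every linear function. The harder direction, sufficiency of pairwise independence, builds on~\cite{BKM23b}: expanding $\E\sqbrac{\prod_i f(X_i)}$ in the $p$-biased Fourier basis and performing a regularity/influence decomposition, one replaces the Boolean queries by matching Gaussian variables. Pairwise independence of $\nu$ then forces the Gaussian covariance matrix across the $k$ queries to be the identity, so independent mean-zero Gaussians annihilate the contribution from all non-linear Fourier levels; only the degree-one mass of $f$ can sustain the test's acceptance probability, from which a character $\chi_S$ with large correlation is extracted.

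The main obstacle is the sufficiency direction just sketched. The $p$-biased Fourier characters have $\ell_\infty$ norm that blows up as $p \to 0$ or $p \to 1$, so the quantitative regularity and invariance steps must be strong enough to survive the $k$-fold product appearing in the test. Extending the framework of~\cite{BKM23b}, developed for $k=4$ with a specific distribution, to arbitrary $k$ and arbitrary pairwise independent $\nu \in \mc D(p,k)$ is where the technical bulk of the argument lies.
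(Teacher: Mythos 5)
Your overall architecture (pairwise-independence characterization $+$ query/bias tradeoff $+$ a BKM-style sufficiency theorem) matches the paper's, but there are genuine gaps, and the largest one is exactly the part the paper spends most of its effort on. First, the characterization you state is the wrong one: the correct criterion is that $\nu$ has \emph{some pairwise independent coordinate} (Definition~\ref{defn:distr_class}), not that $\nu$ is fully pairwise independent, and your claimed necessity direction (``some pair has $\E[X_iX_j]=p^2+\eta$, $\eta\neq 0$, hence the test fails'') is false as stated, since distributions with one correlated pair but a pairwise independent coordinate still admit a working test. More importantly, your sketch of the failure construction cannot work: a ``junta-like'' $f$ (or any function whose Fourier mass sits on $O(1)$ coefficients) that satisfies $\abs{\E[f\chi_S]}=o_n(1)$ for \emph{all} $S$ has $o_n(1)$ $2$-norm, hence $\abs{\E\sqbrac{\prod_i f(X_i)}}\leq \E\abs{f(X_1)}=o_n(1)$, so it cannot pass the test with constant probability. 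The paper's counterexample (Theorem~\ref{thm:counter_eg}) is necessarily a low-influence function, $f(x)=h\brac{n^{-1/2}\sum_j (x\up{j}-p)/\sqrt{p-p^2}}$, and producing the Gaussian profile $h$ with $\E_{\mc N(0,1)}[h]=0$ but $\E_{\mc N(0,\Sigma)}\sqbrac{\prod_i h(Z_i)}\neq 0$ whenever no coordinate is independent of the rest is the technical heart: it needs the Hermite-expansion computation (Lemma~\ref{lemma:hermite_exp}) together with the symmetrized-powers-of-polynomials lemma (Lemma~\ref{lemma:polynomial_all_var}), followed by a CLT transfer, a probabilistic rounding to range $\set{-1,1}$ (which part (3) requires), and a separate treatment of coordinates that are almost surely equal (where the product telescopes and the distribution must be trimmed). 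None of this is present in, or replaceable by, your second-moment sketch.

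On the positive side, your moment argument for the impossibility of pairwise independence when $k<1+\frac{1}{\min\set{p,1-p}}$ is essentially the paper's Proposition~\ref{prop:query_bias_lb} (just phrase it per coordinate, via $\E[X_iZ]$ with $Z=\sum_{j\neq i}X_j$, so that it rules out even a single pairwise independent coordinate). But your existence construction is incomplete: the three-weight ansatz supported on weights $\set{0,2,W_{\max}}$ does solve the $3\times 3$ moment system, yet the solution need not be a probability vector — for $p$ in the middle of the range (e.g.\ $p$ near $\tfrac12$ and $k$ large) the weight-$0$ mass goes negative — which is why the paper only uses this ansatz for $p$ near $\frac{1}{k-1}$ or $1-\frac{1}{k-1}$ (Lemma~\ref{lemma:ub_case_analysis}) and handles the middle range by padding a smaller pairwise independent distribution with independent $\mu_p$ bits (Lemma~\ref{lemma:ub_add_ind_copies}). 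Finally, for parts (1)--(2) you also need the constructed $\nu$ to ``contain BLR'' (including the $\F_2$-span condition of Definition~\ref{defn:cont_BLR}) to invoke the generalized BKM theorem, which is what makes the boundary cases $p=\frac{1}{k-1}$ and $p=1-\frac{1}{k-1}$ (the latter only for even $k$) go through even though full even-weight support is impossible there; your remark about vanishing weights gestures at this but does not supply the needed hypothesis.
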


\begin{remark}\label{remark:corner_case_intro}
	Note that the above theorem does not discuss the case when $k\geq 5$ is an odd integer, and $p = 1-\frac{1}{k-1}$.
	This case is very interesting and is discussed in more detail in Section~\ref{sec:corner_case}.
	Informally speaking, the test corresponding to the ``natural" distribution $\nu \in \mc D(p,k)$, in this case, ensures correlation with a character of $\Z/(k-1)\Z$, and not a linear function $\chi_S$ (that is, a character of $\Z/2\Z$).
	In Section~\ref{sec:corner_case}, we also present an alternative test to get around this.
\end{remark}

Next, we shall describe the main technical results we prove along the way to prove Theorem~\ref{thm:intro_querybias_main_thm}.
We start by stating (a generalized version of) the main linearity testing result of Bhangale, Khot and Minzer~\cite{BKM23b}:

\begin{theorem}\label{thm:bkm23}
	(General version proved later as Theorem~\ref{thm:bkm23_in_section})
	Let $k\geq 3$ be a positive integer, and let $p\in (0,1),\ \epsilon \in (0,1]$ be constants, and let $\nu \in \mc D(p,k)$ be a distribution with full even-weight support (see Definition~\ref{defn:distr_class}).
	Then, there exists constants $\delta>0,\ d\in \N$ (possibly depending on $k, p, \epsilon, \nu$), such that for every large enough $n\in \N$, the following is true:
	
	Let $f:\set{0,1}^n\to[-1,1]$ be a function such that \[ \abs{\E_{(X_1,\dots,X_k)\sim \nu^{\otimes n}} \sqbrac{\prod_{i=1}^k f(X_i)} }\geq \eps.\]
	Then, there exists a set $S\subseteq [n]$, and a polynomial $g:\set{0,1}^n\to \R$ of degree at most $d$ and with 2-norm $\E_{X\sim \mu_p^{\otimes n} }\sqbrac{g(X)^2}\leq 1$, such that
	\[ \abs {\E_{X\sim \mu_p^{\otimes n}}\sqbrac{f(X)\cdot \chi_S(X)\cdot g(X)}} \geq \delta .\]
	
	Moreover, if the distribution $\nu$ has some pairwise independent coordinate, then we may assume $g\equiv 1$; that is, $f$ correlates with a linear function $\chi_S$.
\end{theorem}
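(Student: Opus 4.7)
The statement generalizes the main theorem of BKM23 from $k=4$ and specific distributions to arbitrary $k\geq 3$ and arbitrary $\nu\in\mc{D}(p,k)$ with full even-weight support. My plan is to follow the BKM23 template, adapting each step to this generality, and to handle the ``moreover'' conclusion separately using the pairwise independence hypothesis.

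The starting point is to exploit the full-support assumption via a decomposition $\nu = \alpha \cdot U_{\textnormal{even}} + (1-\alpha)\cdot \tilde{\nu}$ for some constant $\alpha>0$, where $U_{\textnormal{even}}$ is the uniform distribution on $\set{x\in\set{0,1}^k:\sum_i x_i = 0 \modt}$. Sampling from $U_{\textnormal{even}}$ and fixing any $k-1$ coordinates leaves the last determined by parity, so this decomposition effectively injects $p$-biased noise into each coordinate of the test. This permits an application of $p$-biased hypercontractivity (Bonami-Beckner) to conclude that the low-degree Fourier mass of $f$, with respect to the $\mu_p^{\otimes n}$-inner product, accounts for most of the test correlation.

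I would then invoke a regularity/invariance argument. Partition $[n]$ into a bounded-size set $H$ of coordinates with high $p$-biased influence (with $|H|$ depending only on $\eps, p, k$) and the remaining low-influence set $L$. On the low-influence part $L$, an invariance principle in the spirit of Mossel's Gaussian noise stability framework replaces the discrete variables with Gaussians and shows that the relevant contribution from $L$ factors through a polynomial of some bounded degree $d$ --- this polynomial becomes the $g$ in the conclusion. On $H$, since the cube has bounded size, I would enumerate over restrictions of $f$ to $L$ and extract from each branch a linear character $\chi_S$ with $S\subseteq H$ via a finite-cube BLR-style argument (the parity constraint in $\supp(\nu)$ guarantees that a surviving test on the small cube is a linearity test mod~$2$). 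Recombining yields $\abs{\E[f\cdot \chi_S\cdot g]}\geq \delta$.

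For the moreover part, observe that if coordinate $i$ is pairwise independent then $\E_\nu[\phi_1(X_i)\phi_1(X_j)]=0$ for every $j\neq i$ in the orthonormal $p$-biased basis $\phi_1(x)=(x-p)/\sqrt{p(1-p)}$. After the reduction to Gaussian space, this orthogonality forces the Gaussian polynomial representing $f$ to contribute trivially in any ``mixed'' level-$1$ interaction involving the $i$-th factor, which collapses the low-degree piece $g$ to a constant and leaves pure correlation with $\chi_S$. The step I expect to be most delicate is the coupling between the $H$- and $L$-analyses in the regularity argument: ensuring that the character extracted from $H$ and the polynomial extracted from $L$ combine multiplicatively into $\chi_S\cdot g$ rather than additively, while keeping the resulting constants $\delta$ and $d$ uniform over all $\nu\in\mc{D}(p,k)$ of full even-weight support.
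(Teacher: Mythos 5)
There is a genuine gap, and it sits at the heart of your argument. Your plan rests on two intermediate claims: (i) that after decomposing $\nu = \alpha\cdot U_{\textnormal{even}} + (1-\alpha)\cdot\tilde{\nu}$, hypercontractivity shows the low-degree ($\mu_p$-biased) Fourier mass of $f$ accounts for most of the test correlation, and (ii) that the character $\chi_S$ can be extracted with $S\subseteq H$ for a set $H$ of high-influence coordinates whose size depends only on $\eps,p,k$. Both claims are false in the 1\% regime, as witnessed by $f=\chi_S$ with $\abs{S}=n$: this function passes the test $\Lin(\nu)$ with probability $1$, yet under $\mu_p$ with $p\neq 1/2$ it has no low-degree Fourier mass at all (its weight is concentrated around level $4p(1-p)\,n$), and every one of its $n$ coordinates has the same constant influence $4p(1-p)$, so no bounded-size high-influence set can contain the relevant $S$. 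The mixture with $U_{\textnormal{even}}$ does not inject independent per-coordinate noise (its marginals are uniform, not $\mu_p$, and the queries remain correlated through the parity constraint), so no noise-operator/Bonami--Beckner argument can truncate $f$ to low degree without destroying completeness. This is precisely why the 1\% regime resisted elementary Fourier or regularity arguments and why the theorem's conclusion allows an arbitrary $S$ multiplied by a low-degree $g$.

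The paper's proof takes a different, and essentially unavoidable, route: since full even-weight support implies $\nu$ ``contains BLR,'' one writes $\nu=(1-\beta)\nu'+\beta\mu$ where $\mu$ is uniform on a BLR pattern $\set{(x_1,x_2,x_1\oplus x_2\oplus\tilde b,\tilde z)}$ supported on three coordinates; sampling $\nu^{\otimes n}$ as a random restriction then reduces, via the classical Fourier analysis of BLR, to the statement that with probability $\geq\eps/2$ over the restriction, $f_{I\to Z_1}$ has a Fourier coefficient of size $\geq\eps/2$. The passage from ``large Fourier coefficient under random restriction'' to correlation of $f$ with $\chi_S\cdot g$ is not a regularity argument but an invocation of the restriction-inverse/direct-product theorem of Bhangale--Khot--Minzer (their Theorem 1.1), followed by a list-decoding step and an invariance-principle argument; only in that final step is the pairwise independent coordinate used, to factor the Gaussian surrogate expectation and remove $g$. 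Your idea for the ``moreover'' part (pairwise independence kills the mixed level-one interactions after passing to Gaussian space) is in the right spirit and matches the paper's Step 4, but it is built on top of the flawed main argument; as written, the proposal does not yield the theorem, and no amount of tuning of the $H$/$L$ split will, since the target character is global rather than junta-like.
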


We remark that Bhangale, Khot and Minzer only consider the case $k=4$, and only show $g\equiv 1$ in the case that all coordinates of $\nu$ are pairwise independent.
However, their proofs extend to the more general setting of Theorem~\ref{thm:bkm23}; we give an outline of this proof in Section~\ref{sec:bkm_sketch}.
Furthermore, we note that we are able to analyze the linearity test for a class of distributions which is much larger than the class of full even-weight support distributions; these distributions, in some sense, \emph{contain the BLR test}, and are formally defined in Section~\ref{sec:bkm_sketch}.

In the above work, the authors ask whether the conclusion $g\equiv 1$ can be obtained without assumption that $\nu$ has a pairwise independent coordinate.
We show this is not possible, and in fact the assumption that $\nu$ has a pairwise independent coordinate is necessary.

\begin{theorem}\label{thm:intro_main} (Restated and proved later as Theorem~\ref{thm:counter_eg})
	Let $k\in \N,\ p\in (0,1)$, and let $\nu \in \mc D(p,k)$ be a distribution having no pairwise independent coordinate (see Definition~\ref{defn:distr_class}).
	
	Then, there exists a constant $\alpha>0$, such that for every large enough $n\in \N$, there exists a function $f:\set{0,1}^n \to [-1,1]$ such that 
	\begin{enumerate}
		\item $\abs{\E_{X\sim \nu^{\otimes n}} \sqbrac{\prod_{i=1}^k f(X_i)} }\geq \alpha.$
		\item For every $S\subseteq [n]$, it holds that $ \abs{\E_{X\sim \mu_p^{\otimes n}}\sqbrac{f(X)\cdot \chi_S(X)}} \leq o_n(1)$.
	\end{enumerate}
	
	Moreover, if the distribution $\nu$ is such that $\eta:= \max_{i,j\in [k], i\not= j} \Pr_{X\sim \nu}\sqbrac{X_i=X_j} < 1$ (that is, no two coordinates are almost surely equal), the above holds for a function $f$ with range $\set{-1,1}$.
\end{theorem}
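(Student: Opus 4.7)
The plan is to realize $f$ as a function of a single real-valued aggregate statistic of $x$, so that the $k$-query test reduces, via the central limit theorem, to a $k$-variate Gaussian calculation in which the nontrivial pairwise correlations of $\nu$ generate a constant-size test signal while the linear correlations $\E_{\mu_p^{\otimes n}}[f\cdot\chi_S]$ are forced to $o_n(1)$ by a Berry--Esseen / local-CLT estimate.

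Concretely, set $Z_n(x) := (\sum_{j\in[n]} x_j - np)/\sqrt{np(1-p)}$ and $f(x) := h(Z_n(x))$ for a bounded function $h\colon\R\to[-1,1]$ to be chosen. Under $\mu_p^{\otimes n}$, $Z_n \Rightarrow Y\sim N(0,1)$; under $\nu^{\otimes n}$ the joint statistic $(Z_n(X_1),\dots,Z_n(X_k))$ converges in distribution to $(Y_1,\dots,Y_k)\sim N(0,\Sigma)$, where $\Sigma_{ii}=1$ and $\Sigma_{ii'}= \rho_{ii'} := (\E_\nu[X_iX_{i'}]-p^2)/(p(1-p))$ for $i\ne i'$. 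The hypothesis says exactly that every row of $\Sigma$ has some off-diagonal entry $\ne 0$; i.e.\ the support graph $G_\nu := \{\{i,j\}: \rho_{ij} \ne 0\}$ has minimum degree $\ge 1$.

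I would pick $h$ as a clipped, rescaled polynomial in Hermite polynomials, $h(y) = c_0 \cdot \mathrm{clip}_{[-1,1]}\bigl(\sum_{d\ge 1} a_d H_d(y)\bigr)$, with $\E_{N(0,1)}[h]=0$ automatic from $\E[H_d]=0$ for $d\ge 1$. Since $Z_n = O(1)$ with probability $1-o(1)$ under both measures, clipping perturbs the test value and the linear correlations by only $o(1)$. The test value converges to $\E_{N(0,\Sigma)}[\prod_i h(Y_i)]$, which by the Wick/Isserlis formula is a polynomial in $(a_d)$ and in the $\rho_{ij}$'s: for each tuple $(d_1,\dots,d_k)$ of Hermite degrees, the contribution is $(\prod_i a_{d_i}) \cdot P_{(d_i)}(\rho)$, where $P_{(d_i)}(\rho)$ is a positive combination of monomials $\prod_e \rho_e^{m_e}$ indexed by no-self-loop multigraphs on $[k]$ of vertex degrees $(d_i)$. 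The minimum-degree-one condition on $G_\nu$ exhibits at least one such multigraph supported entirely on $G_\nu$ (take $d_i := \deg_{G_\nu}(i)$, so that $G_\nu$ itself realizes the degree sequence), so $P_{(d_i)}(\rho) \ne 0$ for this tuple; treating $(a_d)$ as free parameters makes the total Gaussian expectation a non-trivial polynomial in $(a_d)$, hence non-zero for a generic choice, and the CLT then gives $|\E_{\nu^{\otimes n}}[\prod_i f(X_i)]|\ge \alpha$ for a constant $\alpha>0$. For the linear correlations, expanding $\chi_S = \sum_{T\subseteq S}\alpha_{S,T}\chi^{(p)}_T$ in the $p$-biased Fourier basis reduces the estimate to inner products of the smooth mean-zero statistic $h(Z_n)$ against $p$-biased characters; a Berry--Esseen / local-CLT bound gives each such coefficient $O(n^{-1/2})$ for fixed $|T|$, and geometric decay in $|T|$ suffices to yield $|\E_{\mu_p^{\otimes n}}[f\chi_S]| = o_n(1)$ uniformly in $S$.

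For the ``moreover'' case ($\eta<1$ and $f$ required to be $\{-1,1\}$-valued), I would replace $h$ by the threshold $\tilde h(y) := \sgn(h(y) - c_*)$, choosing $c_*$ so that $\E_{N(0,1)}[\tilde h]=0$; the Berry--Esseen bound applies verbatim. Since $\eta<1$ forces $|\rho_{ij}|<1$ for every pair, $\Sigma$ is non-degenerate, and an $L^2(N(0,\Sigma))$ approximation of $\tilde h$ by bounded continuous functions close to a rescaling of $h$ preserves the non-vanishing Gaussian expectation. The hardest step throughout is ruling out algebraic cancellations in the Wick expansion: for a given $\Sigma$, multiple multigraphs with the same vertex-degree sequence may contribute competing monomials in the $\rho_{ij}$'s that conspire to cancel; this is handled by keeping $(a_d)$ as free parameters until the Gaussian expectation is confirmed non-trivial as a polynomial in $(a_d)$, with the multigraph supported on $G_\nu$ itself serving as the witness for at least one surviving Wick monomial.
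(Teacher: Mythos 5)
Your overall architecture (an aggregate statistic $Z_n$, CLT reduction to a $k$-variate Gaussian with the normalized covariance $\Sigma$ of $\nu$, a Hermite-based choice of $h$, truncation/clipping) matches the paper's, but the step you yourself flag as hardest is genuinely missing, and the fix you propose does not work. For a fixed Hermite-degree tuple $(d_1,\dots,d_k)$, the Wick expansion of $\E_{\mc N(0,\Sigma)}\bigl[\prod_i H_{d_i}(Y_i)\bigr]$ is a sum over pairings, i.e.\ over multigraphs with that degree sequence, with positive integer counts; exhibiting one multigraph supported on $G_\nu$ (e.g.\ $G_\nu$ itself with $d_i=\deg_{G_\nu}(i)$) only shows the expression is a nonzero polynomial in the \emph{indeterminates} $\rho_e$. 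At the actual real values $\rho_e$ (which can be negative), distinct multigraph monomials can cancel, and keeping the $a_d$'s as free parameters does not separate them: the coefficient of $\prod_i a_{d_i}$ is the \emph{symmetrized} moment $\sum_{\pi}\E\bigl[\prod_i H_{d_{\pi(i)}}(Y_i)\bigr]$, so you must show that some symmetrized Hermite moment with all $d_i\geq 1$ is nonzero for the given $\Sigma$. This is exactly the paper's key technical lemma (Lemma~\ref{lemma:polynomial_all_var} via Lemma~\ref{lemma:hermite_exp}): one must show that for $q(t)=t^{\top}V t$ with $V=\Sigma-I$ having no zero row, some symmetrized power $\Sym(q^d)$ contains a monomial divisible by $t_1\cdots t_k$; the paper proves this by differentiating $q^d$ (Lemma~\ref{lemma:pow_derivatives}), evaluating at a generic point, and letting the $d^{2k}\prod_i(\partial_i q)^2$ term dominate for large even $d$ (Lemma~\ref{lemma:pow_sec_der}). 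Your proposal has no substitute for this argument, so the non-vanishing of the Gaussian expectation --- the heart of the theorem --- is not established.

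The ``moreover'' part also has a gap. Thresholding, $\tilde h=\sgn(h-c_*)$, produces a $\pm1$-valued function of $Z_n$ for which you have no argument that $\E_{\mc N(0,\Sigma)}\bigl[\prod_i \tilde h(Y_i)\bigr]\neq 0$: $\tilde h$ is not close to a rescaling of $h$ in any sense that would transfer the non-vanishing, so you would have to rerun the entire Hermite argument for sign functions. You also misidentify the role of $\eta<1$: it does not give nondegeneracy of $\Sigma$ (e.g.\ $p=\tfrac12$ with $X_j=1-X_i$ a.s.\ has $\eta<1$ but $\rho_{ij}=-1$), and nondegeneracy is not what is needed. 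The paper instead keeps the real-valued $f=h(Z_n)$ and rounds \emph{pointwise at random}: $g(x)=\pm1$ independently with mean $f(x)$. Then $\eta<1$ is used to show the $k$ (and $2k$, for the second moment) queried points are all distinct with probability $1-o_n(1)$, so the rounding randomness factorizes and Chebyshev gives concentration of the test value around $\E_X[F(X)]$, while Hoeffding plus a union bound over all $2^n$ sets $S$ (using $\max_x \mu_p^{\otimes n}(x)\leq q^n$) controls the linear correlations. Your Berry--Esseen treatment of $\E_{\mu_p^{\otimes n}}[f\chi_S]$ for the real-valued $f$ is plausible but under-specified for large $|S|$ when $p\neq\tfrac12$ (you must control the full sum over $T\subseteq S$ of biased-basis coefficients uniformly in $S$); the paper's cleaner route is to delete $\min\{\lfloor n^{1/4}\rfloor,|S|\}$ coordinates of $S$ from the statistic, use the Lipschitz bound to change $f$ by $o_n(1)$, and factor out $(1-2p)^{|T|}$.
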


\begin{remark}
\hfill
\begin{enumerate}
	\item The assumption $\eta < 1$ in the second part of the Theorem~\ref{thm:intro_main} is necessary.
	For example, if the $i$\textsuperscript{th} and $j$\textsuperscript{th} coordinates of $\nu$ are equal, then, for functions $f$ with range $\set{-1,1}$, the terms $f(X_i)$ and $f(X_j)$ cancel out in the product $\E_{X\sim \nu^{\otimes n}} \sqbrac{\prod_{i=1}^k f(X_i)}$.
	In particular, the test is equivalent to the $(k-2)$-query test with coordinates $i,j$ removed from $\nu$, and this new distribution may possibly satisfy the conditions of Theorem~\ref{thm:bkm23}.
	\item The function $f$ we construct in Theorem~\ref{thm:intro_main}  does not correlate well with any linear function, although, as possibly required by Theorem~\ref{thm:bkm23}, it does correlate well with some constant degree function.
	\item The above theorem, answers in the \emph{negative} a question of~\cite{BKM23b}, who ask if \[\abs{\E_{(X,Y,Z,W)\sim \nu^{\otimes n}}\sqbrac{g_1(X)\cdot g_2(Y)\cdot g_3(Z)\cdot g_4(W)}} = o_n(1)\] for distributions $\nu\in \mc D(p,4)$ with full even-weight support, and $g_1,\dots,g_4:\set{0,1}^n\to \R$ bounded, noise stable, and resilient functions.
	\item It is an easy check that the distribution $\nu$ from Example~\ref{eg:dfh19_test} cannot have a pairwise independent coordinate, unless $p=1/2$. This shows that for $p\not=\frac{1}{2}$, simple tests that work in the 99\% regime fail to work in the 1\% regime.
	\item Recall that every $\nu\in\mc D(p,k)$ satisfies $\sum_i X_i=0 \modt$ almost surely, for $X\sim \nu$. We never use this in the proof of the above theorem, and the conclusion holds without it.
\end{enumerate}
\end{remark}

Very roughly speaking, in the proof of the above theorem we first construct a counter-example function in Gaussian space which ``passes the test" with decent probability, while having zero expectation; this function is then converted to a boolean function using the Central Limit Theorem and a rounding procedure.
Along the way, we prove a simple characterization for a random vector to have an independent coordinate, which we believe to be of independent interest, and is stated as follows:

\begin{proposition} (Restated formally and proved later as Proposition~\ref{prop:gaussian_counter_eg})
	Let $X=(X_1,\dots,X_k)$ be a $k$-dimensional multivariate Gaussian random vector, such that for each $i\in [k]$, the marginal is $X_i\sim \mc N(0,1)$.
	Then, the following are equivalent:
	\begin{enumerate}
		\item For every ``nice" function $f:\R\to \R$ satisfying $\E_{Z\sim \mc N(0,1)}\sqbrac{f(Z)}=0$, it holds that $\E\sqbrac{f(X_1)\cdot f(X_2)\cdots f(X_k)}=0$.
		\item There exists $i\in [k]$ such that $X_i$ is independent of $(X_1,\dots,X_{i-1}, X_{i+1}, \dots, X_k)$.
	\end{enumerate}
\end{proposition}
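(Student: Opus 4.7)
I would split the proposition into its two implications. The direction $(2) \Rightarrow (1)$ is immediate by factoring: if $X_i$ is independent of $(X_j)_{j \neq i}$, then $\E[\prod_j f(X_j)] = \E[f(X_i)] \cdot \E[\prod_{j \neq i} f(X_j)] = 0$ using $\E[f(Z)] = 0$.

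For the contrapositive of $(1) \Rightarrow (2)$, suppose that no $X_i$ is independent of the remaining coordinates. Since the vector is jointly Gaussian with unit marginals, this is equivalent to saying that every row of the covariance matrix $\Sigma$ (with $\Sigma_{ii} = 1$) contains a nonzero off-diagonal entry; equivalently, the support graph $G := ([k], \{\{i,j\} : \Sigma_{ij} \neq 0\})$ has no isolated vertex. My plan is to exhibit an explicit mean-zero witness $f$ using the one-parameter exponential family $f_t(x) := e^{tx - t^2/2} - 1$, which has $\E[f_t(Z)] = 0$. A direct computation with the Gaussian moment generating function, followed by inclusion-exclusion, gives the two equivalent expressions
\[
H(t) := \E\!\left[\prod_{i=1}^{k} f_t(X_i)\right] = \sum_{S \subseteq [k]} (-1)^{k - |S|}\, e^{t^2 R(S)} = \sum_{\substack{T \subseteq E(G) \\ T \text{ edge cover of } [k]}} \prod_{e \in T} \bigl( e^{t^2 \Sigma_e} - 1\bigr),
\]
where $R(S) := \sum_{\{i,j\} \subseteq S} \Sigma_{ij}$. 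The second form is obtained by writing $e^{t^2 \Sigma_e} = 1 + (e^{t^2 \Sigma_e} - 1)$ and swapping sums; it makes it transparent that only edge covers of $[k]$ living inside $E(G)$ contribute, so that existence of such a cover (guaranteed by the no-isolated-vertex hypothesis) is a prerequisite. Once $H \not\equiv 0$ is established, any $t_0$ with $H(t_0) \neq 0$ yields the required $f := f_{t_0}$.

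The main obstacle is verifying $H \not\equiv 0$ under the no-isolated-vertex hypothesis on $G$. I would first reduce to the connected case, since the second expression for $H$ factorizes as $\prod_C H_C$ over connected components $C$ of $G$ (as $\Sigma_e = 0$ between components, only within-component covers survive), and then induct on $k$. For connected $G$ on $k \geq 2$ vertices, I would Taylor expand $H$ in $t$ around $0$: the coefficient of $t^{2n}$ equals $\sum_{\nu} \prod_{e \in E(G)} \Sigma_e^{\nu_e}/\nu_e!$ summed over weighted edge covers $\nu$ of $[k]$ of total weight $n$. At the minimum edge-cover size $n_0$, this specializes to $\sum_{T \text{ min.\ e.c.}} \prod_{e \in T} \Sigma_e$, which is a nontrivial polynomial in $\{\Sigma_e\}_{e \in E(G)}$; generically nonzero, it can only vanish on a measure-zero subvariety of covariance parameters, in which case I would examine higher-order coefficients and exploit the fact that the full edge-cover generating polynomial is a nonzero formal polynomial in the symbols $\{g_e\}$ (since $G$ admits an edge cover) to argue that not every Taylor coefficient can simultaneously vanish along the analytic curve $t \mapsto (e^{t^2 \Sigma_e} - 1)_{e \in E(G)}$. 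This combinatorial non-vanishing step is the crux of the proof, and the hardest part to make fully rigorous.
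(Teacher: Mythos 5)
Your easy direction and your two expressions for $H(t)=\E\sqbrac{\prod_i f_t(X_i)}$ are correct, and reducing to connected components is fine. But the proposal has a genuine gap exactly at its declared crux: you never prove $H\not\equiv 0$ under the no-isolated-vertex hypothesis, and the ingredients you point to do not yield it. The lowest-order Taylor coefficient $\sum_{T\ \mathrm{min.\ e.c.}}\prod_{e\in T}\Sigma_e$ can indeed vanish (the $\Sigma_e$ may be negative), and the observation that the edge-cover generating polynomial is nonzero as a \emph{formal} polynomial in the symbols $g_e$ does not survive the substitution $g_e=e^{t^2\Sigma_e}-1$: these functions satisfy many relations (coincidences $\Sigma_{e_1}=\Sigma_{e_2}$, additive relations $\Sigma_{e_1}+\Sigma_{e_2}=\Sigma_{e_3}$ which become multiplicative relations among the exponentials, sign cancellations), so a nonzero formal polynomial can a priori vanish identically along your analytic curve. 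Concretely, since the functions $t\mapsto e^{t^2 r}$ for distinct $r$ are linearly independent, $H\equiv 0$ if and only if for every level $r$ the signed count $\sum_{S:\,R(S)=r}(-1)^{|S|}$ vanishes; ruling this out for all covariance matrices with no zero row is precisely the combinatorial non-vanishing statement your argument needs and does not supply. It is also not even clear that your one-parameter family $f_t$ is a sufficient class of witnesses for every such $\Sigma$, so the approach might need a richer family, not just a more careful estimate. (A minor further point: $f_t$ is unbounded, so a truncation step is still needed to match the formal version, Proposition~\ref{prop:gaussian_counter_eg}, though that part is routine.)

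For comparison, the paper's proof is built to avoid exactly this obstacle. Instead of a single one-parameter witness, it tests against $f_{s,\alpha}(x)=\sum_j \alpha_j H_{s_j}(x)$, linear combinations of Hermite polynomials of distinct positive degrees, and extracts the coefficient of $\alpha_1\cdots\alpha_k$; by the Hermite moment formula (Lemma~\ref{lemma:hermite_exp}), vanishing for all $s,\alpha$ would force all coefficients of monomials $t_1^{s_1}\cdots t_k^{s_k}$ (with every $s_i\geq 1$) in $\Sym\brac{\brac{t^{\top}V\,t}^d}$ to vanish for every $d$. The real work is then Lemma~\ref{lemma:polynomial_all_var} (via Lemmas~\ref{lemma:pow_derivatives} and~\ref{lemma:pow_sec_der}), which shows by a derivative-plus-growth argument at a generic point that some such coefficient is nonzero whenever $V$ has no zero row. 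Your plan has no counterpart to this lemma, and without it the decisive step is missing.
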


Finally, to use the above theorems (Theorem~\ref{thm:bkm23} and Theorem~\ref{thm:intro_main}), we analyze the tradeoff between the number of queries $k$ and the bias $p$, such that a distribution $\nu\in \mc D(p,k)$ with some pairwise independent coordinate exists.
In particular, we prove the following (restated and proved later as Proposition~\ref{prop:query_bias_lb} and Proposition~\ref{prop:query_bias_ub}): 

\begin{proposition}\label{prop:intro_bias_query_tradeoff}
	Let $k\in \N,\ p\in (0,1)$.
	Then, there exists a distribution $\nu \in \mc D(p,k)$ with some pairwise independent coordinate if and only if  $k \geq 1 + \frac{1}{\min\set{p,1-p}}$.
\end{proposition}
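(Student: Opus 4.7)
The plan is to reduce both directions of the proposition to a conditional analysis at a pairwise-independent coordinate. Assume without loss of generality that coordinate $1$ is pairwise independent, and let $T := X_2 + \cdots + X_k$ for $X \sim \nu$. The pairwise-independence assumption $\E\sqbrac{X_1 X_j} = p^2$ for each $j \neq 1$, combined with $\E\sqbrac{X_1} = p$, yields $\E\sqbrac{T \mid X_1 = 1} = \E\sqbrac{T \mid X_1 = 0} = (k-1)p$; meanwhile, the support condition $X_1 + T \equiv 0 \modt$ forces $T$ to have parity opposite to $X_1$ almost surely.

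For the ``only if'' direction, I would extract two inequalities from these conditional expectations. On the event $\set{X_1 = 1}$, $T$ is a nonnegative odd integer, hence $T \geq 1$ almost surely, and so $(k-1)p \geq 1$, i.e.\ $k \geq 1 + 1/p$. For the matching inequality $k \geq 1 + 1/(1-p)$, I case-split on the parity of $k$: when $k$ is even, the even integer $T$ on $\set{X_1 = 0}$ is at most $k-2$; when $k$ is odd, the odd integer $T$ on $\set{X_1 = 1}$ is at most $k-2$. Either way $(k-1)p \leq k-2$, which rearranges to $k \geq 1 + 1/(1-p)$. Combining the two bounds yields $k \geq 1 + 1/\min\set{p, 1-p}$.

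For the converse, I would construct $\nu$ in two stages. First, sample $X_1 \sim \mu_p$; then, conditional on $X_1 = b$, pick a random subset $S_b \subseteq \set{2, \ldots, k}$ whose cardinality $\abs{S_b}$ is supported on integers of parity matching $b$ with $\E\sqbrac{\abs{S_b}} = (k-1)p$, and, given the cardinality, $S_b$ is chosen uniformly among subsets of that size; finally, set $X_j := \ind\sqbrac{j \in S_b}$ for $j \geq 2$. Such a size distribution can be realized by an explicit two-atom mixture supported on the minimum and maximum allowed parity-restricted values, and the hypothesis $k \geq 1 + 1/\min\set{p, 1-p}$ is precisely what places the target $(k-1)p$ inside the closed interval between these two atoms (for both parities), making the mixing weights nonnegative. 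Exchangeability of the uniform subset forces each conditional marginal to equal $(k-1)p/(k-1) = p$, which gives both $X_j \sim \mu_p$ and the pairwise-independence identity $\E\sqbrac{X_1 X_j} = p^2$; the parity relation $X_1 + \abs{S_b} \equiv 2b \equiv 0 \modt$ ensures $\sum_{i=1}^k X_i$ is always even. The only step requiring any care is the parity-of-$k$ case split used to extract the effective upper bound $(k-1)p \leq k-2$ from the crude $T \leq k-1$; beyond this bookkeeping, both directions reduce to routine calculations.
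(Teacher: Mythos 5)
Your proposal is correct. The ``only if'' direction is essentially the paper's argument (Proposition~\ref{prop:query_bias_lb}): condition at a pairwise independent coordinate, use $\E\sqbrac{X_1 T}=(k-1)p^2$ together with the parity constraint to get $T\geq 1$ on $\set{X_1=1}$, hence $(k-1)p\geq 1$, and a parity-plus-boundedness bound for the other inequality; the only cosmetic difference is that for even $k$ the paper flips all bits and reuses the lower bound with $p\mapsto 1-p$, while you bound $T\leq k-2$ directly on $\set{X_1=0}$. The ``if'' direction, however, is a genuinely different and simpler construction than the paper's. The paper (Proposition~\ref{prop:query_bias_ub}, via Lemmas~\ref{lemma:ub_case_analysis} and~\ref{lemma:ub_add_ind_copies}) writes down explicit probability vectors over Hamming-weight levels, with a case analysis in $k$ and $p$ and a composition trick (appending independent $\mu_p$ coordinates) for the middle range of $p$; this extra work buys permutation invariance, pairwise independence of \emph{all} coordinates, and (except at the endpoints $p\in\set{\frac{1}{k-1},1-\frac{1}{k-1}}$) full even-weight support, which the paper needs downstream to invoke Theorem~\ref{thm:bkm23_in_section}. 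Your construction --- sample $X_1\sim\mu_p$, choose a parity-constrained subset size from a two-atom distribution with mean $(k-1)p$ (feasible exactly when $\frac{1}{k-1}\leq p\leq 1-\frac{1}{k-1}$, i.e.\ $k\geq 1+\frac{1}{\min\set{p,1-p}}$), then take a uniform subset of that size --- is cleaner and verifies the one pairwise independent coordinate by exchangeability, which is all the proposition as stated requires; just be aware that it generally yields neither full even-weight support nor pairwise independence of the remaining coordinates, so it could not be substituted for the paper's construction in the proof of Theorem~\ref{thm:intro_querybias_main_thm} without further modification.
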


We note that the above generalizes the parameter setting for both the BLR test, corresponding to $p=\frac{1}{2},\ k=3$, and the case of $p\in \brac{\frac{1}{3},\frac{2}{3}},\ k=4$ considered in~\cite{BKM23b}.

\subsection{Related work}

The problem of linearity testing has been extensively studied, starting with the work of Blum, Luby and Rubinfeld~\cite{BLR93}, who gave a test for the uniform distribution, in the 99\% regime.
The analysis of their test was later extended to the 1\% regime~\cite{BCHKS96, KLX10}.
Tests for linearity have been also been studied in the low-randomness regime, and in the setting of non-abelian groups~\cite{BSVW03,BCLR08,SW06}.

For the $p$-biased case, in the 99\% regime, Halevy and Kushilevitz~\cite{HK07} gave a 3-query linearity test, that only uses random samples from the $p$-biased distribution!
However, the test is not tolerant, makes queries that are not distributed according to $\mu_p^{\otimes n}$, and hence may reject functions that are very close to linear (with respect to the $p$-biased measure).
Tolerant testers were analyzed later~\cite{KS09, DFH19}.
More strongly, the work of Dinur, Filmus and Harsha~\cite{DFH19} gives $2^d$-query tolerant tester for $p$-biased testing of degree $d$ functions over $\F_2$, a problem which has been well studied over the uniform distribution~\cite{AKKLR05, BKSSZ10}.

As a part of their work on approximability of satisfiable constraint satisfaction problems~\cite{BKM22, BKM23a, BKM23b, BKM24a, BKM24b}, Bhangale, Khot and Minzer study the $p$-biased version of linearity testing, in the 1\% regime.
As mentioned before, they give a 4-query test for $p\in \brac{\frac{1}{3},\frac{2}{3}}$. 

David, Dinur, Goldberg, Kindler and Shinkar~\cite{DDGKS17} study linearity testing on the $k$-slice (vectors of hamming-weight $k$), denoted by $L_{k,n}$, of the $n$-dimensional boolean hypercube, for even integers $k$. 
They show that if $f:\set{0,1}^n\to \set{-1,1}$ is such that $f(x\oplus y) = f(x)f(y)$ with probability $1-\epsilon$ over $x,y,x\oplus y$ (conditioned on all lying in $L_{k,n}$), then $f$ agrees with a linear function on $1-\delta$ fraction of $L_{k,n}$, where $\delta = \delta(\epsilon)\to 0$ as $\epsilon \to 0$.
In a recent work, Kalai, Lifshitz, Minzer and Ziegler~\cite{KLMZ24} prove a similar result for the $n/2$-slice, in the 1\% regime.

\subsection{Organization of the paper}

We start by presenting some preliminaries in Section~\ref{sec:prelims}.
In Section~\ref{sec:gaussian_variant}, we prove a variant of Theorem~\ref{thm:intro_main} over the Gaussian distribution, which then is used in Section~\ref{sec:lin_test_failure} to prove Theorem~\ref{thm:intro_main}.
In Section~\ref{sec:query_bias}, we analyze the tradeoff between the bias $p$ and the number of queries $k$, for the existence of a valid linearity test.
Combining all results, we prove Theorem~\ref{thm:intro_querybias_main_thm} in Section~\ref{sec:putting_together}.
In Section~\ref{sec:bkm_sketch}, we outline of the proof of Theorem~\ref{thm:bkm23}.

\section{Preliminaries}\label{sec:prelims}

We use $\exp$ to denote the exponential function, given by $\exp(x) = e^x$ for $x\in \R$.

Let $\N = \set{1,2,\dots}$ be the set of natural numbers. For each $n\in \N$, we use $[n]$ to denote the set $\set{1,2,\dots,n}$.
For non-negative functions $f,g:\N\to \R$, we say that $f(n) = o_n(g(n))$ if $\lim_{n\to \infty} \frac{f(n)}{g(n)}=0$.

For a probability distribution $\nu$ on $\mc X$, we use $\supp(\nu)$ to denote its support.
For $n\in \N$, we use $\nu^{\otimes n}$ to denote the $n$-fold product distribution on $\mc X^n$.
In particular, we shall be interested in the case when $\mc X \subseteq  \R^k$ for some $k\in \N$.
In this case, for vectors $x \in \R^{kn}$, we shall use subscripts for indices in $[k]$ and superscripts for indices in $[n]$; that is, for each $i\in [k], j\in [n]$, we use $x_i\up{j}$ to denote the $(i,j)^\textsuperscript{th}$ coordinate of $x$.
Further, for each $i\in [k]$, we use $x_i$ to denote the vector $\brac{x_i\up{1},\dots,x_i\up{n}}\in \R^n$, and similarly for each $j\in [n]$, we use $x\up{j}$ to denote the vector $\brac{x_1\up{j},\dots,x_k\up{j}}\in \R^k$.

For $k\in \N$, let $S_k$ denote the group of all permutations on $[k]$.
	For each $\pi\in S_k,\ x\in \R^k$, we use $x_\pi$ to denote $\brac{x_{\pi(1)},\dots,x_{\pi(k)}} \in \R^k$.
With this notation, we define the symmetrization of functions over $\R^k$:
\begin{definition}\label{defn:sym_op}
	For any function $f:\R^k\to \R$, we define its symmetrization as the function $\Sym(f):\R^k\to \R$, given by $\Sym(f)(x) = \sum_{\pi\in S_k}f(x_\pi)$.
\end{definition}

We shall use the following facts from probability theory:

\begin{fact}\label{fact:chebyshev}(Chebyshev's Inequality; see~\cite{Dur19} for reference)
	Let $X$ be a random variable such that $\E\sqbrac{X^2}<\infty$.
	Then, for any any $a>0$,
	\[\Pr\sqbrac{\abs{X-\E\sqbrac{X}} \geq a} \leq \frac{\Var\sqbrac{X}}{a^2}.\]
\end{fact}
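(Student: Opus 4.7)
The plan is to deduce Chebyshev's inequality from Markov's inequality, which is the standard textbook route; there is essentially no obstacle here beyond verifying that the hypothesis $\E[X^2]<\infty$ is enough for everything to make sense. First I would observe that $\E[X^2]<\infty$ implies (by Cauchy--Schwarz, since $|X|\leq 1+X^2$) that $\E[X]$ exists as a finite real number, call it $\mu$, and that $\Var[X] = \E\sqbrac{(X-\mu)^2}$ is finite and non-negative. Setting $Y = (X-\mu)^2$, note that $Y\geq 0$ and that the events $\set{\abs{X-\mu}\geq a}$ and $\set{Y\geq a^2}$ are identical since $a>0$; hence
\[
\Pr\sqbrac{\abs{X-\mu}\geq a} \;=\; \Pr\sqbrac{Y\geq a^2}.
\]

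It therefore suffices to establish the Markov-type bound $\Pr\sqbrac{Y\geq a^2}\leq \E\sqbrac{Y}/a^2$ for the non-negative random variable $Y$. For this I would use the pointwise inequality
\[
a^2\cdot \ind\sqbrac{Y\geq a^2} \;\leq\; Y,
\]
valid on the entire sample space (when $Y<a^2$ the left side is $0\leq Y$, and when $Y\geq a^2$ the left side equals $a^2\leq Y$). Taking expectations of both sides and using monotonicity of expectation yields $a^2\Pr\sqbrac{Y\geq a^2}\leq \E\sqbrac{Y}$, and then dividing through by $a^2>0$ gives Markov's inequality for $Y$.

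Putting the two steps together,
\[
\Pr\sqbrac{\abs{X-\mu}\geq a} \;=\; \Pr\sqbrac{Y\geq a^2} \;\leq\; \frac{\E\sqbrac{Y}}{a^2} \;=\; \frac{\Var\sqbrac{X}}{a^2},
\]
which is exactly the claimed bound. The main ``obstacle,'' if any, is purely a measure-theoretic bookkeeping point: one must justify that $\mu$ and $\Var[X]$ are well defined and finite under the single hypothesis $\E[X^2]<\infty$, which is immediate from $|X|\leq \tfrac{1}{2}(1+X^2)$ and linearity of expectation. No probabilistic structure beyond non-negativity and linearity of expectation is used, so the argument is entirely self-contained and matches the treatment in any standard reference such as~\cite{Dur19}.
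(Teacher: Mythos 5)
Your proof is correct and is the standard argument: apply Markov's inequality to the non-negative variable $(X-\E[X])^2$, which is exactly the treatment in the cited reference (the paper itself states this as a fact with a citation and gives no proof). The only cosmetic remark is that the pointwise bound $|X|\leq \tfrac{1}{2}(1+X^2)$ is an AM--GM-type estimate rather than Cauchy--Schwarz, but either route suffices to justify finiteness of $\E[X]$ and $\Var[X]$.
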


\begin{fact}\label{fact:hoeffding}(Hoeffding's Inequality~\cite{Hoe63})
	Let $X_1,\dots, X_n$ be independent random variables such that $a_i\leq X_i \leq b_i$ almost surely, and let $S = \sum_{i=1}^n X_i$.
	Then, for all $t>0$,
	\[\Pr\sqbrac{\abs{S - \E\sqbrac{S}} \geq  t} \leq 2\cdot \exp\brac{-\frac{2t^2}{\sum_{i=1}^n \brac{b_i-a_i}^2}}. \]
\end{fact}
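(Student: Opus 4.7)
The plan is to prove Hoeffding's inequality by the standard Chernoff-style exponential moment method, which reduces everything to controlling $\mathbb{E}[e^{\lambda Y}]$ for a bounded mean-zero random variable $Y$. First I would center the variables: set $Y_i = X_i - \mathbb{E}[X_i]$, so that $\mathbb{E}[Y_i] = 0$ and $a_i - \mathbb{E}[X_i] \leq Y_i \leq b_i - \mathbb{E}[X_i]$, with the length of the enclosing interval still $b_i - a_i$. Then for any $\lambda > 0$, Markov's inequality applied to the positive random variable $\exp(\lambda \sum_i Y_i)$ yields
\[
\Pr\!\sqbrac{S - \E[S] \geq t} \;\leq\; e^{-\lambda t}\, \E\!\sqbrac{\exp\brac{\lambda \textstyle\sum_{i=1}^n Y_i}} \;=\; e^{-\lambda t} \prod_{i=1}^n \E\!\sqbrac{e^{\lambda Y_i}},
\]
using independence in the last step.

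The key analytic input is Hoeffding's lemma: for any mean-zero random variable $Y$ supported in $[c,d]$, one has $\E[e^{\lambda Y}] \leq \exp\!\brac{\lambda^2(d-c)^2/8}$. I would derive this by convexity of $e^{\lambda y}$, which gives the pointwise bound $e^{\lambda y} \leq \frac{d-y}{d-c}e^{\lambda c} + \frac{y-c}{d-c}e^{\lambda d}$ on $[c,d]$; taking expectation and using $\E[Y]=0$ produces an explicit function of $\lambda$, and standard calculus (differentiating its logarithm twice and bounding the second derivative by $(d-c)^2/4$) yields the Gaussian-type bound. Applying this with $(c,d) = (a_i - \E[X_i], b_i - \E[X_i])$ to each factor gives
\[
\Pr\!\sqbrac{S - \E[S] \geq t} \;\leq\; \exp\!\brac{-\lambda t + \tfrac{\lambda^2}{8}\textstyle\sum_{i=1}^n (b_i-a_i)^2}.
\]

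Finally I would optimize over $\lambda$: the right-hand side is minimized by $\lambda = 4t/\sum_i (b_i-a_i)^2$, which substitutes to give the one-sided bound $\exp\!\brac{-2t^2/\sum_i (b_i-a_i)^2}$. The lower tail $\Pr[S - \E[S] \leq -t]$ is handled identically by applying the same argument to $-X_i$ (whose enclosing interval has the same length $b_i - a_i$), and a union bound over the two tails produces the factor of $2$ in the statement. The main technical obstacle is really just the proof of Hoeffding's lemma itself, since that is where the constant $1/8$ (and hence the final constant $2$ in the exponent) arises; everything else is the routine Chernoff recipe.
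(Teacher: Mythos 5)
Your proof is correct and is exactly the standard argument (centering, Chernoff's exponential-moment bound with independence, Hoeffding's lemma giving the $\lambda^2(b_i-a_i)^2/8$ factor, optimizing $\lambda = 4t/\sum_i(b_i-a_i)^2$, and a union bound over the two tails), which is the proof in the cited reference; the paper itself states this as a known fact without proof.
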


\begin{theorem}\label{thm:multi_clt} (Multivariate Central Limit Theorem; see~\cite{Dur19} for reference)
Let $X\up{1},X\up{2},\dots$ be $\R^k$-valued i.i.d. random vectors, with mean zero, and finite a covariance matrix $\Sigma \in \R^{k\times k}$ given by $\Sigma_{i,j} = \E\sqbrac{X\up{1}_i\cdot X\up{1}_j}$.
If $S_n = \frac{1}{\sqrt n} \sum_{i=1}^n X\up{i}$, then, $S_n \xrightarrow{\mc D} Z$ as $n\to \infty$, for $Z\sim  \mc N(0, \Sigma)$.
That is, for every bounded continuous function $H:\R^k\to \R$, \[\lim_{n\to \infty} \E\sqbrac{H(S_n)} = \E_{Z\sim \mc N(0,\Sigma)}\sqbrac{H(Z)}.\]
	
\end{theorem}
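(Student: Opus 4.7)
The result is the classical multivariate central limit theorem, so my plan is the standard characteristic-function argument combined with L\'evy's continuity theorem. I would first define $\phi_{S_n}(\theta) := \E\sqbrac{\exp(i\ip{\theta, S_n})}$ for $\theta \in \R^k$; by independence of the summands, this factors as $\phi_{S_n}(\theta) = \phi_{X\up{1}}(\theta/\sqrt n)^n$, where $\phi_{X\up{1}}$ is the characteristic function of a single sample. A second-order Taylor expansion of $\phi_{X\up{1}}$ at the origin, valid because $X\up{1}$ has mean zero and finite covariance matrix $\Sigma$, gives $\phi_{X\up{1}}(\xi) = 1 - \tfrac12 \xi^T \Sigma \xi + o\brac{\abs{\xi}^2}$ as $\abs{\xi}\to 0$. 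Substituting $\xi = \theta/\sqrt n$ and taking $n$-th powers yields $\phi_{S_n}(\theta) \to \exp\brac{-\tfrac12 \theta^T \Sigma \theta}$, which is exactly the characteristic function of $\mc N(0,\Sigma)$. L\'evy's continuity theorem then upgrades this pointwise convergence to convergence in distribution $S_n \xrightarrow{\mc D} Z$ for $Z\sim \mc N(0,\Sigma)$, and the Portmanteau theorem restates weak convergence as $\E\sqbrac{H(S_n)} \to \E_{Z\sim \mc N(0,\Sigma)}\sqbrac{H(Z)}$ for every bounded continuous $H:\R^k \to \R$, which is exactly the form asserted.

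An equivalent route would be the Cram\'er-Wold device: for each $\theta \in \R^k$, $\ip{\theta, S_n} = \frac{1}{\sqrt n}\sum_{i=1}^n \ip{\theta, X\up{i}}$ is an average of i.i.d. scalar random variables with mean zero and variance $\theta^T \Sigma \theta$, so the one-dimensional CLT yields $\ip{\theta, S_n}\xrightarrow{\mc D} \mc N(0,\theta^T\Sigma\theta)$, which is exactly the distribution of $\ip{\theta, Z}$; Cram\'er-Wold then lifts this coordinatewise convergence to convergence of the full vectors $S_n$. This approach has the pedagogical advantage of using the univariate CLT as a black box and localizing the multivariate work inside Cram\'er-Wold itself.

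The main obstacle --- indeed, essentially the only substantive piece of work --- is justifying the Taylor expansion of $\phi_{X\up{1}}$ under only a finite second-moment hypothesis. This uses the pointwise bound $\abs{e^{it} - 1 - it + t^2/2} \leq \min\set{\abs{t}^3/6, t^2}$ combined with the dominated convergence theorem applied to $\abs{\ip{\xi, X\up{1}}}^2$, which is integrable since $\Sigma$ has finite entries; one also needs $(1 + z_n/n)^n \to e^z$ whenever $z_n \to z$ in $\C$, which follows from $\log(1+x) = x + O(x^2)$. Beyond that, the theorem is entirely classical and is documented in any standard graduate probability reference such as~\cite{Dur19}, as the statement itself acknowledges.
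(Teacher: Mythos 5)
Your proof outline is correct: the characteristic-function argument with L\'evy continuity (or equivalently Cram\'er--Wold plus the univariate CLT, which is the route taken in~\cite{Dur19}) is the standard proof of this classical theorem, and your treatment of the second-order Taylor expansion under a finite-second-moment hypothesis addresses the only delicate point. The paper itself offers no proof --- it simply cites the textbook --- so your argument matches the intended (standard) approach and nothing further is needed.
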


We shall also use the following fact about zeros of polynomials:
\begin{lemma}\label{lemma:dim_var}
	Let $p_1,\dots,p_r:\R^k\to \R$ be non-zero polynomials.
	Then, there exists $y\in \R^k$ such that for each permutation $\pi\in S_k$, and each $i\in [r]$, it holds that $p_i(y_\pi) \not= 0$.
\end{lemma}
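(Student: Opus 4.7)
The plan is to reduce the lemma to the standard fact that a non-zero polynomial on $\R^k$ has a zero set of Lebesgue measure zero (hence a non-empty complement). The key trick is to combine all the conditions into a single polynomial and exploit that the polynomial ring $\R[y_1,\dots,y_k]$ is an integral domain.

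First, for each $\pi \in S_k$ and $i \in [r]$, I would define $q_{\pi,i}(y) := p_i(y_\pi)$. Since the substitution $y \mapsto y_\pi$ is an invertible linear change of variables (just a permutation of coordinates), the map $p_i \mapsto q_{\pi,i}$ is a ring automorphism of $\R[y_1,\dots,y_k]$. In particular, $q_{\pi,i}$ is non-zero whenever $p_i$ is, giving us a finite collection of $r\cdot k!$ non-zero polynomials whose simultaneous non-vanishing at a point $y$ is exactly the conclusion we seek.

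Next, I would form the single product polynomial
\[ P(y) \;:=\; \prod_{\pi \in S_k}\prod_{i\in[r]} q_{\pi,i}(y). \]
Because $\R[y_1,\dots,y_k]$ is an integral domain and each factor is non-zero, $P$ itself is a non-zero polynomial. Finding a $y$ at which every $p_i(y_\pi)$ is non-zero is then equivalent to finding a $y$ with $P(y) \neq 0$.

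Finally, I would invoke the standard fact (provable by a quick induction on $k$, or by observing that the zero set of a non-zero polynomial in $\R^k$ is a proper algebraic subvariety of real dimension at most $k-1$) that the zero set of a non-zero polynomial in $\R^k$ has Lebesgue measure zero, so its complement is non-empty (in fact dense). Any $y$ in this complement witnesses the lemma. There is no real obstacle here; the only content is recognizing that permuting the arguments cannot turn a non-zero polynomial into the zero polynomial, which makes the product trick go through.
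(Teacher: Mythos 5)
Your proof is correct and follows essentially the same route as the paper: both arguments rest on the fact that a non-zero polynomial on $\R^k$ has a Lebesgue-null zero set, together with the observation that permuting coordinates preserves non-vanishing. The only cosmetic difference is that you multiply the finitely many polynomials $p_i(y_\pi)$ into one (using that $\R[y_1,\dots,y_k]$ is an integral domain), whereas the paper takes the finite union of their null zero sets via sub-additivity; the two steps are interchangeable.
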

\begin{proof}[Proof Sketch]
	The zero-set of any non-zero polynomial has measure zero, with respect to the Lebesgue measure on $\R^k$.
	Hence, by sub-additivity, the set of points $y\in \R^k$ violating the statement of the lemma is of measure zero as well.
\end{proof}

Next, we give some basic results about the probabilist's Hermite polynomials. The reader is referred to Chapter 11 in~\cite{Don14} for more details. 

\begin{definition}\label{defn:hermite_poly}
	The Hermite polynomials $(H_j)_{j\in \Z_{\geq 0}}$ are univariate polynomials, with $H_j$ a monic polynomial of degree $j$, satisfying the power series expression
	\[ \exp\brac{tx-\frac{t^2}{2}} = \sum_{j=0}^\infty \frac{1}{j!}\cdot H_j(x)\cdot t^j, \quad \text{for } t,x\in \R. \]
	Note that the series above is absolutely convergent, with $  \sum_{j=0}^\infty \frac{1}{j!}\cdot \abs{H_j(x)}\cdot \abs{t}^j \leq  \exp\brac{\abs{t}\cdot \abs{x}+\frac{t^2}{2}}$ for each $t,x\in \R$.
\end{definition}

\begin{lemma}\label{lemma:hermite_exp}
	Let $k\in \N$ and $s_1,s_2,\dots,s_k \in \Z_{\geq 0}$, and let $\Sigma\in \R^{k\times k}$ be a positive semi-definite matrix such that $\Sigma_{i,i}=1$ for each $i$.	
	For $V=\Sigma-I$, it holds that \[\E_{X\sim \mc N(0,\Sigma)}\sqbrac{H_{s_1}(X_1)\cdots H_{s_k}(X_k)} = \frac{s_1!\cdots s_k!}{d!\cdot 2^d}\cdot \sqbrac{\brac{t^{\top}V\ t}^d : t_1^{s_1}\cdots t_k^{s_k}} \]
	where $s_1+\dots+s_k = 2d$, and $\sqbrac{\brac{t^{\top}V\ t}^d: t_1^{s_1}\cdots t_k^{s_k}}$ denotes the coefficient of $t_1^{s_1}\cdots t_k^{s_k}$ in the polynomial $\brac{t^{\top}V\ t}^d$.
	Also, the above expectation is zero when $s_1+\dots+s_k$ is odd.
\end{lemma}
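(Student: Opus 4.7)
The plan is to derive the identity by working with generating functions on both sides and matching coefficients of the monomial $t_1^{s_1}\cdots t_k^{s_k}$. Concretely, I would start from the defining power series in Definition~\ref{defn:hermite_poly}, and take the product over the $k$ coordinates: for $t=(t_1,\dots,t_k)\in \R^k$ and $X=(X_1,\dots,X_k)$,
\[
\prod_{i=1}^k \exp\!\brac{t_i X_i - \tfrac{t_i^2}{2}} \;=\; \prod_{i=1}^k \sum_{s_i=0}^\infty \frac{1}{s_i!}\,H_{s_i}(X_i)\,t_i^{s_i}.
\]
Expanding the product of series on the right and taking $\E_{X\sim \mc N(0,\Sigma)}$ on both sides then yields the ``Hermite generating function'' for $\E[H_{s_1}(X_1)\cdots H_{s_k}(X_k)]$, provided we can swap expectation and the (multi-)sum.

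Next, I would compute the left-hand side directly. Since $\sum_i t_iX_i = t^\top X$ and the moment generating function of $X\sim \mc N(0,\Sigma)$ is $\E[\exp(t^\top X)]=\exp(\tfrac{1}{2}t^\top \Sigma\, t)$, using $\Sigma_{i,i}=1$ gives
\[
\E\!\sqbrac{\prod_{i=1}^k \exp\!\brac{t_i X_i - \tfrac{t_i^2}{2}}} \;=\; \exp\!\brac{-\tfrac{1}{2}\textstyle\sum_i t_i^2 + \tfrac{1}{2}t^\top \Sigma\, t} \;=\; \exp\!\brac{\tfrac{1}{2}\,t^\top V\, t}.
\]
Expanding the exponential as $\sum_{d\geq 0}\tfrac{1}{d!\,2^d}(t^\top V\, t)^d$ and equating coefficients of $t_1^{s_1}\cdots t_k^{s_k}$ on both sides gives
\[
\frac{1}{s_1!\cdots s_k!}\,\E\sqbrac{H_{s_1}(X_1)\cdots H_{s_k}(X_k)} \;=\; \frac{1}{d!\cdot 2^d}\sqbrac{(t^\top V\, t)^d : t_1^{s_1}\cdots t_k^{s_k}},
\]
which rearranges to the claimed formula. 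Since $(t^\top V\, t)^d$ is homogeneous of degree $2d$ in $t$, the coefficient vanishes unless $s_1+\dots+s_k = 2d$; in particular it vanishes whenever $s_1+\dots+s_k$ is odd, proving the last sentence of the lemma.

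The main technical obstacle is justifying the interchange of expectation and the infinite sums: I would appeal to Fubini/dominated convergence, using the bound from Definition~\ref{defn:hermite_poly} that
\[
\prod_{i=1}^k \sum_{s_i=0}^\infty \frac{1}{s_i!}\,\abs{H_{s_i}(X_i)}\,\abs{t_i}^{s_i} \;\leq\; \exp\!\brac{\textstyle\sum_i \abs{t_i}\cdot\abs{X_i} + \tfrac{1}{2}\sum_i t_i^2},
\]
whose expectation is finite for every $t\in \R^k$ because the Gaussian $X$ has finite exponential moments (its MGF exists everywhere). This justifies both the expectation/sum swap and the identification of power series coefficients in a neighborhood of $0$. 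Everything else is a routine comparison of coefficients of formal power series, which is legitimate since both sides are analytic in $t$.
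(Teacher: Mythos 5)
Your proposal is correct and follows essentially the same route as the paper's proof: both compute $\E\sqbrac{\exp\brac{\sum_i (t_iX_i - t_i^2/2)}}$ via the Gaussian moment generating function to get $\exp\brac{\tfrac{1}{2}t^\top V t}$, expand the other side using the Hermite generating function with a dominated-convergence justification from the absolute-convergence bound in Definition~\ref{defn:hermite_poly}, and compare coefficients (with the odd case following from homogeneity of $(t^\top V t)^d$).
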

\begin{proof}
	Recall that the moment generating function of a multivariate Gaussian distribution is given by
	\[\E_{X\sim \mc N(0,\Sigma)}\sqbrac{\exp\brac{t_1X_1+\dots t_kX_k}} = \exp\brac{\frac{1}{2}\cdot t^{\top}\Sigma\ t},\]
	for each $t\in \R^k$.
	Multiplying the above by $\exp(-\frac{1}{2}\cdot t^\top t)$, and plugging in the power series in Definition~\ref{defn:hermite_poly}, we get for each $t\in \R^k$ that
	\begin{align*}
		\sum_{d=0}^{\infty} \frac{1}{d!\cdot 2^d}\cdot \brac{t^{\top}V\ t}^d
		&= \exp\brac{\frac{1}{2}\cdot t^{\top}V\ t} 
		\\&= \E_{X\sim \mc N(0,\Sigma)}\sqbrac{ \exp\brac{\brac{t_1X_1-\frac{t_1^2}{2}}+\dots +\brac{t_kX_k -\frac{t_k^2}{2}} } }
		\\&= \E_{X\sim \mc N(0, \Sigma)}\sqbrac{ \brac{\sum_{s_1=0}^\infty \frac{1}{s_1!}\cdot H_{s_1}(X_1)\cdot t_1^{s_1} }\cdots \brac{\sum_{s_k=0}^\infty \frac{1}{s_k!}\cdot H_{s_k}(X_k)\cdot t_k^{s_k}}}
		\\&= \sum_{s_1,\dots,s_k\geq 0}\frac{t_1^{s_1}\cdots t_k^{s_k}}{s_1!\cdots s_k!}\cdot \E_{X\sim \mc N(0, \Sigma)}\sqbrac{H_{s_1}(X_1)\cdots H_{s_k}(X_k)}.
	\end{align*}
	Note that since the power series in Definition~\ref{defn:hermite_poly} is absolutely convergent, all steps above of interchanging limits and expectations are valid by the dominated convergence theorem. 
	Finally, comparing coefficients, we have the desired result.
\end{proof}

\section{A Gaussian Variant}\label{sec:gaussian_variant}

The first step towards proving Theorem~\ref{thm:intro_main} is to prove a Gaussian variant, stated below:

\begin{proposition}\label{prop:gaussian_counter_eg}	
	Let $k\in \N$, and let $\Sigma\in \R^{k\times k}$ be a symmetric positive semi-definite matrix such that:
	\begin{enumerate}
		\item For each $i\in [k]$, it holds that $\Sigma_{i,i} = 1$.
		\item The matrix $V = \Sigma - I$ has no row/column as all zeros.
	\end{enumerate}
	
	Then, there exists a Lipschitz continuous function $f:\R\to [-1,1]$ such that:
	\[\E_{X\sim \mc N(0,1)} \sqbrac{f(X)} = 0, \quad\text{and}\quad \abs{\E_{X\sim \mc N(0, \Sigma)}\sqbrac{\prod_{i\in [k]} f(X_i)}} > 0 .\]
\end{proposition}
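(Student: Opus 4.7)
My approach is to find a witness $f$ within the one-parameter exponential family $p_y(x) := e^{yx - y^2/2} - 1$ (which automatically satisfies $\E_{Z\sim \mc N(0,1)}[p_y(Z)] = 0$), and then regularize it to a Lipschitz bounded function. A Gaussian MGF calculation combined with inclusion-exclusion yields
\[
\tilde G(y) \;:=\; \E_{X\sim \mc N(0,\Sigma)}\!\left[\prod_{i=1}^k p_y(X_i)\right] \;=\; \sum_{T \subseteq [k]}(-1)^{k-|T|} \exp\!\left(\tfrac{y^2}{2}\, q_T\right),
\]
where $q_T := \mathbf{1}_T^\top V\, \mathbf{1}_T = 2\sum_{i<j,\, i,j\in T} V_{ij}$ (since $V_{ii} = 0$). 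If $\tilde G(y) \not\equiv 0$ as a function of $y \in \R$, then for some $y^\ast$, $\E[\prod_i p_{y^\ast}(X_i)] = \tilde G(y^\ast) \neq 0$. To convert the unbounded exponential $p_{y^\ast}$ into the required Lipschitz $f: \R \to [-1, 1]$, I truncate outside a large interval $[-R, R]$, convolve with a narrow mollifier to restore Lipschitz continuity, subtract the induced mean perturbation to enforce $\E[f] = 0$, and rescale by a constant; for $R$ large enough, $L^2$-continuity of the $k$-linear form $(g_1, \ldots, g_k) \mapsto \E[\prod_i g_i(X_i)]$ on uniformly bounded families preserves non-vanishing.

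The main step is thus to show $\tilde G \not\equiv 0$ under the hypothesis that no row of $V$ is zero. I plan to proceed by induction on $k$. The base case $k = 2$ is immediate: $\tilde G(y) = \exp(V_{12} y^2) - 1$ is nonzero iff $V_{12} \neq 0$, equivalently iff neither row of $V$ is zero. For the inductive step, I split subsets $T \subseteq [k]$ by whether they contain the $k$-th coordinate. Using $V_{kk} = 0$, a short manipulation yields (for general $t \in \R^k$ to which $\tilde G_V$ extends)
\[
\tilde G_V(t) \;=\; \sum_{T' \subseteq [k-1]} (-1)^{(k-1) - |T'|} \exp\!\left(\tfrac{1}{2} t_{T'}^\top V_{T'T'} t_{T'}\right) \bigl(e^{t_k\, \ell_{T'}(t)} - 1\bigr),
\]
where $\ell_{T'}(t) = \sum_{j \in T'} V_{jk} t_j$. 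Extracting the linear-in-$t_k$ coefficient reduces the non-vanishing of $\tilde G_V$ to that of $H(t_{<k}) = \sum_{j \in [k-1]} V_{jk}\, t_j\, A_j(t_{<k})$, where each $A_j$ is a generating-function-like expression built from the submatrix $V^{(k)}$. Since row $k$ of $V$ is not zero, at least one $V_{jk}$ is nonzero, providing the hook into the inductive hypothesis.

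The main obstacle lies in this inductive step: the reduction does not cleanly produce $\tilde G_{V^{(k)}}$, but rather a ``twisted'' combination of generating functions weighted by the entries of row $k$. Handling potential cancellation among these weighted pieces will require either strengthening the inductive hypothesis to accommodate such twists, or separately analyzing the exceptional case where every $V^{(i)}$ (for $i \in [k]$) has a zero row---in which case $V$ is combinatorially very restricted (essentially a disjoint union of stars) and can be verified directly. A fallback genericity argument is also available: for generic $V$ the values $q_T$ are pairwise distinct, making the exponentials $\exp(\tfrac{y^2}{2}\, q_T)$ linearly independent, so $\tilde G \not\equiv 0$; the degenerate loci where some $q_T$'s coincide can then be handled via a continuity or deformation argument together with direct verification at boundary configurations.
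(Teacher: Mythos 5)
Your reduction is set up correctly: with $p_y(x)=e^{yx-y^2/2}-1$ one indeed gets, by the Gaussian MGF and inclusion--exclusion, $\tilde G(y)=\sum_{T\subseteq[k]}(-1)^{k-|T|}\exp\brac{\tfrac{y^2}{2}\,\mathbf{1}_T^{\top}V\,\mathbf{1}_T}$, and the truncation/mollification step at the end is routine (the paper does an analogous truncation). But the heart of the matter --- that $\tilde G\not\equiv 0$ whenever $V$ has no zero row --- is exactly what you have not proved, and you say so yourself: the induction produces a ``twisted'' combination weighted by the entries of row $k$, and you have no control over cancellations there. Note what the claim amounts to: grouping terms and using linear independence of the functions $y\mapsto e^{cy^2/2}$, $\tilde G\equiv 0$ if and only if $\sum_{T:\ \mathbf{1}_T^{\top}V\mathbf{1}_T=c}(-1)^{|T|}=0$ for every $c\neq 0$. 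This is a genuinely combinatorial non-vanishing statement about the weighted graph $V$ (true for $k\le 4$ by case analysis, but not obviously true in general), and it is the entire content of the proposition in your approach. The fallback you offer does not repair it: the proposition must hold for the \emph{given} $\Sigma$ (in the application $\Sigma$ is the covariance matrix of a fixed distribution $\nu$), so genericity in $V$ is irrelevant, and ``$\tilde G\not\equiv 0$'' is an open condition, not a closed one --- a continuity/deformation argument from generic $V$ fails precisely at the degenerate matrices where the $q_T$ collide, which are the only problematic ones.

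The source of the difficulty is that tying all coordinates to a single parameter $y$ effectively evaluates the relevant polynomial only at $0/1$ indicator vectors, where sign cancellations can occur. The paper avoids this by working with $f_{s,\alpha}=\sum_j \alpha_j H_{s_j}$ with $k$ \emph{independent} coefficients $\alpha_1,\dots,\alpha_k$ and possibly distinct Hermite degrees $s_1,\dots,s_k$: extracting the coefficient of $\alpha_1\cdots\alpha_k$ isolates a single symmetrized Hermite moment, which by Lemma~\ref{lemma:hermite_exp} is a symmetrized coefficient of $(t^{\top}Vt)^d$, and Lemma~\ref{lemma:polynomial_all_var} (proved by taking $\partial_1^2\cdots\partial_k^2$ of $q^d$ and evaluating at a generic point supplied by Lemma~\ref{lemma:dim_var}, with $d$ large and even so that all dominant terms are positive) shows some such coefficient is nonzero. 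That evaluation at a generic real point, rather than at $\mathbf{1}_T$, is exactly the freedom your one-parameter family gives up. As it stands, your argument is a correct reduction to an unproved (and nontrivial) combinatorial lemma, so the proof is incomplete; either prove that lemma in full generality or switch to a mechanism, like the paper's multilinearization in $\alpha$, that lets you isolate individual coefficients.
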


\subsection{Symmetric Powers of Polynomials}

Before we prove the above proposition, we first prove a lemma about (symmetrization of) powers of multivariate polynomials.
We show that if a polynomial $q(x_1,\dots,x_k)$ depends on all the variables $x_1,\dots,x_k$, then some power $\Sym(q^d)$ (see Definition~\ref{defn:sym_op}) contains a monomial divisible by $x_1x_2\cdots x_k$.

\begin{lemma}\label{lemma:polynomial_all_var}
	Let $k\in \N$, and let $q:\R^k\to \R$ be a polynomial such that for each $i\in [k]$, the polynomial $\ell_i=\partial_iq$ is not identically zero.
	Then, there exists some $d\in \N$, and positive integers $s_1,\dots,s_k \in \N$ such that the coefficient of $x_1^{s_1}\cdot x_2^{s_2}\cdots x_k^{s_k}$ in the polynomial $\Sym(q^d)$ is non-zero.
\end{lemma}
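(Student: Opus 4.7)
The plan is to reformulate the statement in terms of a mixed derivative and then attack it with a generating-function argument. Observe that $\Sym(q^d)$ has a monomial $x_1^{s_1}\cdots x_k^{s_k}$ with all $s_i\geq 1$ and non-zero coefficient if and only if the mixed partial $\partial_1\partial_2\cdots\partial_k \Sym(q^d)$ is not identically zero as a polynomial in $x$, since this operator annihilates every monomial missing some variable and only scales the remaining ones by a positive combinatorial factor. So it suffices to exhibit some $d$ and a point $y\in\R^k$ with $\partial_1\cdots\partial_k \Sym(q^d)(y)\neq 0$.

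First I would apply Lemma~\ref{lemma:dim_var} to the finite collection of non-zero polynomials $\{\ell_i : i\in[k]\} \cup \{q(x_\pi) - q(x_{\pi'}) : \pi,\pi'\in S_k,\ q(x_\pi)\not\equiv q(x_{\pi'})\}$ to obtain a single $y\in\R^k$ such that (a) $\ell_i(y_\pi)\neq 0$ for every $i\in[k]$ and $\pi\in S_k$, and (b) $q(y_{\pi_1})\neq q(y_{\pi_2})$ whenever $q(x_{\pi_1}) \not\equiv q(x_{\pi_2})$ as polynomials. I would then bundle the quantities $\partial_1\cdots\partial_k \Sym(q^d)(y)$ across all $d$ into the entire function
\[ G(t) \;:=\; \sum_{d\geq 0}\frac{t^d}{d!}\,\partial_1\cdots\partial_k \Sym(q^d)(y) \;=\; \Bigl[\partial_1\cdots\partial_k \sum_{\pi\in S_k}\exp(t\,q(x_\pi))\Bigr]_{x=y}. \]
A standard Leibniz-rule computation rewrites this as $G(t) = \sum_{\pi\in S_k} \exp(t\,q(y_\pi))\,P(t,y_\pi)$, where $P(t,y) := \sum_{\sigma} t^{|\sigma|}\prod_{B\in \sigma}(\partial_B q)(y)$ (summed over set partitions $\sigma$ of $[k]$, with $\partial_B := \prod_{i\in B}\partial_i$) is a polynomial of degree $k$ in $t$ whose leading ($t^k$-) coefficient equals $\prod_{i=1}^k \ell_i(y)$. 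It therefore suffices to show $G\not\equiv 0$.

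Using (b), I would regroup $G(t) = \sum_C \exp(t\,c_C)\,R_C(t)$, where $C$ ranges over the equivalence classes of permutations under $\pi\sim\pi' \iff q(x_\pi)\equiv q(x_{\pi'})$, the constants $c_C := q(y_\pi)$ (for any $\pi\in C$) are all distinct by (b), and $R_C(t) := \sum_{\pi\in C} P(t,y_\pi)$. The classical linear independence of real exponentials with distinct exponents (over the polynomial ring $\R[t]$) then tells us that $G\equiv 0$ would force $R_C \equiv 0$ for every class $C$; so it is enough to find one class where $R_C \not\equiv 0$. I would single out $C_{\mathrm{id}} := \mathrm{Stab}_{S_k}(q)$: differentiating the identity $q(x_\pi)=q(x)$ that holds for $\pi$ in the stabilizer gives $\ell_i(x_\pi) = \ell_{\pi(i)}(x)$, so $\prod_i \ell_i(y_\pi) = \prod_i \ell_i(y)$ for every such $\pi$. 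Summing, the $t^k$-coefficient of $R_{C_{\mathrm{id}}}(t)$ equals $|\mathrm{Stab}(q)|\cdot\prod_i \ell_i(y)$, which is non-zero by (a). Hence $R_{C_{\mathrm{id}}} \not\equiv 0$, so $G \not\equiv 0$, and some Taylor coefficient of $G$ yields the required $d$.

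The main obstacle I expect is controlling the cancellations that arise in $\Sym$ when one tries to track the coefficient of a specific monomial across the $k!$ permutations: a direct combinatorial attempt becomes tangled in signed sums and coincidences between the orbits of monomials of $q$. The generating-function packaging together with linear independence of exponentials with distinct exponents neatly sidesteps this, reducing the entire non-vanishing question to the one-line observation that $\prod_i \ell_i(y)\neq 0$ for the generically chosen $y$, which is immediate from the hypothesis that each $\ell_i$ is a non-zero polynomial.
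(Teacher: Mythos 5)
Your argument is correct, but it reaches the conclusion by a genuinely different mechanism than the paper. The paper works with the second-order operator $\partial_1^{2}\cdots\partial_k^{2}$: it first proves (Lemma~\ref{lemma:pow_derivatives}) that $\partial_1^{s_1}\cdots\partial_k^{s_k}(q^d)=q^{d-\abs{s}}\sum_{i\leq \abs{s}} d^i p_i$ with top term $d^{\abs{s}}\prod_i\ell_i^{s_i}$, then evaluates at a generic point $y$ (the same use of Lemma~\ref{lemma:dim_var} as yours) and takes $d$ a large \emph{even} integer, so that each summand over $\pi\in S_k$ carries the nonnegative weight $q(y_\pi)^{d-2k}$ and is dominated by $d^{2k}\prod_i\ell_i(y_\pi)^2>0$; the squaring and the even exponent are precisely what rule out cancellation across permutations. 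You instead stay with the first-order operator $\partial_1\cdots\partial_k$ and defuse the sign problem differently: you package all powers $d$ into the entire function $G(t)=\sum_{\pi}e^{tq(y_\pi)}P(t,y_\pi)$ via the set-partition expansion of the mixed derivative, group permutations into classes according to the polynomial $q(x_\pi)$, invoke linear independence of $p(t)e^{ct}$ with distinct exponents $c$, and note that within the class of the identity the relation $\ell_i(x_\pi)=\ell_{\pi(i)}(x)$ (obtained by differentiating $q(x_\pi)\equiv q(x)$) forces all leading $t^k$-coefficients to equal $\prod_i\ell_i(y)\neq 0$, so they add rather than cancel. Both proofs are sound; the paper's is more elementary and quantitative (it exhibits a monomial divisible by $x_1^2\cdots x_k^2$ for every sufficiently large even $d$), while yours avoids the asymptotics in $d$ and isolates the structural reason cancellation cannot occur. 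Two points worth tightening in a full write-up: justify exchanging $\partial_1\cdots\partial_k$ with the sum over $d$ --- cleanest is to prove, for each fixed $d$, the finite identity $\partial_1\cdots\partial_k(q^d)=\sum_{\sigma}d(d-1)\cdots(d-\abs{\sigma}+1)\,q^{d-\abs{\sigma}}\prod_{B\in\sigma}\partial_B q$ (sum over set partitions $\sigma$ of $[k]$, $\partial_B=\prod_{i\in B}\partial_i$), which also makes your formula for $P(t,y)$ purely algebraic --- and observe that the nonvanishing Taylor coefficient of $G$ automatically occurs at some $d\geq 1$, since the $d=0$ term vanishes, as the statement requires $d\in\N$.
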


We start by proving the following lemma about derivates of powers of $q$.
\begin{lemma}\label{lemma:pow_derivatives}
	Let $k\in \N$, and let $q:\R^k\to \R$ be a polynomial. For each $i\in [k]$, let $\ell_i=\partial_iq$.
	
	Then, for every $s=(s_1,\dots,s_k)\in \Z_{\geq 0}^k$ with $\abs{s}=\sum_{i\in [k]}s_i$, there exist  polynomials $p_0, \dots, p_{\abs{s}}$, with $p_{\abs{s}} = \prod_{i\in [k]}\ell_i^{s_i}$, such that for each $d\geq \abs{s}$, it holds that 
	\[ \partial_1^{s_1}\cdot\partial_2^{s_2}\cdots \partial_k^{s_k}\brac{q^d} = q^{d-\abs{s}}\cdot \brac{\sum_{i=0}^{\abs{s}} d^i\cdot p_i}.\]
\end{lemma}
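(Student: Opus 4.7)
The plan is to prove this by induction on $|s|$. The base case $|s|=0$ is immediate: the statement reduces to $q^d = q^d \cdot 1$, so we take $p_0 \equiv 1$, which also matches the required form of $p_{|s|}$ since an empty product is $1$.

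For the inductive step, suppose the claim holds for all multi-indices of total weight $m$, and let $s\in \Z_{\geq 0}^k$ with $|s| = m+1$. Pick any $j\in [k]$ with $s_j \geq 1$, and write $s = s' + e_j$ where $s'$ has $|s'|=m$. By the inductive hypothesis applied to $s'$, there are polynomials $p'_0,\ldots,p'_m$ with $p'_m = \prod_{i\in [k]} \ell_i^{s'_i}$ such that, for every $d\geq m$,
\[
\partial_1^{s'_1}\cdots \partial_k^{s'_k}(q^d) \;=\; q^{d-m}\cdot \sum_{i=0}^{m} d^i\, p'_i.
\]
Now apply $\partial_j$ to both sides. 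The product rule on the right gives
\[
q^{d-m-1}\left[(d-m)\,\ell_j \sum_{i=0}^{m} d^i p'_i \;+\; q\sum_{i=0}^{m} d^i\,\partial_j p'_i\right],
\]
which I would reorganize by collecting powers of $d$. Since $(d-m) d^i = d^{i+1} - m\, d^i$, the bracket becomes $\sum_{i=0}^{m+1} d^i\, p_i$ for suitable polynomials $p_i$ (independent of $d$); I would record the explicit formula $p_i = \ell_j p'_{i-1} - m\,\ell_j p'_i + q\,\partial_j p'_i$ (with the obvious convention that $p'_{-1} = p'_{m+1}=0$). Using $d\geq m+1=|s|$, the prefactor $q^{d-m-1}$ equals $q^{d-|s|}$, and we obtain the claimed identity.

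The only step that needs real verification is that the top coefficient $p_{|s|}$ equals $\prod_{i\in [k]} \ell_i^{s_i}$. Tracing the formula above at $i=m+1$, the only contribution is $\ell_j\, p'_m$, and by the inductive hypothesis $p'_m = \prod_i \ell_i^{s'_i}$, so $p_{|s|} = \ell_j \prod_i \ell_i^{s'_i} = \prod_i \ell_i^{s_i}$ because $s = s'+e_j$. This is where the shape of the leading term is preserved, and it is really the only substantive check in the induction; everything else is bookkeeping of polynomial coefficients in $d$. I do not anticipate any obstacle beyond being careful with the shifted index ranges when rewriting $(d-m)d^i$, so a short, clean induction should suffice.
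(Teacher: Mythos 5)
Your proposal is correct and follows essentially the same route as the paper: induction on $\abs{s}$, applying one more partial derivative to the inductive identity, regrouping powers of $d$ via $(d-m)d^i = d^{i+1}-m\,d^i$, and checking that the top coefficient is $\ell_j$ times the previous leading term. The only cosmetic difference is that you peel off an arbitrary coordinate $j$ with $s_j\geq 1$ while the paper assumes $s_1>0$ without loss of generality; the recursion for the $p_i$'s you record is the same as theirs.
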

\begin{proof}
	The proof is by induction on $\abs{s}$.
	For the base case, if $\abs{s}=0$, we have $s = (0,0,\dots,0)$, and $p_0 = 1$ satisfies the statement of the lemma.
	
	For the inductive step, consider any $s=(s_1,\dots,s_k)\in \Z_{\geq 0}^k$ with $\abs{s}=\sum_{i\in [k]}s_i > 0$.
	Without loss of generality, by symmetry, we can assume that $s_1>0$.
	By the inductive hypothesis applied to $(s_1-1,s_2\dots,s_k)$, we have the existence of polynomials $p_0, \dots, p_{\abs{s}-1}$, with $p_{\abs{s}-1} = \ell_1^{s_1-1}\cdot \prod_{i=2}^k\ell_i^{s_i}$, and such that for each $d\geq \abs{s}-1$, we have 
	\[ \partial_1^{s_1-1}\cdot\partial_2^{s_2}\cdots \partial_k^{s_k}\brac{q^d} = q^{d-\abs{s}+1}\cdot \brac{\sum_{i=0}^{\abs{s}-1} d^i\cdot p_i}.\]
	Now, if $d\geq \abs{s}$, differentiating the above with respect to $x_1$, we get 
	\begin{align*}
		\partial_1^{s_1}\cdot\partial_2^{s_2}\cdots \partial_k^{s_k}\brac{q^d} 
		&= q^{d-\abs{s}}\cdot \brac{(d-\abs{s}+1)\cdot \ell_1\cdot \sum_{i=0}^{\abs{s}-1} d^i\cdot p_i} + q^{d-\abs{s}+1}\cdot \brac{\sum_{i=0}^{\abs{s}-1} d^i\cdot \partial_1(p_i)}
		\\&= q^{d-\abs{s}}\cdot \brac{\sum_{i=1}^{\abs{s}} d^{i}\cdot \ell_1\cdot p_{i-1} + \sum_{i=0}^{\abs{s}-1}d^i\cdot \brac{\brac{-\abs{s}+1}\cdot \ell_1\cdot p_i + q\cdot \partial_1 p_i} }
		\\&= q^{d-\abs{s}}\cdot \brac{\sum_{i=0}^{\abs{s}} d^{i}\cdot \tilde{p}_i },
	\end{align*}
	where the polynomials $\tilde{p}_1,\dots,\tilde{p}_{\abs{s}}$ do not depend on $d$, and are such that $\tilde{p}_{\abs{s}} = p_{\abs{s}-1}\cdot \ell_1  = \prod_{i\in [k]}\ell_i^{s_i}$, as desired.
\end{proof}

With the above lemma in hand, next we shall consider the symmetrization operation applied to derivatives of powers of $q$.

\begin{lemma}\label{lemma:pow_sec_der}
	Let $k\in \N$, and let $q:\R^k\to \R$ be a polynomial such that for each $i\in [k]$, the polynomial $\ell_i=\partial_iq$ is not identically zero.
	
	Then, for each large enough even integer $d\in \N$, the polynomial $\Sym\brac{\partial_1^2\cdot\partial_2^2\cdots \partial_k^2\brac{q^d}}$ is not identically zero.
\end{lemma}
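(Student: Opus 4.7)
The plan is to combine Lemma~\ref{lemma:pow_derivatives} with Lemma~\ref{lemma:dim_var} and a standard exponential-sum dominance argument. Applying Lemma~\ref{lemma:pow_derivatives} with $s=(2,2,\dots,2)$, for every $d\geq 2k$ we obtain polynomials $p_0,\dots,p_{2k}$ (depending on $q$ but not on $d$), with $p_{2k}=\prod_{i\in[k]}\ell_i^2$, such that
\[
\partial_1^2\partial_2^2\cdots\partial_k^2\brac{q^d} \;=\; q^{d-2k}\cdot\sum_{i=0}^{2k} d^i\cdot p_i.
\]
Since each $\ell_i$ is a non-zero polynomial by hypothesis (and hence so is $q$), Lemma~\ref{lemma:dim_var} applied to the finite family $\set{q,\ell_1,\dots,\ell_k}$ produces a point $y\in\R^k$ such that $q(y_\pi)\neq 0$ and $\ell_i(y_\pi)\neq 0$ for every permutation $\pi\in S_k$ and every coordinate $i\in [k]$.

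Fix such a $y$. For each even integer $d\geq 2k$, evaluating the symmetrization at $y$ gives
\[
F(d) \;:=\; \Sym\brac{\partial_1^2\partial_2^2\cdots\partial_k^2(q^d)}(y) \;=\; \sum_{\pi\in S_k}\abs{q(y_\pi)}^{d-2k}\cdot\sum_{i=0}^{2k} d^i\cdot p_i(y_\pi),
\]
where I used that $d-2k$ is even, so $q(y_\pi)^{d-2k}=\abs{q(y_\pi)}^{d-2k}$. Letting $b_1>\cdots>b_{m'}>0$ be the distinct values of $\abs{q(y_\pi)}$ over $\pi\in S_k$, and $T_j=\set{\pi\in S_k:\abs{q(y_\pi)}=b_j}$, this rewrites as
\[
F(d) \;=\; \sum_{j=1}^{m'} b_j^{d-2k}\cdot R_j(d), \qquad R_j(d) \;:=\; \sum_{i=0}^{2k} d^i\sum_{\pi\in T_j} p_i(y_\pi),
\]
which is a linear combination of exponentials in $d$ with distinct positive bases, weighted by polynomials of degree at most $2k$.

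I then argue by contradiction: if the conclusion of the lemma failed, there would exist infinitely many even integers $d$ for which $\Sym\brac{\partial_1^2\cdots\partial_k^2(q^d)}\equiv 0$, and in particular $F(d)=0$ for all such $d$. A standard dominance argument (order the $b_j$'s by size and compare the $d\to\infty$ asymptotics term by term) then forces $R_j\equiv 0$ as a polynomial in $d$, for every $j$. Examining the coefficient of $d^{2k}$ in $R_j$ then yields
\[
\sum_{\pi\in T_j} p_{2k}(y_\pi) \;=\; \sum_{\pi\in T_j}\prod_{i\in[k]}\ell_i(y_\pi)^2 \;=\; 0 \qquad \text{for every } j.
\]
But by our choice of $y$, every summand on the right is strictly positive, and each $T_j$ is non-empty, so the sum is strictly positive, a contradiction. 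The main (mild) obstacle, and the reason we specialize to \emph{even} $d$, is precisely to ensure that $q(y_\pi)^{d-2k}=\abs{q(y_\pi)}^{d-2k}$ is genuinely a positive exponential in $d$, so that the dominance argument applies cleanly without having to track sign cancellations among the contributions of different permutations.
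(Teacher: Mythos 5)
Your proof is correct and follows essentially the same route as the paper: the same application of Lemma~\ref{lemma:pow_derivatives} with $s=(2,\dots,2)$, the same generic point $y$ from Lemma~\ref{lemma:dim_var}, the same use of evenness of $d$ to make $q(y_\pi)^{d-2k}>0$, and the same observation that $p_{2k}(y_\pi)=\prod_i\ell_i(y_\pi)^2>0$. The only difference is the finish: the paper directly lower-bounds the evaluated sum by $\bigl(\sum_\pi q(y_\pi)^{d-2k}\bigr)\cdot\tfrac{d^{2k}A}{2}>0$ for all even $d\geq\max\set{2k,4kB/A}$, whereas you argue by contradiction via an exponential-polynomial dominance argument, which is valid but strictly more machinery than needed here (each individual bracket $\sum_i d^i p_i(y_\pi)$ is already positive for large $d$, so no grouping by the values $\abs{q(y_\pi)}$ is required).
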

\begin{proof}
	By applying Lemma~\ref{lemma:pow_derivatives} on $s = (2,2,\dots,2)$, we have the existence of polynomials $p_0, \dots, p_{2k}$, with $p_{2k} = \prod_{i\in [k]}\ell_i^2$, such that for each $d\geq 2k$, it holds that $\partial_1^{2}\cdot\partial_2^{2}\cdots \partial_k^{2}\brac{q^d} = q^{d-2k}\cdot \brac{\sum_{i=0}^{2k} d^i\cdot p_i}$.
	
	By Lemma~\ref{lemma:dim_var}, let $y\in \R^k$ be such that $y$ (and its permutations) don't lie in the zero set of any of the polynomials $\ell_1,\dots,\ell_k, q$.
	We define \[ A = \min_{\pi\in S_k}\sqbrac{\prod_{i\in [k]}\ell_i\brac{y_\pi}^2} > 0, \quad B = \max_{0\leq i\leq 2k-1,\ \pi\in S_k}\abs{p_i(y_\pi)}\geq 0. \]
	Then, for any even integer $d\geq \max\set{2k, \frac{4kB}{A}}$, it holds that
	\begin{align*}
		\Sym\brac{\partial_1^{2}\cdot\partial_2^{2}\cdots \partial_k^{2}\brac{q^d}}(y) 
		&=\sum_{\pi\in S_k} q\brac{y_\pi}^{d-2k}\cdot \brac{d^{2k}\cdot \prod_{i\in [k]}\ell_i\brac{y_{\pi}}^2 + \sum_{i=0}^{2k-1} d^i\cdot p_i\brac{y_\pi}}
		\\&\geq \sum_{\pi\in S_k} q\brac{y_\pi}^{d-2k}\cdot \brac{d^{2k}\cdot A - \sum_{i=0}^{2k-1}d^i\cdot B}
		\\&\geq  \brac{\sum_{\pi\in S_k} q\brac{y_\pi}^{d-2k}}\cdot \brac{d^{2k}\cdot A - 2k\cdot d^{2k-1}\cdot B}
		\\&\geq \brac{\sum_{\pi\in S_k} q\brac{y_\pi}^{d-2k}}\cdot \frac{d^{2k} A}{2} > 0.
	\end{align*}
	Hence, for even integers $d\geq \max\set{2k, \frac{4kB}{A}}$, the polynomial $\Sym\brac{\partial_1^{2}\cdot\partial_2^{2}\cdots \partial_k^{2}\brac{q^d}}$ is not identically zero.
\end{proof}

Finally, we prove the main lemma of this section.

\begin{proof}[Proof of Lemma~\ref{lemma:polynomial_all_var}]
	Let $k\in \N$, and let $q:\R^k\to \R$ be a polynomial such that for each $i\in [k]$, the polynomial $\ell_i=\partial_iq$ is not identically zero.
	It suffices to prove that for some $d\in \N$, the polynomial $\partial_1^{2}\cdot\partial_2^{2}\cdots \partial_k^{2}\brac{\Sym(q^d)}$ is not identically zero, since then the coefficient of some monomial divisible by $x_1^2\cdot x_2^2\cdots x_k^2$ is non-zero.
	
	For each polynomial $p:\R^k\to \R$, and each $\pi\in S_k$, we shall use $p_\pi$ to denote the polynomial given by $p_\pi(x) = p(x_\pi)$.
	Then, for all $s_1,\dots,s_k\in \Z_{\geq 0}$, we have that $\partial_1^{s_1}\cdot\partial_2^{s_2}\cdots \partial_k^{s_k} \brac{p_\pi} = \brac{\partial_{\pi^{-1}(1)}^{s_1}\cdot\partial_{\pi^{-1}(2)}^{s_2}\cdots \partial_{\pi^{-1}(k)}^{s_k} \brac{p}}_\pi$.
	
	By the above, we have that for each $d\in \N$,
	\begin{align*}
		\partial_1^{2}\cdot\partial_2^{2}\cdots \partial_k^{2}\brac{\Sym(q^d)} 
		&=  \partial_1^{2}\cdot\partial_2^{2}\cdots \partial_k^{2}\brac{\sum_{\pi\in S_k}q_\pi^d}
		\\&= \sum_{\pi\in S_k} \partial_1^{2}\cdot\partial_2^{2}\cdots \partial_k^{2}\brac{q_\pi^d}
		\\&= \sum_{\pi\in S_k} \brac{\partial_{\pi^{-1}(1)}^{2}\cdot\partial_{\pi^{-1}(2)}^{2}\cdots \partial_{\pi^{-1}(k)}^{2} \brac{q^d}}_\pi
		\\&= \sum_{\pi\in S_k} \brac{\partial_{1}^{2}\cdot\partial_{2}^{2}\cdots \partial_{k}^{2} \brac{q^d}}_\pi
		\\&= \Sym\brac{\partial_1^{2}\cdot\partial_2^{2}\cdots \partial_k^{2}\brac{q^d}}.
	\end{align*}	
	Now, the result follows from Lemma~\ref{lemma:pow_sec_der}.
\end{proof}

\subsection{Proving the Gaussian Variant}

We start by proving a slight variant of Proposition~\ref{prop:gaussian_counter_eg}, where we allow $f$ to be an arbitrary (possibly unbounded) polynomial.

\begin{lemma}\label{lemma:hermite_counter_eg}
	Let $k\in \N$, and let $\Sigma\in \R^{k\times k}$ be a symmetric positive semi-definite matrix such that:
	\begin{enumerate}
		\item For each $i\in [k]$, it holds that $\Sigma_{i,i} = 1$.
		\item The matrix $V = \Sigma - I$ has no row/column as all zeros.
	\end{enumerate}
	Then, there exists a polynomial $f:\R\to \R$ such that $\E_{X\sim \mc N(0,1)} \sqbrac{f(X)} = 0$, and \[ \abs{\E_{X\sim \mc N(0, \Sigma)}\sqbrac{\prod_{i\in [k]} f(X_i)}} > 0 .\]
\end{lemma}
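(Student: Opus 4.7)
}
The plan is to build $f$ as a linear combination of Hermite polynomials, $f(x) = \sum_{j=1}^N a_j H_j(x)$, with $a_0=0$ so that $\E_{Z\sim \mc N(0,1)}[f(Z)]=0$ automatically. I will treat the coefficients $a = (a_1, \ldots, a_N)$ as formal real parameters, view $\E_{X\sim\mc N(0,\Sigma)}[\prod_i f(X_i)]$ as a polynomial in $a$, and show this polynomial is not identically zero. Once this is done, a nonzero real evaluation $a^\star$ gives the desired $f$.

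\textbf{Step 1 (Extracting a good monomial from $\Sym(q^d)$).} Set $q(t) = t^\top V\, t$, a polynomial on $\R^k$. Since $V=\Sigma-I$ has zero diagonal and, by assumption, no all-zero row (equivalently no all-zero column, by symmetry), each partial derivative $\partial_i q = 2\sum_j V_{ij}t_j$ is a nonzero polynomial. Lemma~\ref{lemma:polynomial_all_var} therefore applies to $q$ and yields $d\in\N$ together with positive integers $s_1^\star,\ldots,s_k^\star\in \N$ such that
\[ \sqbrac{\Sym(q^d) : t_1^{s_1^\star}\cdots t_k^{s_k^\star}} \;\neq\; 0. \]

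\textbf{Step 2 (Hermite expansion).} Choose $N \geq \max_i s_i^\star$ and expand
\[ \E_{X\sim\mc N(0,\Sigma)}\!\!\sqbrac{\prod_{i=1}^k f(X_i)} = \sum_{s_1,\ldots,s_k\geq 1} a_{s_1}\cdots a_{s_k}\,\E\!\sqbrac{\prod_i H_{s_i}(X_i)}. \]
By Lemma~\ref{lemma:hermite_exp}, each expectation on the right equals $\frac{\prod_i s_i!}{d'!\,2^{d'}}\sqbrac{(t^\top V t)^{d'} : t_1^{s_1}\cdots t_k^{s_k}}$ when $2d' = \sum_i s_i$, and vanishes when the sum is odd.

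\textbf{Step 3 (Identifying a nonzero coefficient in $a$).} Now regroup the right-hand side of Step 2 by the multiset $\{s_1,\ldots,s_k\}$, i.e.\ by the monomial $\prod_j a_j^{m_j}$ with $m_j$ the multiplicity of $j$ in the tuple. For any fixed multiset with representative ordering $s^\star=(s_1^\star,\ldots,s_k^\star)$, the coefficient of $\prod_j a_j^{m_j}$ in the polynomial is
\[ \frac{\prod_i s_i^\star!}{d!\,2^d\cdot\prod_j m_j!}\cdot \sqbrac{\Sym\!\brac{(t^\top V t)^d} : t_1^{s_1^\star}\cdots t_k^{s_k^\star}}, \]
using that $\sum_{\pi\in S_k}[P(t_\pi):t_1^{s_1^\star}\cdots t_k^{s_k^\star}]$ counts each distinct ordering of $s^\star$ with multiplicity $\prod_j m_j!$. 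By Step~1 this coefficient is nonzero.

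\textbf{Step 4 (Conclusion).} Hence $\E[\prod_i f(X_i)]$ is a nonzero polynomial in $(a_j)_{j\geq 1}$ (with $a_0 = 0$ imposed), so there exist real values $a_1^\star,\ldots,a_N^\star$ at which it is nonzero; the resulting $f$ is the required polynomial, since the condition $a_0=0$ guarantees $\E_{X\sim\mc N(0,1)}[f(X)]=\sum_{j\geq 1}a_j^\star\E[H_j(X)]=0$.

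\paragraph{Where the difficulty lies.} The genuine content is in Step~1: turning the combinatorial hypothesis on $V$ (no empty row) into an honest nonvanishing of a specific coefficient in a symmetrized power, which is exactly what Lemma~\ref{lemma:polynomial_all_var} delivers. The rest is bookkeeping: matching the "symmetrization arising from grouping monomials in $a$'' with the symmetrization in $\Sym(q^d)$, and invoking Lemma~\ref{lemma:hermite_exp} to translate Gaussian moments into polynomial coefficients of $(t^\top V t)^d$.
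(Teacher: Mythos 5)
Your proposal is correct and follows essentially the same route as the paper: parametrize $f$ as a combination of Hermite polynomials with zero constant term, use Lemma~\ref{lemma:hermite_exp} to translate the Gaussian moment into coefficients of $(t^\top V\, t)^d$, and invoke Lemma~\ref{lemma:polynomial_all_var} to find a nonvanishing coefficient of $\Sym(q^d)$ at a monomial with all exponents positive. The only cosmetic difference is that the paper works by contradiction with a $k$-term combination $\sum_i \alpha_i H_{s_i}$ and extracts the coefficient of $\alpha_1\cdots\alpha_k$, whereas you extract the corresponding multiset monomial in general coefficients $(a_j)$; the bookkeeping is equivalent.
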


\begin{proof}
	For $s = (s_1,\dots,s_k)\in \N^k$ and $\alpha = (\alpha_1,\dots,\alpha_k)\in \R^{k}$, let $f_{s,\alpha}:\R\to \R$ be the polynomial defined by $f_{s,\alpha}(x) = \alpha_1 H_{s_1}(x)+\cdots + \alpha_k H_{s_k}(x)$, where the polynomials $H_{s_i}$ are Hermite polynomials (see Definition~\ref{defn:hermite_poly}).	
	Observe that since $s_1,\dots,s_k \geq 1$, this polynomial satisfies $\E_{X\sim \mc N(0,1)} \sqbrac{f(X)} = 0$.

	Suppose, for the sake of contradiction, that for every $s\in \N^k,\ \alpha\in \R^k$, it holds that 
	\[\E_{X\sim \mc N(0, \Sigma)}\sqbrac{\prod_{i\in [k]} f_{s,\alpha}(X_i)} = \E_{X\sim \mc N(0, \Sigma)}\sqbrac{\prod_{i\in [k]} \sum_{j\in [k]}\alpha_jH_{s_j}(X_i) } =  0.\]
	Observe that for every $s\in \N^k$, the above expression can be written as a multivariate polynomial in $\alpha_1,\dots,\alpha_k$.
	If the polynomial vanishes for all $\alpha\in \R^k$, the coefficient of $\alpha_1\cdot \alpha_2\cdots \alpha_k$ must be zero; that is,
	\[ \sum_{\pi\in S_k}\E_{X\sim \mc N(0, \Sigma)}\sqbrac{\prod_{i\in [k]} H_{s_{\pi(i)}}(X_i)} = 0.\]
	Now, applying Lemma~\ref{lemma:hermite_exp}, we get that for each $d\in \N$, and each $s_1,\dots,s_k \geq 1$ with $s_1+\dots+s_k=2d$,
	\[\sum_{\pi\in S_k}\sqbrac{\brac{t^{\top}V\ t}^d : t_1^{s_{\pi(1)}}\cdots t_k^{s_{\pi(k)}}}
		= \sum_{\pi\in S_k}\sqbrac{\brac{t_\pi^{\top}V\ t_\pi}^d : t_{1}^{s_k}\cdots t_{k}^{s_k}}
		= \sqbrac{ \Sym\brac{\brac{t^{\top}V\ t}^d} : t_{1}^{s_k}\cdots t_{k}^{s_k}} = 0. \]
	Note that the assumption that $V$ has no zero row/column implies that for every $i\in [k]$, the polynomial $\partial_i \brac{t^{\top}V\ t}$ is not identically zero.
	By Lemma~\ref{lemma:polynomial_all_var}, this is a contradiction.
\end{proof}

With the above, we now prove Proposition~\ref{prop:gaussian_counter_eg} via a standard truncation argument.

\begin{proof}[Proof of Proposition~\ref{prop:gaussian_counter_eg}]
By Lemma~\ref{lemma:hermite_counter_eg}, we know that there exists a polynomial $f:\R\to \R$ such that $\E_{X\sim \mc N(0,1)} \sqbrac{f(X)} = 0$, and $\abs{\E_{X\sim \mc N(0, \Sigma)}\sqbrac{\prod_{i\in [k]} f(X_i)}} > 0$.

For each integer $M\in \N$, we define the truncated function $f_M:\R\to [-M,M]$ by 
\[f_M(x) = f(x)\cdot \ind_{\abs{f(x)}\leq M} + M\cdot \ind_{f(x) > M} - M\cdot \ind_{f(x) < -M}.\]
Also, let $g_M:\R\to [-2M,2M]$, be given by $g_M(x) = f_M(x) - \E_{X\sim \mc N(0,1)}\sqbrac{f_M(X)}$.
Observe that
\begin{enumerate}
	\item For every $M$, it holds that $\E_{X\sim \mc N(0,1)}\sqbrac{g_M(X)} = 0$.
	\item For every $M$, the function $g_M$ is bounded and Lipschitz continuous.
	\item For every $x\in \R$, $f_M(x)\to f(x)$ as $M\to \infty$. Further, since $\abs{f_M(x)} \leq \abs{f(x)}$ for each $x\in \R, M\in \N$, by the dominated convergence theorem, we have  $\E_{X\sim \mc N(0,1)}\sqbrac{f_M(x)}\to \E_{X\sim \mc N(0,1)}\sqbrac{f(x)}=0$ as $M\to \infty$.
	This implies that for each $x\in \R$, $g_M(x)\to f(x)$ as $M\to \infty$.

	Also, for each $x\in \R, M\in \N$, we have $\abs{g_M(x)} \leq \abs{f(x)} + \E_{X\sim \mc N(0,1)}\sqbrac{\abs{f(X)}}$. Hence, by the dominated convergence theorem, we have that $\E_{X\sim \mc N(0, \Sigma)}\sqbrac{\prod_{i\in [k]} g_M(X_i)} \to \E_{X\sim \mc N(0, \Sigma)}\sqbrac{\prod_{i\in [k]} f(X_i)} \not= 0$ as $M\to \infty$.
\end{enumerate}

By the above, for some large enough $M$, the function $\frac{1}{2M}\cdot g_M:\R\to [-1,1]$ satisfies the desired properties.
\end{proof}

\section{Linearity Testing Requires Pairwise Independence}\label{sec:lin_test_failure}

In this section, we prove Theorem~\ref{thm:intro_main}, which is restated below.

\begin{theorem}\label{thm:counter_eg}
	Let $k\in \N,\ p\in (0,1)$, and let $\nu \in \mc D(p,k)$ be a distribution having no pairwise independent coordinate (see Definition~\ref{defn:distr_class}).
	Then, there exists a constant $\alpha>0$, such that for every large enough $n\in \N$, there exists a function $f:\set{0,1}^n \to [-1,1]$ such that 
	\begin{enumerate}
		\item $\abs{\E_{X\sim \nu^{\otimes n}} \sqbrac{\prod_{i=1}^k f(X_i)} }\geq \alpha.$
		\item For every $S\subseteq [n]$, it holds that $ \abs{\E_{X\sim \mu_p^{\otimes n}}\sqbrac{f(X)\chi_S(X)}} \leq o_n(1)$.
	\end{enumerate}
	
	Moreover, if the distribution $\nu$ is such that $\eta:= \max_{i,j\in [k], i\not= j} \Pr_{X\sim \nu}\sqbrac{X_i=X_j} < 1$ (that is, no two coordinates are almost surely equal), the above holds for a function $f$ with range $\set{-1,1}$.
\end{theorem}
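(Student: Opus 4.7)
I would transfer the Gaussian counterexample from Proposition~\ref{prop:gaussian_counter_eg} to the $p$-biased cube via the CLT, and then round from $[-1,1]$-valued to $\set{-1,1}$-valued by independent randomized rounding for the ``moreover.'' First, standardize $\nu$ by defining $\Sigma\in\R^{k\times k}$ with $\Sigma_{i,j}=(\E_\nu[X_iX_j]-p^2)/(p(1-p))$; this is PSD with $\Sigma_{i,i}=1$, and the hypothesis ``no pairwise independent coordinate'' is exactly the condition ``$V:=\Sigma-I$ has no zero row/column'' required in Proposition~\ref{prop:gaussian_counter_eg}, which then supplies a Lipschitz $h:\R\to[-1,1]$ with $\E_{\mc N(0,1)}[h]=0$ and $\beta:=\E_{\mc N(0,\Sigma)}[\prod_i h(Z_i)]\ne 0$. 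Define $f:\set{0,1}^n\to[-1,1]$ by $f(y):=h(W_y)$, where $W_y:=\frac{1}{\sqrt n}\sum_j(y^{(j)}-p)/\sqrt{p(1-p)}$. Item (1) follows immediately from the multivariate CLT (Theorem~\ref{thm:multi_clt}) applied to the i.i.d.\ $\R^k$-vectors $((X_1^{(j)}-p,\dots,X_k^{(j)}-p)/\sqrt{p(1-p)})_j$, whose covariance is $\Sigma$: we get $(W_{X_1},\dots,W_{X_k})\xrightarrow{\mc D}\mc N(0,\Sigma)$, and bounded continuity of $\prod_i h$ yields $\E[\prod_i f(X_i)]\to\beta$; take $\alpha:=|\beta|/2$.

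\textbf{Item (2) is the main obstacle}, since the required bound $\sup_S|\E[f\chi_S]|=o_n(1)$ must hold \emph{uniformly} over $S$, and high-weight characters are not controlled by any CLT-style argument. My plan is polynomial approximation plus $p$-biased Fourier analysis. Fix $\epsilon>0$; pick $K=K(\epsilon)$ so that $\Pr[|W_X|>K]\le\epsilon$ (valid by Hoeffding applied to the bounded summands in $W_X$), and approximate $h$ uniformly on $[-K,K]$ within $\epsilon$ by a polynomial $p_D$ of degree $D=D(\epsilon)$ (Weierstrass), shifted so that $\E_{\mc N(0,1)}[p_D]=0$. The remainder $|\E[(h-p_D)(W_X)\chi_S]|\le\E[|h-p_D|(W_X)]$ is $O(\epsilon)$ uniformly in $S$ using Cauchy--Schwarz and uniform-in-$n$ polynomial-moment bounds on $W_X$. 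For the polynomial part, $p_D(W_X)$ is a symmetric polynomial of degree $D$ in the $X^{(j)}$'s, so its $p$-biased Fourier coefficients $\tilde h_r$ (depending only on $|T|=r$) vanish for $r>D$ and satisfy $|\tilde h_r|\le C_D/\sqrt{\binom n r}$ by Parseval. Expanding $\chi_S=\sum_{T\subseteq S}A^{|S|-|T|}B^{|T|}\phi_T$ in the $p$-biased basis, with $A:=1-2p$ and $B:=-2\sqrt{p(1-p)}$, one obtains
\[ |\E[p_D(W_X)\chi_S]|\;\le\;|A|^{|S|}|\tilde h_0|+\sum_{r=1}^{D}\binom{|S|}{r}|A|^{|S|-r}|B|^r\,\frac{C_D}{\sqrt{\binom n r}}.\]
The $r=0$ term is $o_n(1)$ uniformly in $|S|$ since $\tilde h_0=\E[p_D(W_X)]\to 0$ by moment CLT together with our normalization. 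For each $r\ge 1$, when $p\ne 1/2$ the function $|S|\mapsto |A|^{|S|}|S|^r$ is maximized near $|S|\sim r/|\log|A||$, giving an $O_{D,p}(n^{-r/2})$ bound uniformly in $|S|$; when $p=1/2$ one has $A=0$, so only the $r=|S|\le D$ term survives and is $O(n^{-|S|/2})$. Summing over $r\le D$ yields $O_{D,p}(n^{-1/2})$, uniformly in $S$. Combining, $\sup_S|\E[f\chi_S]|\le O(\epsilon)+o_n(1)$, and letting $\epsilon\to 0$ after $n\to\infty$ gives item (2).

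\textbf{Boolean version.} Under the assumption $\eta<1$, randomly round $\bar f(y)\in\set{-1,1}$ independently for each $y\in\set{0,1}^n$ with $\E[\bar f(y)]=f(y)$. Since $\Pr_{\nu^{\otimes n}}[X_i=X_j]=\Pr_\nu[X_i=X_j]^n\le\eta^n$ for $i\ne j$, all $X_i$'s are distinct with probability $\ge 1-\binom{k}{2}\eta^n=1-o(1)$; on this event $\E_{\bar f}[\prod_i\bar f(X_i)\mid X]=\prod_i f(X_i)$, hence $\E_{\bar f,X}[\prod_i\bar f(X_i)]=\E_X[\prod_i f(X_i)]+o(1)\ge\alpha/2$, so a standard Markov-type argument shows item (1) holds with probability $\ge\Omega(\alpha)$ over $\bar f$. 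For item (2), $\E_X[\bar f\chi_S]$ is a sum of $2^n$ independent bounded terms $\mu_p(x)\bar f(x)\chi_S(x)$ with mean $\E_X[f\chi_S]$; Hoeffding yields deviation tail $2\exp(-t^2/(2\sum_x\mu_p(x)^2))$, and since $\sum_x\mu_p(x)^2=((1-p)^2+p^2)^n$ is exponentially small in $n$, this beats the $2^n$ union bound over $S$. Hence with probability $1-o(1)$ over $\bar f$, $\sup_S|\E_X[\bar f\chi_S]|=o_n(1)$, and some realization $\bar f$ satisfies both items simultaneously.
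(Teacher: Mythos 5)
Your construction and overall architecture coincide with the paper's: the same normalized covariance matrix $\Sigma$ (and the observation that ``no pairwise independent coordinate'' is exactly ``$V=\Sigma-I$ has no zero row/column''), the same invocation of Proposition~\ref{prop:gaussian_counter_eg}, the same $f(y)=h(W_y)$ with the multivariate CLT giving item (1), and the same independent randomized rounding with Hoeffding plus a union bound over the $2^n$ characters for the $\set{-1,1}$ version. Where you genuinely diverge is item (2): the paper argues elementarily by taking $T\subseteq S$ of size $\min\set{\lfloor n^{1/4}\rfloor,\abs{S}}$, replacing $f$ by the function $\tilde f$ that drops the coordinates in $T$ from the sum defining $W_y$ (an $o_n(1)$ perturbation by the Lipschitz property), factoring $\E[\tilde f\chi_S]=\E[\tilde f\chi_{S\setminus T}]\cdot\E[\chi_T]$, and using $\abs{\E[\chi_T]}=\abs{1-2p}^{\abs{T}}$ when $\abs{S}$ is large and $\E[\tilde f]\to 0$ (CLT) when $S=T$ is small. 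Your route --- polynomial approximation of $h$, symmetry plus Parseval to get $\abs{\tilde h_r}\leq C_D\binom{n}{r}^{-1/2}$ for the level-$r$ coefficients of the degree-$D$ symmetric function $p_D(W_X)$, and the expansion $\chi_S=\sum_{T\subseteq S}A^{\abs{S}-\abs{T}}B^{\abs{T}}\phi_T$ --- is a valid alternative, and your level-wise bookkeeping (including the $p=\tfrac12$ case and the legitimacy of sending $\eps\to0$ after $n\to\infty$, since $f$ does not depend on $\eps$) is correct; it gives a more quantitative view of where $f$'s Fourier mass lives, at the cost of heavier machinery. Likewise, in the rounding step you get item (1) only with probability $\Omega(\alpha)$ by a reverse-Markov argument, whereas the paper runs a second-moment/Chebyshev computation over two independent samples (using $\Pr[\textnormal{collision}]\leq 2\binom{k}{2}\eta^n+k^2(p^2+(1-p)^2)^n$) to get probability $1-o_n(1)$; your weaker bound still suffices for existence because item (2) holds with probability $1-o_n(1)$.

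One step of your item (2) argument needs more care than you give it. With your order of quantifiers (first $K$ with $\Pr[\abs{W_X}>K]\leq\eps$, then $D=D(\eps,K)$ from Weierstrass), the Cauchy--Schwarz remainder bound reads $\eps+\sqrt{\E[(h-p_D)^2(W_X)\,\ind_{\abs{W_X}>K}]}\cdot\sqrt{\eps}$, and the second moment of $p_D(W_X)$ outside $[-K,K]$ is only $O_D(1)$, with $D$ chosen after $K$; so as written the remainder is $O_D(\sqrt{\eps})$, not $O(\eps)$. This is repairable: either do the quantitative bookkeeping (Jackson-type rate $D\approx K/\eps$ for the Lipschitz function $h$, sub-Gaussian tails of $W_X$ uniform in $n$, and coefficient bounds for polynomials bounded on $[-K,K]$, choosing $K=K(\eps)$ large enough that $e^{-cK^2}$ beats the $\exp(O(D\log D))$ moment growth), or, more cleanly, drop the truncation altogether and let $p_D$ be a truncated Hermite expansion of $h$ with $\E_{Z\sim\mc N(0,1)}[(h-p_D)^2(Z)]\leq\eps^2$; for fixed $D$ the uniformly bounded moments of $W_X$ give $\E[(h-p_D)^2(W_X)]\leq 2\eps^2$ for large $n$, so the remainder is at most $2\eps$ uniformly in $S$. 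With that repair your proof is complete.
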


The remainder of this section is devoted to the proof of Theorem~\ref{thm:counter_eg}.
In Section~\ref{sec:counter_eg_1}, we prove the first part of the theorem, dealing with functions with range $[-1,1]$.
Then, in Section~\ref{sec:counter_eg_2}, we show how to round to functions with range $\set{-1,1}$.


\subsection{Function with Range $[-1,1]$}\label{sec:counter_eg_1}

Let $k\in \N,\ p\in (0,1),$ and let $\nu \in \mc D(p,k)$ be a distribution having no pairwise independent coordinate.
Let $\Sigma \in \R^{k\times k}$ be the (normalized) covariance matrix corresponding to the distribution $\nu$, given by, $\Sigma_{i,j} = \E_{X\sim \nu}\sqbrac{\frac{ \brac{X_i-p}\cdot \brac{X_j-p}}{p-p^2}}$.
Observe that the matrix $\Sigma$ satisfies the conditions of Proposition~\ref{prop:gaussian_counter_eg}, and hence there exists a function $h:\R\to [-1,1]$ such that
\begin{enumerate}
	\item $\E_{Z\sim \mc N(0,1)}\sqbrac{h(Z)} = 0$
	\item The function $H:\R^k\to [-1,1]$ given by $H(x) = \prod_{i\in [k]}h\brac{x_i}$ is such that \[\alpha :=  \frac{1}{2}\cdot \abs{\E_{Z\sim \mc N(0, \Sigma)} \sqbrac{H(Z)}} > 0 .\]
	\item The function $h$ is $K$-Lipschitz for some $K>0$; in particular, both $h$ and $H$ are bounded continuous functions.
\end{enumerate}

Consider any large $n\in \N$.
We define $f:\set{0,1}^n \to [-1,1]$ by 
\[f(x) = h\brac{\frac{1}{\sqrt{n}}\cdot \sum_{j=1}^n \frac{x\up{j}-p}{\sqrt{p-p^2}}},\]
The function $f$ satisfies the two properties in the theorem statement, as follows:

\begin{itemize}
	\item Let $X \sim \nu^{\otimes n}$, and let $Y = (Y_1,\dots, Y_k)$ be a $\set{0,1}^k$-valued random vector, defined as $Y_i=\frac{1}{\sqrt{n}}\cdot \sum_{j=1}^n \frac{X_i\up{j}-p}{\sqrt{p-p^2}}$.
	 
	Let $F:\set{0,1}^{kn}\to [-1,1]$ be given by $F(x) = \prod_{i\in [k]}f\brac{x_i}$.
	Since $H$ is continuous and bounded, we have by the Multivariate CLT (Theorem~\ref{thm:multi_clt}) that
	\[ \abs{\ \E\sqbrac{F(X)} - \E_{Z \sim \mc N(0, \Sigma)} \sqbrac{H(Z)}\ } = \abs{\ \E\sqbrac{{H(Y)}} - \E_{Z \sim \mc N(0, \Sigma)} \sqbrac{H(Z)}\ } \leq o_n(1).\]
	Hence, for large $n$, we get $\abs{\E_{X\sim \nu^{\otimes n}} \sqbrac{\prod_{i=1}^k f(X_i)} }\geq 2\alpha-o_n(1)\geq \alpha$, as desired.

	\item Consider any subset $S\subseteq [n]$, and let $T\subseteq S$ be any subset of size $\abs{T}=\min\set{\lfloor n^{1/4}\rfloor, \abs{S}}$. Let $\tilde{f}:\set{0,1}^n\to [-1,1]$ be defined by $\tilde{f}(X) = h\brac{\frac{1}{\sqrt{n-\abs{T}}}\cdot \sum_{j\in [n]\setminus T} \frac{x\up{j}-p}{\sqrt{p-p^2}} }$; note that this function only depends on the coordinates of $x$ outside the set $T$. Further, for each $x\in \set{0,1}^n$, by the Lipschitz bound on $h$, we get
	\begin{align*}
		\abs{f(x)-\tilde{f}(x)} &\leq K\cdot  \abs{\frac{1}{\sqrt{n}}\cdot \sum_{j=1}^n \frac{x\up{j}-p}{\sqrt{p-p^2}} - \frac{1}{\sqrt{n-\abs{T}}}\cdot \sum_{j\in [n]\setminus T} \frac{x\up{j}-p}{\sqrt{p-p^2}}}
		\\&\leq \frac{K}{\sqrt{p-p^2}}\cdot\brac{\frac{\abs{T}}{\sqrt{n}}+ \brac{n-\abs{T}}\cdot \abs{\frac{1}{\sqrt{n-\abs{T}}} - \frac{1}{\sqrt{n}}}}
		\\&\leq \frac{K}{\sqrt{p-p^2}}\cdot\brac{\frac{\abs{T}}{\sqrt{n}}+ \frac{n-\abs{T}}{\sqrt{n}}\cdot \frac{\abs{T}}{n}}
		\\&\leq \frac{K}{\sqrt{p-p^2}}\cdot\frac{2\abs{T}}{\sqrt{n}} = o_n(1),
	\end{align*}
	where we used that $(1-t)^{-1/2} \leq 1+t$ for each $t\in [0,1/2]$.
 
	Now, for $X\sim \mu_p^{\otimes n}$, we have
	\begin{align*}
		\abs{\ \E_X\sqbrac{f(X)\cdot \chi_S(X)}\ } &\leq \abs{\ \E_X\sqbrac{\tilde{f}(X)\cdot \chi_S(X)}\ } + o_n(1)
		\\&= \abs{\ \E_X\sqbrac{\tilde{f}(X)\cdot \chi_{S\setminus T}(X)}\cdot \E_X\sqbrac{ \chi_{T}(X)}\ } + o_n(1)
		\\&= \abs{\ \E_X\sqbrac{\tilde{f}(X)\cdot \chi_{S\setminus T}(X)}\ }\cdot \abs{1-2p}^{\abs{T}} + o_n(1).
	\end{align*}
	If $\abs{S} \geq \lfloor n^{1/4} \rfloor$, then $\abs{1-2p}^{\abs{T}} = o_n(1)$.
	Otherwise, we have that $S=T$, and by the Central Limit Theorem (see Theorem~\ref{thm:multi_clt}) , the first term in the above product equals
	\[ \abs{\ \E_X\sqbrac{\tilde{f}(X)}\ } = \abs{\ \E_X\sqbrac{\tilde{f}(X)} - \E_{Z\sim \mc N(0,1)}\sqbrac{h(Z)}\ } = o_n(1). \pushQED{\qed}\qedhere\popQED \]	
\end{itemize}


\subsection{Rounding to a Function with Range $\set{-1,1}$}\label{sec:counter_eg_2}

Now, we shall prove the second part of Theorem~\ref{thm:counter_eg}.

Let $k\in \N,\ p\in (0,1),$ and let $\nu \in \mc D(p,k)$ be a distribution having no pairwise independent coordinate.
Further suppose that the distribution $\nu$ is such that \[\eta := \max_{i,j\in [k], i\not= j} \Pr_{X\sim \nu}\sqbrac{X_i=X_j} < 1.\]
Let $\alpha>0$ be as obtained in Section~\ref{sec:counter_eg_1}.
Consider any large $n\in \N$, and let $f:\set{0,1}^n\to [-1,1]$ be the function obtained in Section~\ref{sec:counter_eg_1}.

Let $g:\set{0,1}^n \to \set{-1,1}$ be a random function, defined as $g(x) = \begin{cases} 1, & w.p.\  \frac{1+f(x)}{2} \\ -1, & w.p.\  \frac{1-f(x)}{2} \end{cases}$, independently for each $x\in \set{0,1}^n$.
Observe that this satisfies $\E_{g}\sqbrac{g(x)} = f(x)$ for each $x\in \set{0,1}^n$.
We will show that the function $g$ satisfies the two desired properties with probability $1-o_n(1)$, and hence by the probabilistic method, this guarantees the existence of a non-random $g$ as desired.
This is done as follows:
\begin{enumerate}
	\item Let $F,G:\set{0,1}^{kn}\to [-1,1]$ be defined as $F(x) = \prod_{i\in [k]}f\brac{x_i}$ and $G(x) = \prod_{i\in [k]}g\brac{x_i}$.
		Let $X,Y \sim \nu^{\otimes n}$ be independent (of each other and of $g$) and let $E$ be the event that $X_1, \dots, X_k, Y_1, \dots, Y_k$ are all distinct.
		Then, by a union bound, we have that $\Pr\sqbrac{\bar E} \leq 2\cdot \binom{k}{2} \cdot \eta^n + k^2\cdot \brac{p^2+\brac{1-p}^2}^n = o_n(1)$, and hence
		\begin{align*}
			\abs{\ \E_{g} \E_{X\sim \nu^{\otimes n}}\sqbrac{G(X)}-\E_{X\sim \nu^{\otimes n}}\sqbrac{F(X)}\ } 
			&\leq \Pr\sqbrac{\bar E} + \abs{\  \E_{g}\E_{X,Y}\sqbrac{G(X)\cdot \ind_{E}}-\E_{X}\sqbrac{F(X)}\ }
			\\&\leq \Pr\sqbrac{\bar E} + \abs{\ \E_{X,Y}\sqbrac{F(X)\cdot \ind_{E}}-\E_{X}\sqbrac{F(X)}\ } 
			\\&\leq 2\Pr\sqbrac{\bar E} = o_n(1).
		\end{align*}
		Similarly, we have
		\begin{align*}
			\abs{\ \E_{g} \sqbrac{\E_{X}\sqbrac{G(X)}}^2-\sqbrac{\E_{X}\sqbrac{F(X)}}^2\ } &= \abs{\ \E_{g} \E_{X,Y}\sqbrac{G(X)\cdot G(Y)}-\E_{X,Y}\sqbrac{F(X)\cdot F(Y)}\ } \\&\leq 2\Pr\sqbrac{\bar E} = o_n(1).
		\end{align*}
		
		Letting $\beta = \abs{\E_{X}\sqbrac{F(X)}} \geq \alpha$, we get 
		$ \Var_g\sqbrac{\E_{X}\sqbrac{G(X)}} \leq \beta^2+o_n(1) - \brac{\beta-o_n(1)}^2 = o_n(1)$.
		Hence, by Chebyshev's inequality (Fact~\ref{fact:chebyshev}), we have $\abs{\E_{X}\sqbrac{G(X)}} \geq \frac{\alpha}{2}$ with probability $1-o_n(1)$.
		
	\item Fix $S \subseteq [n]$. Let $X\sim \mu_p^{\otimes n}$, and let $W = \E_{X}\sqbrac{\chi_S(X)\cdot g(X)} = \sum_{x\in \set{0,1}^n}\Pr\sqbrac{X=x} \cdot \chi_S(x)\cdot g(x)$. Observe that $W$ is a sum of $2^n$ independent and bounded random variables, and such that $\E_g[W] = \E_{X}\sqbrac{\chi_S(X)\cdot f(X)}$. For $q = \max\set{p,1-p}<1$, it holds that $\sum_x (2\Pr[X=x])^2 \leq 4q^n\cdot \sum_x\Pr[X=x] = 4q^n$, and by Hoeffding's inequality (Fact~\ref{fact:hoeffding}), we have for each $t>0$ that
	\[ \Pr\sqbrac{ \abs{W-\E[W]}\geq t} \leq 2\cdot \exp\brac{-\frac{2t^2}{4q^n}}. \]
	Let $t=q^{n/4}$.
	Then, with probability at least $1-o_n(2^{-n})$, it holds that $\abs{W} = \abs{\E_{X}\sqbrac{\chi_S(X)\cdot g(X)}} \leq \abs{\E_{X}\sqbrac{\chi_S(X)\cdot f(X)}}+q^{n/4} = o_n(1).$
	
	Now, a union bound over $S\subseteq[n]$ shows that with probability $1-o_n(1)$, the above holds for every $S\subseteq [n]$.
	\qed
\end{enumerate}

\section{Queries vs. Bias Tradeoff}\label{sec:query_bias}

In this section, we analyze the relation between $p$ (the bias) and $k$ (the number of queries) for the existence of a distribution $\nu \in \mc D(p,k)$ with some pairwise independent coordinate, and with full even-weight support (see Definition~\ref{defn:distr_class}).

\subsection{Query Lower Bound}

We prove a lower bound on $k$ in terms of the $p$, as follows:
\begin{proposition}\label{prop:query_bias_lb}
	Let $k\in \N,\ p\in (0,1)$, and let $\nu \in \mc D(p,k)$ be a distribution that has some pairwise independent coordinate.
	Then, it holds that $k\geq 3$ and  $\frac{1}{k-1} \leq p \leq 1-\frac{1}{k-1}$.
\end{proposition}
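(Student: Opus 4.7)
\medskip
\noindent\textbf{Proof plan for Proposition~\ref{prop:query_bias_lb}.} The plan is to exploit the even-weight constraint by conditioning on the value of the pairwise independent coordinate. Without loss of generality, assume coordinate $1$ is pairwise independent, so $\E_{X\sim \nu}[X_1 X_j] = p^2$ for every $j\in \{2,\dots,k\}$. The first step is to observe that pairwise independence forces the conditional expectations of the other coordinates to remain equal to $p$: a direct calculation using Bayes' rule gives
\[ \Pr[X_j=1 \mid X_1=1] \;=\; \frac{\E[X_1 X_j]}{\Pr[X_1=1]} \;=\; \frac{p^2}{p} \;=\; p, \]
and similarly $\Pr[X_j=1\mid X_1=0]=p$ for each $j\geq 2$. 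Hence, setting $S := \sum_{j=2}^k X_j$, we have $\E[S \mid X_1=a] = (k-1)p$ for both $a\in\{0,1\}$.

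The second step uses the even-weight support of $\nu$, which says $X_1 \equiv S \pmod 2$ almost surely. I would condition on $X_1=1$: then $S$ is almost surely an odd integer in $\{1,3,\dots\}\cap[0,k-1]$, so in particular $S\geq 1$ almost surely. Taking expectations yields $(k-1)p\geq 1$, i.e., $p\geq \tfrac{1}{k-1}$.

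For the upper bound $p\leq 1-\tfrac{1}{k-1}$, the idea is to find a conditioning under which $S$ is forced to be at most $k-2$, and then case-split on the parity of $k$. If $k$ is odd, then $k-1$ is even, so conditioning on $X_1=1$ (which makes $S$ odd) forces $S\leq k-2$, and taking expectations gives $(k-1)p\leq k-2$. If $k$ is even, then $k-1$ is odd, so conditioning on $X_1=0$ (which makes $S$ even) forces $S\leq k-2$, and again $(k-1)p\leq k-2$. Either way, $p\leq 1-\tfrac{1}{k-1}$.

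Finally, it remains to rule out $k\leq 2$. For $k=1$ the even-weight condition forces $X_1=0$ almost surely, contradicting $p\in(0,1)$. For $k=2$, the condition $X_1+X_2\equiv 0\pmod 2$ forces $X_1=X_2$, hence $\E[X_1X_2]=p$, which equals $p^2$ only for $p\in\{0,1\}$, contradicting pairwise independence of coordinate $1$. Thus $k\geq 3$. I do not anticipate any serious obstacle here; the only small care is in verifying the conditional-expectation identity, which follows cleanly from the definition of pairwise independence.
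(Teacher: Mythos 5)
Your proof is correct and takes essentially the same approach as the paper: both exploit the parity constraint together with pairwise independence, your identity $\E[S \mid X_1=a]=(k-1)p$ being just the conditional form of the paper's computation $\E[X_1 S]=(k-1)p^2$, with the same parity case split on $k$. The only cosmetic difference is the even-$k$ upper bound, where the paper flips all coordinates and reapplies the lower bound to $1-p$, while you condition on $X_1=0$ directly; both give $p\leq 1-\frac{1}{k-1}$.
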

\begin{proof}
	Let $X\sim \nu$, and let $i\in [k]$ be a pairwise independent coordinate under $\nu$.
	
	For $Z = \sum_{j\not=i} X_j$, we have by linearity of expectation, that $\E\sqbrac{X_i\cdot Z} = (k-1)p^2$.
	On the other hand, observe that if $X_i = 1$, then $Z = 1 \modt$ and so $Z\geq 1$.
	Hence,
	\[p = \E\sqbrac{X_i\cdot 1}\leq  \E\sqbrac{X_i\cdot Z} = (k-1)p^2, \]
	and we have $(k-1)p \geq 1$; in particular, this shows $k\geq 3$.
	
	For the upper bound on $p$, we consider the following cases:
	\begin{itemize}
		\item $k$ is odd: In this case, if $X_i=1$, then $Z = 1 \modt$ and so $Z\leq k-2$. Hence,
		\[ (k-1)p^2 = \E\sqbrac{X_i\cdot Z} \leq  \E\sqbrac{X_i\cdot (k-2)} = p(k-2),\]
		and we have $(k-1)p \leq (k-2)$, as desired.
		\item $k$ is even: In this case, observe that the distribution of the random variable $(1-X_1, \dots, 1-X_k)$ also satisfies the hypothesis of the proposition, with $p$ replaced by $1-p$. Hence, the above proof gives us $(k-1)\cdot (1-p) \geq 1$, as desired. \qedhere
	\end{itemize}
\end{proof}
\begin{remark}\label{remark:corner_case_full_support}
The proof of Proposition~\ref{prop:query_bias_lb} also shows that for $k>3$ and $p\in \set{\frac{1}{k-1}, 1-\frac{1}{k-1}}$, any distribution satisfying the assumptions of Proposition~\ref{prop:query_bias_lb} cannot have full even-weight support.
This is because if $p\in \set{\frac{1}{k-1}, 1-\frac{1}{k-1}}$, in all cases in the above proof, the random variable $Z$ must be constant under some value of $X_i$ (either $X_i=0$ or $X_i=1$); this cannot be the case for a distribution with full even-weight support when $k>3$.
\end{remark}

\subsection{Query Upper Bound}

In this subsection, we shall prove the following proposition.
\begin{proposition}\label{prop:query_bias_ub}
	Let $k\geq 3$ be a positive integer, and let $p \in \sqbrac{\frac{1}{k-1},1-\frac{1}{k-1}}$ (note that this interval is non-empty for $k\geq 3$).
	
	Then, there exists a permutation-invariant\footnote{we say that a distribution $\nu$ over $\set{0,1}^k$ is \emph{permutation-invariant}, if for $X = (X_1,\dots,X_k)\sim \nu$, and any permutation $\pi:[k]\to [k]$, the distribution of $\brac{X_{\pi(1)}, \dots, X_{\pi(k)}}$ is the same as $\nu$.} and pairwise independent distribution $\nu (k,p) \in \mc D(p,k)$ (see Definition~\ref{defn:distr_class}).
	Furthermore, if $k=3$ or if $p\not\in \set{\frac{1}{k-1}, 1-\frac{1}{k-1}}$, then there exists such a distribution with full even-weight support.
\end{proposition}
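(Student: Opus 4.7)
My plan is to reduce the existence of $\nu(k,p)$ to a two-moment-matching question, then verify feasibility via a specific convex-hull membership. By permutation invariance, any candidate $\nu$ is determined by weights $(P_w)_{w\in W}$ on the even-weight strata $W:=\{0,2,\dots,W_{\max}\}$ (where $W_{\max}:=2\lfloor k/2\rfloor$), with mass $P_w/\binom{k}{w}$ on each string of weight $w$. The marginal condition becomes $\E[W]=kp=:\mu$ and pairwise independence becomes $\E[W(W-1)]=k(k-1)p^2$, equivalently $\Var[W]=kp(1-p)=:v$. So I need a probability distribution on $W$ with prescribed moment pair $(\mu,m_2):=(\mu,\mu^2+v)$, and by Carath\'eodory's theorem this exists iff $(\mu,m_2)$ lies in the convex hull $\mc C\subset\R^2$ of $\{(w,w^2):w\in W\}$, in which case a distribution supported on at most $3$ atoms suffices.

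Membership in $\mc C$ reduces to two inequalities: an upper bound $v\le v_{\max}(\mu):=\mu(W_{\max}-\mu)$ from the chord from $(0,0)$ to $(W_{\max},W_{\max}^2)$, and a piecewise-linear lower bound $v\ge v_{\min}(\mu):=(\mu-w_-)(w_+-\mu)$ on each subinterval $[w_-,w_+]$ cut out by consecutive points of $W$. The upper inequality is routine: for $k$ even it is immediate, and for $k$ odd it collapses after one line to exactly $p\le 1-1/(k-1)$. The substance is the lower inequality. Consider the upward-opening quadratic
\[ g(\mu)\;:=\;v-v_{\min}(\mu)\;=\;\tfrac{k-1}{k}\,\mu^2-(w_-+w_+-1)\,\mu+w_-w_+, \]
which already satisfies $g(w_\pm)=w_\pm(k-w_\pm)/k\ge 0$. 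I will show $g(\mu)\ge 0$ throughout the admissible range of $\mu$ by a three-way case split: (a) on the first interval $[0,2]$, $g$ factors as $\mu((k-1)\mu/k-1)$ with roots $0$ and $k/(k-1)$, giving $g\ge 0$ for $\mu\ge k/(k-1)$, i.e.\ $p\ge 1/(k-1)$; (b) on the last interval $[k-2,k]$ when $k$ is even, the discriminant of $g$ simplifies to $1$ and the roots come out to $k$ and $k(k-2)/(k-1)$, so $g\ge 0$ exactly when $p\le 1-1/(k-1)$; (c) on every middle interval, and on the last interval when $k$ is odd, the discriminant of $g$ is strictly negative so $g>0$ everywhere. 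The main technical obstacle is case (c): the discriminant works out to $[4w_-^2-4(k-2)w_-+k]/k$, and I need every admissible middle even $w_-$ to lie strictly between the roots $((k-2)\pm\sqrt{(k-1)(k-4)})/2$ of this quadratic in $w_-$. This reduces to the elementary inequality $\sqrt{(k-1)(k-4)}>\max(k-6,k-8)$, which holds by squaring once $k$ is large enough and is trivial otherwise.

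For the full even-weight support refinement, note that when $p\in(1/(k-1),1-1/(k-1))$ strictly, every inequality above is strict, so $(\mu,m_2)$ lies in the interior of $\mc C$. For any fixed $w_0\in W$, strict interiority lets me write $(\mu,m_2)=\epsilon(w_0,w_0^2)+(1-\epsilon)(\mu',m_2')$ with $\epsilon>0$ and $(\mu',m_2')\in\mc C$; applying Carath\'eodory to $(\mu',m_2')$ yields a valid distribution $P^{(w_0)}$ with $P^{(w_0)}_{w_0}\ge\epsilon>0$. Averaging $P^{(w_0)}$ over all $w_0\in W$ then gives a valid distribution with $P_w>0$ for every $w$. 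The small-$k$ cases $k\in\{3,4,5\}$ (where $|W|\le 3$ forces at most one solution) are dealt with by writing the formulas for $P_w$ explicitly and checking strict positivity on the interior of the $p$-range, consistently with Remark~\ref{remark:corner_case_full_support}.
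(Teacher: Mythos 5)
Your proposal is correct, and it takes a genuinely different route from the paper's. The paper also reduces to symmetric weight-level variables, but it proves existence by explicit constructions: closed-form solutions supported on three weight levels when $p \in \left[\frac{1}{k-1},\frac{2}{k-1}\right)\cup\left(1-\frac{2}{k-1},1-\frac{1}{k-1}\right]$ (Lemma~\ref{lemma:ub_case_analysis}), a separate padding construction for the middle range $p\in\sqbrac{\frac{2}{k-1},1-\frac{2}{k-1}}\setminus\set{\frac12}$ that appends $k-\ell$ independent $\mu_p$ bits and conditions a smaller odd-arity distribution on their parity (Lemma~\ref{lemma:ub_add_ind_copies}), the uniform even-weight distribution for $p=\frac12$, full support via a perturbation along the kernel of the constraint system, and permutation invariance only at the end by averaging over $S_k$. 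Your reduction to the planar moment problem --- existence iff $\brac{kp,\ (kp)^2+kp(1-p)}$ lies in the convex hull of $\set{(w,w^2): w \text{ even}}$ --- treats all $p$ in one uniform argument, is automatically permutation-invariant, and your interiority-plus-averaging argument for full support is a clean substitute for the paper's perturbation; your boundary cases $p\in\set{\frac{1}{k-1},1-\frac{1}{k-1}}$ and $p \ge \frac{1}{k-1}$, $p \le 1-\frac{1}{k-1}$ emerge exactly where the chord inequalities become tight, matching Proposition~\ref{prop:query_bias_lb} and Remark~\ref{remark:corner_case_full_support}. What the paper's explicit formulas buy, and a pure existence argument does not, is the precise support structure of $\nu$ at the boundary values of $p$ (support on weights $\set{0,2}$ or $\set{k-2,k}$ or $\set{0,k-1}$), which is reused later to verify the ``contains BLR'' condition in the proof of Theorem~\ref{thm:intro_querybias_main_thm} and in Section~\ref{sec:corner_case}; that is not needed for the present proposition, however.

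One small correction to your case (c): the binding instance of your elementary inequality is not $\sqrt{(k-1)(k-4)}>\max(k-6,k-8)$ but $\sqrt{(k-1)(k-4)}>k-4$, arising from the last interval for odd $k$, where $w_-=k-3$ and $2w_--(k-2)=k-4$. The needed inequality still holds, since $(k-1)(k-4)-(k-4)^2=3(k-4)>0$ for $k\geq 5$. Equivalently, and more simply: the numerator $h(w_-)=4w_-^2-4(k-2)w_-+k$ of your discriminant is convex in $w_-$ and satisfies $h(2)=h(k-4)=32-7k<0$ and $h(k-3)=12-3k<0$ for every $k$ in which such intervals occur, so it is negative on the entire admissible range of $w_-$. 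With this repair the argument goes through exactly as you describe.
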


The proof involves various cases, considered below in Lemma~\ref{lemma:ub_case_analysis} and Lemma~\ref{lemma:ub_add_ind_copies}.

\begin{lemma}\label{lemma:ub_case_analysis}
	Let $k\geq 4$ be a positive integer, and let $p \in \left[\frac{1}{k-1}, \frac{2}{k-1}\right)\cup \left( 1-\frac{2}{k-1}, 1-\frac{1}{k-1}\right]$ (note that this interval is contained in $\sqbrac{\frac{1}{k-1},1-\frac{1}{k-1}}$ for $k\geq 4$). 
	Then, there exists a pairwise independent distribution $\nu (k,p)\in \mc D(p,k)$.
	
	Moreover, if $p\not\in \set{\frac{1}{k-1}, 1-\frac{1}{k-1}}$, then there exists such a distribution with full even-weight support.
\end{lemma}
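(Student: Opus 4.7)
\emph{Strategy.} I would restrict to permutation-invariant distributions, parametrized by the law of the Hamming weight $W=X_1+\cdots+X_k$: sample $W$, then pick a vector of weight $W$ uniformly. A permutation-invariant member of $\mc D(p,k)$ corresponds to $W$ supported on even integers in $\set{0,2,\dots,2\lfloor k/2\rfloor}$ with $\E[W]=pk$, and by averaging $\E[X_iX_j]$ over pairs $(i,j)$, pairwise independence collapses to the scalar equation $\Var(W)=pk(1-p)$. The task reduces to finding nonnegative masses $(\lambda_w)$ on the even integers satisfying $\sum\lambda_w=1$, $\sum w\lambda_w=pk$, and $\sum w(w-1)\lambda_w=p^2k(k-1)$.

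\emph{Case 1: $p\in[1/(k-1),2/(k-1))$.} I posit support $\set{0,2,4}$; the three linear conditions uniquely determine
\[\lambda_4=\tfrac{pk(p(k-1)-1)}{8},\quad \lambda_2=\tfrac{pk(3-p(k-1))}{4},\quad \lambda_0=1-\tfrac{pk(5-p(k-1))}{8}.\]
Nonnegativity of $\lambda_4$ and $\lambda_2$ is immediate from the endpoints of the range. For $\lambda_0$, the map $p\mapsto pk(5-p(k-1))$ is increasing on the range (its critical point $5/(2(k-1))$ lies past the right endpoint for $k\geq 4$), attaining maximum $6k/(k-1)$, which is at most $8$ precisely when $k\geq 4$.

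\emph{Case 2: $p\in(1-2/(k-1),1-1/(k-1)]$.} If $k$ is even, the bit-flip $X\mapsto \mathbf 1-X$ induces a bijection $\mc D(1-p,k)\to\mc D(p,k)$ preserving both even-weight support (since $k-w$ is even iff $w$ is) and pairwise independence, so I apply Case 1 at parameter $1-p$ and flip. If $k$ is odd, bit-flip swaps parities, and I instead posit support $\set{0,k-3,k-1}$ (both $k-3$ and $k-1$ even); the analogous system gives explicit formulas for $\lambda_{k-1},\lambda_{k-3},\lambda_0$, and the only nontrivial check $\lambda_0\geq 0$ reduces to $pk(2k-5-p(k-1))\leq(k-3)(k-1)$. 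Maximizing the LHS parabola at $p^\ast=(2k-5)/(2(k-1))$, which lies in the interior of the sub-range, the inequality becomes $k(2k-5)^2\leq 4(k-1)^2(k-3)$; expansion simplifies to $-3(k-4)\leq 0$, i.e., $k\geq 4$.

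\emph{Full support and main obstacle.} For $p$ strictly interior and $p\notin\set{1/(k-1),1-1/(k-1)}$, the base distribution above has three strictly positive masses $\lambda_a,\lambda_b,\lambda_c$. For each missing even weight $w^*\in\set{0,2,\dots,2\lfloor k/2\rfloor}\setminus\set{a,b,c}$, the $3\times 4$ linear system $\sum\delta_w=\sum w\delta_w=\sum w^2\delta_w=0$ on $\set{w^*,a,b,c}$ has a one-dimensional null space (a Vandermonde computation), which I orient so that $\delta^{(w^*)}_{w^*}>0$; adding $\sum_{w^*}\epsilon_{w^*}\,\delta^{(w^*)}$ for small $\epsilon_{w^*}>0$ preserves all three moment conditions by linearity, places positive mass on every missing even weight, and keeps the base masses nonnegative, yielding full even-weight support. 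The delicate point in the whole proof is the odd-$k$ half of Case 2: bit-flip symmetry is unavailable, and the nonnegativity of $\lambda_0$ reduces to a polynomial inequality whose slack vanishes exactly at $k=4$, so the argument has no room to spare and requires locating the parabola maximum explicitly. All other verifications are routine linear algebra.
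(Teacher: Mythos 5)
Your proposal is correct and follows essentially the same route as the paper: parametrize a permutation-invariant distribution by the law of the Hamming weight, reduce membership in $\mc D(p,k)$ plus pairwise independence to three linear moment conditions, solve explicitly on a three-point support (using the bit-flip symmetry for even $k$ at large $p$), and then perturb along the kernel of the moment constraints to get full even-weight support when $p\notin\set{\frac{1}{k-1},1-\frac{1}{k-1}}$. The only cosmetic differences are your uniform choice of support $\set{0,2,4}$ in the small-$p$ case (the paper uses $\set{0,2,k-1}$ or $\set{0,2,k}$ depending on parity) and your per-missing-weight Vandermonde kernel vectors in place of the paper's single perturbation vector; your nonnegativity verifications (which the paper omits) check out.
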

\begin{proof}
	Let $k\geq 4$ be a positive integer, and let $p \in \left[\frac{1}{k-1}, \frac{2}{k-1}\right)\cup \left( 1-\frac{2}{k-1}, 1-\frac{1}{k-1}\right]$.
	Let $s = \lfloor\frac{k}{2}\rfloor$; we shall exhibit a vector  $q = (q_0,q_1,\dots,q_s) \in [0,1]^{s+1}$ satisfying:
	\[
		\sum_{i=0}^s \binom{k}{2i}\cdot  q_i = 1,\ \quad\sum_{i=1}^s \binom{k-1}{2i-1}\cdot q_i = p,\quad \sum_{i=1}^s \binom{k-2}{2i-2}\cdot q_i = p^2.
	\]
	The distribution $\nu(p,k)$ is then defined 	by assigning probability $\begin{cases} q_{\abs{x}/2},& \abs{x}=0\modt\\0,& \abs{x}=1\modt \end{cases}$ to the point $x\in \set{0,1}^k$, where $\abs{x} = \sum_{i=1}^k x_i$.
	Note that the above properties correspond to $\nu(k,p)$ being a valid probability distribution supported on even-hamming-weight vectors, having marginals $\mu_p$, and pairwise independent coordinates.
	Furthermore, the distribution $\nu(p,k)$ has full even-weight support if and only if each $q_i \in (0,1)$.
	
	The vector $q$ is defined as follows in different cases (for brevity, we omit the verification of the above properties):
	\begin{enumerate}
		\item $k\geq 5$ is odd,  $p \in \left[\frac{1}{k-1}, \frac{2}{k-1}\right)$: Let $q_0 = 1 + \frac{kp^2}{2} - \frac{k^2p}{2(k-1)}$, $q_1 = \frac{(k-2)p-(k-1)p^2}{(k-1)(k-3)}$, $q_{(k-1)/2} = \frac{(k-1)p^2-p}{(k-1)(k-3)}$, and zero otherwise.
		\item $k\geq 5$ is odd,  $1-p\in \left[\frac{1}{k-1}, \frac{2}{k-1}\right)$: Let $q_0 = 1 + \frac{kp^2}{k-3} - \frac{k(2k-5)p}{(k-1)(k-3)}$, $q_{(k-3)/2} = \frac{3(k-2)p-3(k-1)p^2}{(k-1)(k-2)(k-3)}$, $q_{(k-1)/2} = \frac{(k-1)p^2-(k-4)p}{2(k-1)}$, and zero otherwise.
		\item $k\geq 4$ is even,  $p\in \left[\frac{1}{k-1}, \frac{2}{k-1}\right)$: Let $q_0 = \frac{(k-1)p^2-(k+1)p+2}{2}$,  $q_1 = \frac{p-p^2}{k-2}$, $q_{k/2} = \frac{(k-1)p^2-p}{k-2}$, and zero otherwise.
		\item $k\geq 4$ is even, $1-p\in \left[\frac{1}{k-1}, \frac{2}{k-1}\right)$: In this case, we define $\nu(k,p)$ to be the distribution obtained by flipping each coordinate of $\nu(k,1-p)$.
	\end{enumerate}
	
	Next, we show that if $p\not\in \set{\frac{1}{k-1}, 1-\frac{1}{k-1}}$, then such a distribution $\nu(p,k)$ with full even-weight support exists.
	We only need to do this for the first three cases, as the procedure described in the fourth case preserves the property of full even-weight support.
	 
	The same argument applies in all cases, and we present it for the first case: that is when $k\geq 5$ is odd, and $p \in \brac{\frac{1}{k-1}, \frac{2}{k-1}}$.
	We observe if $p \not= \frac{1}{k-1}$, each of the probabilities $q_0,q_1,q_{(k-1)/2}$ above lie in the interval $(0,1)$.
	Now, consider the equations
	\[
		\sum_{i=0}^s \binom{k}{2i}\cdot  \tilde{q}_i = 0,\ \quad\sum_{i=1}^s \binom{k-1}{2i-1}\cdot \tilde{q}_i = 0,\quad \sum_{i=1}^s \binom{k-2}{2i-2}\cdot \tilde{q}_i = 0.
	\]
	In these equations, the variables $\tilde{q}_0, \tilde{q}_1, \tilde{q}_{(k-1)/2}$ are linearly independent, and hence, there exists a vector $\tilde{q} \in \R^{s+1}$ satisfying these equations, which has all coordinates equal to 1, other than possibly $\tilde{q}_0, \tilde{q}_1, \tilde{q}_{(k-1)/2}$.
	Then, for some small $\delta > 0$, the vector $q+\delta\cdot \tilde{q}$ has all coordinates in $(0,1)$, and satisfies the required properties.	
\end{proof}

\begin{lemma}\label{lemma:ub_add_ind_copies}
	Let $k\geq 6$ be a positive integer, and let $p \in \sqbrac{\frac{2}{k-1},1-\frac{2}{k-1}} \setminus \set{\frac{1}{2}}$ (note that this interval is non-empty for $k\geq 6$).
	There, there exists a pairwise independent distribution $\nu (k,p)\in \mc D(p,k)$ with full even-weight support.
\end{lemma}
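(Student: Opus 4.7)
My plan is to construct $\nu(k,p)$ inductively by appending a single coordinate to a pairwise independent distribution on $k - 1$ coordinates. Given $k \geq 6$ and $p \in [\tfrac{2}{k-1}, 1-\tfrac{2}{k-1}] \setminus \{\tfrac{1}{2}\}$, the inequality $\tfrac{2}{k-1} \geq \tfrac{1}{k-2}$ (valid for $k \geq 3$) places $p \in [\tfrac{1}{k-2}, 1-\tfrac{1}{k-2}]$. Hence, by Lemma~\ref{lemma:ub_case_analysis} or by invoking the current lemma at smaller $k$, there exists a pairwise independent $\nu(k-1, p) \in \mc D(p, k-1)$ with full even-weight support. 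I will also need a companion object $\nu^{\mathrm{odd}}(k-1, p)$: a pairwise independent permutation-invariant distribution on odd-weight vectors of $\{0,1\}^{k-1}$, with marginals $\mu_p$ and full odd-weight support.

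Given both, I form the mixture: with probability $1-p$, sample $(X_1,\ldots,X_{k-1}) \sim \nu(k-1, p)$ and set $X_k = 0$; with probability $p$, sample $(X_1,\ldots,X_{k-1}) \sim \nu^{\mathrm{odd}}(k-1,p)$ and set $X_k = 1$; and finally, to recover permutation invariance, average uniformly over the $k$ choices of the distinguished coordinate. A direct calculation confirms all of the required properties: the marginal at each coordinate is $\mu_p$; for any non-distinguished pair $i, j$, $\E[X_i X_j] = p^2$ by pairwise independence within each branch; for $i$ non-distinguished and $j$ distinguished, $\E[X_i X_j] = p \cdot \Pr_{\nu^{\mathrm{odd}}}[X_i = 1] = p \cdot p = p^2$; the total weight is always even (even prefix paired with $X_k=0$, odd prefix paired with $X_k=1$); and every even-weight $x \in \{0,1\}^k$ lies in the support, since for any $\ell$ chosen as the distinguished coordinate, the remaining $k-1$ coordinates form either an even-weight or odd-weight vector lying in the (full) support of $\nu(k-1, p)$ or $\nu^{\mathrm{odd}}(k-1, p)$, respectively.

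The principal obstacle is the existence and full support of $\nu^{\mathrm{odd}}(k-1, p)$. When $k$ is even (so $k-1$ is odd), this follows immediately from the complementation $Y = \mathbf{1} - X$ applied to $\nu(k-1, 1-p)$: complementation flips parity (the ambient dimension being odd) and swaps bias from $1-p$ to $p$, while preserving pairwise independence and full support. When $k$ is odd (so $k-1 \geq 6$ is even), complementation preserves parity and a direct construction is required. I would mirror the proof of Lemma~\ref{lemma:ub_case_analysis}: parameterize permutation-invariant distributions on odd-weight classes $\{1, 3, \ldots, k-2\}$, solve the three linear constraints (normalization, $\mu_p$-marginals, pairwise independence $p^2$) in closed form, and verify strict positivity of the resulting class probabilities. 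For $k - 1 = 6$ the system admits a unique solution whose three class probabilities are positive throughout the interior of our range (the non-trivial discriminant check shows that the weight-$3$ probability is positive iff $p(1-p) > \tfrac{1}{6}$, an interval that contains $[\tfrac{1}{3}, \tfrac{2}{3}]$). For $k - 1 \geq 8$, the solution space has positive dimension, and one exhibits a three-class base solution with explicit formulas before perturbing within the kernel of the constraint matrix to secure strict positivity on every class. Completing this case analysis uniformly across the full range $p \in [\tfrac{2}{k-1}, 1-\tfrac{2}{k-1}] \setminus \{\tfrac{1}{2}\}$ for every odd $k \geq 7$ is the main technical step of the proof.
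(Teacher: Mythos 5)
Your reduction itself is sound: appending a distinguished coordinate whose value matches the parity of the remaining block, with each parity class drawn from a pairwise independent distribution having full support on that class, does yield a pairwise independent member of $\mc D(p,k)$ with full even-weight support (symmetrizing over the distinguished position is not even needed for this lemma). The verified case $k-1=6$ is also correct: solving the three constraints on the odd classes $\set{1,3,5}$ gives $q_3=\frac{6p(1-p)-1}{16}$ with the other two classes automatically positive, and $p(1-p)>\frac16$ indeed holds on $\sqbrac{\frac13,\frac23}$.

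The genuine gap is exactly the step you defer as ``the main technical step'': for odd $k\geq 9$ you need a pairwise independent distribution with marginals $\mu_p$ and \emph{full odd-weight support} on an \emph{even} number $k-1\geq 8$ of coordinates, for every $p\in\sqbrac{\frac{2}{k-1},1-\frac{2}{k-1}}\setminus\set{\frac12}$, and this is asserted as a plan (``exhibit a three-class base solution \dots perturb within the kernel'') rather than proved. This is not a routine citation of Lemma~\ref{lemma:ub_case_analysis}: that lemma only covers biases near the edges $\sqbrac{\frac{1}{k-1},\frac{2}{k-1}}\cup\sqbrac{1-\frac{2}{k-1},1-\frac{1}{k-1}}$, whereas you need the odd-parity analogue across the entire middle range, where no base solution is exhibited and positivity of a fixed three-class ansatz is not automatic (your own $m=6$ computation shows positivity can fail outside an explicit sub-interval). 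Moreover the gap propagates: for even $k$ and $p$ in the middle range your induction routes through the lemma at odd $k-1$, which again needs the missing odd-parity construction at even dimension $k-2$. The paper sidesteps this entirely by padding with an i.i.d.\ $\mu_p$ block: it picks the smallest \emph{odd} $\ell$ with $q=\min\set{p,1-p}\in\brac{\frac{1}{\ell-1},\frac{2}{\ell-1}}$, takes $\tilde\nu_0=\nu(\ell,p)$ and $\tilde\nu_1$ as the coordinatewise complement of $\nu(\ell,1-p)$ (complementation flips parity precisely because $\ell$ is odd, so both parity classes come from Lemma~\ref{lemma:ub_case_analysis}), samples $(X_{\ell+1},\dots,X_k)\sim\mu_p^{\otimes(k-\ell)}$, and draws the first $\ell$ coordinates from $\tilde\nu_Z$ where $Z$ is the suffix parity. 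Adopting an odd-length block in your construction (rather than a single appended coordinate) would close the gap without any new distributional existence results.
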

\begin{proof}
	Let $k\geq 6$ be a positive integer, and let $p \in \sqbrac{\frac{2}{k-1},1-\frac{2}{k-1}},\ p\not=\frac{1}{2}$.
	That is, for $q = \min\set{p,1-p} < \frac{1}{2}$, we have $k \geq 1 + \frac{2}{q}$.
	Let $\ell$ be the smallest odd integer satisfying $\ell > 1+\frac{1}{q} > 3$.
	Note that this satisfies $4\leq \ell \leq 3+\frac{1}{q} < 1 + \frac{2}{q} \leq k$, and we have $q \in \brac{\frac{1}{\ell-1}, \frac{2}{\ell-1}}$.	
	
	By Lemma~\ref{lemma:ub_case_analysis}, there exist pairwise independent distributions $\nu(\ell, p)$ and $\nu(\ell, 1-p)$, with full even-weight support.
	Let $\tilde{\nu}_0 = \nu(\ell, p)$, and let $\tilde{\nu}_1$ be the distribution obtained by flipping each coordinate of $\nu(\ell, 1-p)$.
	Since $\ell$ is odd, for each $b\in \set{0,1}$, it holds that $\tilde{\nu}_b$ has pairwise independent coordinates, each with marginal $\mu_p$, and such that $\supp(\tilde{\nu}_b) = \set{x\in \set{0,1}^k: \sum_{i=1}^k x_i = b \modt}$.
	Finally, we define $X \sim \nu(k,p)$ via the following random process: Let $(X_{\ell+1},\dots,X_k)\sim \mu_p^{\otimes\brac{k-\ell}}$, and with $Z = \sum_{i=\ell+1}^k X_i \modt$, we let $(X_1,\dots,X_\ell) \sim \tilde{\nu}_{Z}$.
	It is an easy check that this distribution satisfies the required properties.
\end{proof}

Finally, we prove Proposition~\ref{prop:query_bias_ub}.

\begin{proof}[Proof of Proposition~\ref{prop:query_bias_ub}]
	Note that it suffices to find such a distribution that is not necessarily permutation invariant, since averaging the distribution over all permutations preserves pairwise independence and full even-weight support.
	
	If $p=1/2$, for any $k\geq 3$, we let $\nu(k,p)$ be the uniform distribution on the set $\set{ x\in\set{0,1}^k : \sum_{i=1}^k x_i = 0 \modt}$.
	
	Now, for $k=3$, it must hold that $p=1/2$, in which case $\nu(k,p)$ is as above.
	For $k=4$ or $k=5$, and $p\not=1/2$, it must hold that $p \in \left[\frac{1}{k-1}, \frac{2}{k-1}\right)\cup \left( 1-\frac{2}{k-1}, 1-\frac{1}{k-1}\right]$, and the result follows from Lemma~\ref{lemma:ub_case_analysis}.
	For $k\geq 6$ and $p\not=\frac{1}{2}$, the result follows from Lemma~\ref{lemma:ub_case_analysis} and Lemma~\ref{lemma:ub_add_ind_copies}.
\end{proof}


\section{Putting Everything Together}\label{sec:putting_together}

We are now ready to prove our main result.

\begin{proof}[Proof of Theorem~\ref{thm:intro_querybias_main_thm}]
Let $p\in (0,1)$.

\begin{enumerate}
	\item Consider any positive integer $k > 1 + \frac{1}{\min\set{p,1-p}} \geq 3$ (or $k=3$ with $p=\frac{1}{2}$).
		By Proposition~\ref{prop:query_bias_ub}, there exists a pairwise independent distribution $\nu\in \mc D(p,k)$ with full even-weight support.
		The result now follows by Theorem~\ref{thm:bkm23_in_section}.
	
	\item Suppose that $k\geq 3$ with $p=\frac{1}{k-1}$, or $k\geq 4$ is even with $\ p = 1-\frac{1}{k-1}$.
	In these cases, we observe that the distribution $\nu\in \mc D(p,k)$ constructed in Lemma~\ref{lemma:ub_case_analysis} is pairwise independent, and contains BLR (see Definition~\ref{defn:cont_BLR}):
	\begin{enumerate}
		\item If $k\geq 3, p=\frac{1}{k-1}$, the distribution $\nu$ contains all vectors in $\set{0,1}^k$ of hamming-weights 0 and 2 in its support. In this case, Definition~\ref{defn:cont_BLR} is satisfied with $\tilde{b}=0$ and $\tilde{z}$ as the all-zeros vector.
		\item If $k\geq 4$ is even, and $p=1-\frac{1}{k-1}$, the distribution $\nu$ contains all vectors in $\set{0,1}^k$ of hamming-weights $k-2$ and $k$ in its support. In this case, Definition~\ref{defn:cont_BLR} is satisfied with $\tilde{b}=1$ and $\tilde{z}$ as the all-ones vector.
	\end{enumerate}
	The result now follows by Theorem~\ref{thm:bkm23_in_section}.
	
	\item Suppose that $k < 1 + \frac{1}{\min\set{p,1-p}}$ is a positive integer, and let $\nu \in \mc D(p,k)$.
		We perform the following operation on the distribution $\nu$:
		if $i,j\in [k],\ i\not=j$ are such that $\Pr_{X\sim \nu}[X_i=X_j]=1$, we remove coordinates $i,j$ from $\nu$, and repeat until no such pairs remain.
		
		Finally, we are left with a distribution $\tilde{\nu}$ on $\tilde{k}\leq k$ coordinates.		
		We consider the following two cases:
		\begin{enumerate}
			\item Suppose that $\tilde{k}=0$. In this case, for every $n\in \N,$ and every $f:\set{0,1}^n\to \set{-1,1}$, it holds that $\E_{X\sim \nu^{\otimes n}} \sqbrac{\prod_{i=1}^kf(X_i)} = 1$, since the $k$ terms in the product cancel out in pairs.
			Hence, it suffices to show the existence of a function $f:\set{0,1}^n\to \set{-1,1}$ satisfying $\abs{\E_{X\sim \mu_p^{\otimes n}}\sqbrac{f(X)\cdot \chi_S(X)}} \leq o_n(1)$ for every $S\subseteq [n]$.
			Note that a (uniformly) random function $f:\set{0,1}^n\to \set{-1,1}$ satisfies this with high probability, by an argument similar to the one at the end of Section~\ref{sec:counter_eg_2} (a random function can be thought of as rounding the constant zero function as in Section~\ref{sec:counter_eg_2}).
			
			\item Now, suppose that $\tilde{k}\not=0$. 
			Then, it holds that $\tilde{\nu} \in \mc D(p,\tilde{k})$, and by Proposition~\ref{prop:query_bias_lb}, we have that $\tilde{\nu}$ has no pairwise independent coordinate.
			Now, by Theorem~\ref{thm:counter_eg} there exists a constant $\alpha>0$, such that for every large $n\in \N$, there exists a function $f:\set{0,1}^n\to \set{-1,1}$ such that \[ \abs{\E_{(X_1,\dots,X_k)\sim \nu^{\otimes n}} \sqbrac{\prod_{i\in [k]} f(X_i)} } = \abs{\E_{(X_1,\dots,X_{\tilde{k}})\sim \tilde{\nu}^{\otimes n}} \sqbrac{\prod_{i\in [\tilde{k}]} f(X_i)} }\geq \alpha,\]
			and such that $ \abs {\E_{X\sim \mu_p^{\otimes n}}\sqbrac{f(X)\cdot \chi_S(X)}} \leq o_n(1)$ for every $S\subseteq [n]$. \qedhere
			\end{enumerate}
\end{enumerate}
\end{proof}

\subsection{A Corner Case}\label{sec:corner_case}

In the above proof, we leave the case of odd $k\geq 5$ and $p=1-\frac{1}{k-1}$.
This turns out to be very interesting, and we discuss it next.
For the remainder of this section, we fix such a $k$ and $p$.

In this case, the pairwise independent distribution $\nu\in \mc D(p,k)$ constructed in Lemma~\ref{lemma:ub_case_analysis}, is supported on vectors of hamming weights 0 and $k-1$ (and does not contain BLR as in Definition~\ref{defn:cont_BLR}).
In particular, for every $x\in \supp(\nu)$, it holds that $\sum_{i=1}^k x_i = 0 \modk$.
For this reason, as we show next, the best we can expect from the test $\Lin(\nu)$, is to guarantee correlation with a character over $\Z/(k-1)\Z$, and this is indeed true.

\begin{definition}\label{defn:char}(Characters over $\Z/(k-1)\Z$)
Let $\omega$ be a primitive $(k-1)$\textsuperscript{th} root of unity.
For every $0\leq r\leq k-2$, we define the function $\phi_r: \set{0,1} \to \C$ as $\phi_r(x)=\omega^{rx}$.

For every $n\in \N$, and every integers $0\leq r\up{1},\dots,r\up{n}\leq k-2$, we define the product character $\phi_{r\up{1},\dots,r\up{n}}:\set{0,1}^n\to \C$ by $\phi_{r\up{1},\dots,r\up{n}}(x) = \prod_{j=1}^n\phi_{r\up{j}}(x\up{j}) = \omega^{\sum_{j=1}^n r\up{j}x\up{j}}$.
\end{definition} 

Now, consider the test $\Lin(\nu)$.
Observe that any character $f = \phi_{r\up{1},\dots,r\up{n}}$ passes this test with probability 1:
\[ \E_{X\sim\nu^{\otimes n}}\sqbrac{\prod_{i\in [k]}f(X_i)} = \prod_{j=1}^n \E_{Y\sim\nu}\sqbrac{\prod_{i\in [k]}\phi_{r\up{j}}(Y_i)} = \prod_{j=1}^n \E_{Y\sim\nu}\sqbrac{\omega^{r\up{j}\cdot \brac{\sum_{i\in [k]}{Y_i}}}} = 1.\]
Next, we claim that characters explain the success of $\Lin(\nu)$ for any function $f$:

\begin{theorem}
	For every constant $\epsilon > 0$, there exists a constant $\delta >0$ such that for every large enough $n\in \N$, the following is true:
	
	Let $f:\set{0,1}^n\to[-1,1]$ be a function such that $\abs{\E_{X\sim \nu^{\otimes n}} \sqbrac{\prod_{i=1}^k f(X_i)} }\geq \eps.$
	Then, there exist integers $0\leq r\up{1},\dots,r\up{n}\leq k-2$, such that
	\[ \abs {\E_{X\sim \mu_p^{\otimes n}}\sqbrac{f(X)\cdot \phi_{r\up{1},\dots,r\up{n}}(X)}} \geq \delta .\]
\end{theorem}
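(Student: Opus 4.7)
The key structural observation is that in this corner case, the pairwise independent distribution $\nu \in \mc D(p,k)$ constructed in Lemma~\ref{lemma:ub_case_analysis} is supported only on the all-zeros vector and the $k$ vectors of Hamming weight $k-1$, so $\sum_i x_i \in \set{0,k-1}$ and hence $\sum_i x_i \equiv 0 \modk$ almost surely. Via the inclusion $\set{0,1} \hookrightarrow \Z/(k-1)\Z$, this forces $\prod_{i=1}^k \phi_{r\up{1},\dots,r\up{n}}(X_i) = \omega^{\sum_j r\up{j}\sum_i X_i\up{j}} = 1$ almost surely under $\nu^{\otimes n}$, which is why every such character passes $\Lin(\nu)$. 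This also suggests that $\Lin(\nu)$ should be analyzed as a linearity test valued in $\Z/(k-1)\Z$ rather than in $\F_2$, and that the right conclusion is correlation with a $\phi_r$-character.

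The plan is to rerun the proof of Theorem~\ref{thm:bkm23_in_section}, replacing the $\F_2$-characters $\chi_S$ with the $\Z/(k-1)\Z$-characters $\phi_r$ throughout. First I would verify that the pairwise independence of $\nu$ in the $\set{0,1}$ sense (i.e.\ $\E[X_iX_j]=p^2$) upgrades to pairwise independence of every non-trivial character $\phi_r$, in the sense that
\[\E_{X\sim\nu}\sqbrac{\phi_r(X_i)\overline{\phi_r(X_j)}} = \E\sqbrac{\phi_r(X_i)}\cdot \E\sqbrac{\overline{\phi_r(X_j)}}\]
for all $i\neq j$ and all $r\in\set{1,\dots,k-2}$. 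This follows from $\phi_r(x) = 1 + (\omega^r - 1)x$ on $\set{0,1}$: both sides expand to the same linear combination of $\E[X_iX_j]$, $p$, and $1$, so the equality is forced by $\E[X_iX_j]=p^2$. This is the $\Z/(k-1)\Z$-analogue of having a pairwise independent coordinate in the BKM framework. With this input one then runs the BKM-style argument (outlined in Section~\ref{sec:bkm_sketch}): a $p$-biased Fourier decomposition of $f$, a noise-stability reduction to the low-degree part, an invariance-principle step passing to Gaussians, and a final character-extraction step. The mod-$(k-1)$ support of $\nu$ makes the $\phi_r$'s the natural characters in the analysis, and the pairwise independence above is exactly what is needed in the final step to extract a single character $\phi_{r\up{1},\dots,r\up{n}}$ rather than a product of low-degree factors.

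The main obstacle is carrying the BKM framework through over $\Z/(k-1)\Z$ in place of $\F_2$. The Fourier, noise-stability, and invariance ingredients are essentially agnostic to the character group and should transport without serious modification. The delicate step is the final character extraction, where pairwise independence is used to collapse a low-degree correlation to a single-character correlation; in the $\F_2$ proof this uses the pairwise independence of a coordinate of $\nu$, and the computation above supplies precisely the analogous input over $\Z/(k-1)\Z$, so this step should go through by a direct translation.
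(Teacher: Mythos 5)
Your structural observations are correct (the support of $\nu$ lies in Hamming weights $0$ and $k-1$, hence $\sum_i X_i \equiv 0 \modk$ almost surely, which is why the $\phi_r$-characters pass), and your pairwise-independence computation for characters is fine --- indeed it is immediate, since two $\set{0,1}$-valued coordinates with marginals $\mu_p$ and $\E\sqbrac{X_iX_j}=p^2$ are pairwise independent as random variables, so any bounded functions of them are uncorrelated; this is the right input for the final decoupling step. The gap is in your actual plan: ``rerun the proof of Theorem~\ref{thm:bkm23_in_section} with $\chi_S$ replaced by $\phi_r$'' cannot be carried out as stated, because that theorem and its proof require $\nu$ to contain BLR (Definition~\ref{defn:cont_BLR}), and the whole point of this corner case is that the distribution from Lemma~\ref{lemma:ub_case_analysis} does \emph{not} contain BLR. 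The entry point of that argument, Lemma~\ref{lemma:lfcurr}, uses the containment essentially: one splits $\nu=(1-\beta)\nu'+\beta\mu$ with $\mu$ uniform on an embedded BLR triple and runs the classical $\F_2$ Fourier analysis of BLR on the restricted coordinates to get a large Fourier coefficient of a random restriction of $f$. There is no analogous embedded ``BLR over $\Z/(k-1)\Z$'' inside $\supp(\nu)\subseteq\set{0,1}^k$ (a triple $x_1,x_2,x_1+x_2$ modulo $k-1$ cannot live in the alphabet $\set{0,1}$ for $k-1>2$), so the claim that the ingredients are ``agnostic to the character group and transport without serious modification'' is precisely where the difficulty lies, and your sketch gives no mechanism for obtaining the initial correlation of a restricted $f$ with some $\phi_r$-character. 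Your pipeline also silently drops the list-decoding step, which in the $\F_2$ proof needs the span condition on $\supp(\nu)$.

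The paper does not rerun its own argument either: it invokes the results of Bhangale, Khot, Liu and Minzer~\cite{BKLM24a, BKLM24b} on satisfiable CSPs over general abelian groups, whose restriction-based inverse theorem is proved directly for abelian characters and supplies the first two steps (correlation with a character under random restriction, then character times a low-degree function); only afterwards does pairwise independence of $\nu$ enter, exactly as in your last step, to remove the low-degree factor. So your high-level three-stage strategy matches the paper's sketch, but as a proof the proposal has a genuine missing ingredient at the first stage, which in the paper is filled by citing the abelian-group machinery rather than by translating Section~\ref{sec:bkm_sketch}.
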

\begin{proof}
	The result follows from the work of Bhangale, Khot, Liu and Minzer~\cite{BKLM24a, BKLM24b}, and we omit the details.
	Very roughly speaking, the proof follows a similar strategy as in Section~\ref{sec:bkm_sketch}: first show that $f$ has good correlation with a character under random restrictions; then, use this to show that $f$ has good correlation with character times a low-degree function; finally, use that $\nu$ is pairwise independent to get rid of the low-degree function.
\end{proof}

Finally, we present an alternative solution to deal with this corner case of odd $k\geq 5$ and $p=1-\frac{1}{k-1}$.
Instead of the test $\Lin(\nu)$, we can perform the following test:

Let $f:\set{0,1}^n\to [-1,1]$, and let $\nu' \in \mc D(1-p,k)=\mc D(\frac{1}{k-1},k)$ be the pairwise independent distribution from Lemma~\ref{lemma:ub_case_analysis}.
\begin{enumerate}
	\item Sample $X=(X_1,\dots,X_k)\sim \nu'^{\otimes n}$.
	\item Let $X'$ be the vector obtained by negating each of the $kn$ coordinates of $X$.
	\item Query $f$ on $X'_1,\dots,X'_k$ and accept if and only if $\prod_{i\in [k]} f(X'_i) = 1$.  
\end{enumerate}
Each query $X_i'$ of the above test is distributed according to $\mu_p^{\otimes n}$, and the analysis of the test simply follows from the analysis for $\Lin(\nu')$ in Theorem~\ref{thm:intro_querybias_main_thm}.
The drawback here, though, is that the test does not accept all linear functions with probability 1, but only functions of the form $(-1)^{\abs{S}}\cdot \chi_S,$ for $S\subseteq[n]$.

\section{Analysis of the Linearity Test}\label{sec:bkm_sketch}

In this section, we shall state and prove a generalized version of Theorem~\ref{thm:bkm23}.
The proof follows the work of Bhangale, Khot and Minzer~\cite{BKM23b}, and hence we only give a rough outline (skipping many of the technical points), pointing out the places where the proof differs from the above work.
We start with the following definition:

\begin{definition}\label{defn:cont_BLR}
	Let $k\geq 3,p\in (0,1)$, and let $\nu\in \mc D(p,k)$ be a distribution.
	We say that $\nu$ \emph{contains BLR}, if there exists some $\tilde{b}\in \set{0,1},\ \tilde{z}\in \set{0,1}^{k-3}$, such that 
	\[ \set{(x_1,\ x_2,\ x_1\oplus x_2\oplus \tilde{b},\ \tilde{z}) : x_1,x_2\in \set{0,1}}\subseteq \supp(\nu)\subseteq \set{0,1}^k. \]	
	Furthermore, for technical reasons, we shall also require that \[\textnormal{span}_{\mathbb{F}_2}(\supp(\nu)) = \set{x\in \set{0,1}^k : \sum_{i=1}^k x_i = 0\modt} .\]
\end{definition}

Observe that any $\nu$ with full even-weight support contains BLR (with $\tilde{b}=0$, and $\tilde{z}$ the all-zeros vector).
With this, we state the following generalization of Theorem~\ref{thm:bkm23}:

\begin{theorem}\label{thm:bkm23_in_section}
	Let $k\geq 3$ be a positive integer, and let $p\in (0,1),\ \epsilon \in (0,1]$ be constants, and let $\nu \in \mc D(p,k)$ be a distribution containing BLR (see Definition~\ref{defn:cont_BLR}).
	Then, there exists constants $\delta>0,\ d\in \N$ (possibly depending on $k, p, \epsilon, \nu$), such that for every large enough $n\in \N$, the following is true:
	
	Let $f:\set{0,1}^n\to[-1,1]$ be a function such that \[ \abs{\E_{(X_1,\dots,X_k)\sim \nu^{\otimes n}} \sqbrac{\prod_{i=1}^k f(X_i)} }\geq \eps.\]
	Then, there exists a set $S\subseteq [n]$, and a polynomial $g:\set{0,1}^n\to \R$ of degree at most $d$ and with 2-norm $\E_{X\sim \mu_p^{\otimes n} }\sqbrac{g(X)^2}\leq 1$, such that
	\[ \abs {\E_{X\sim \mu_p^{\otimes n}}\sqbrac{f(X)\cdot \chi_S(X)\cdot g(X)}} \geq \delta .\]
	
	Moreover, if the distribution $\nu$ has some pairwise independent coordinate, then we may assume $g\equiv 1$; that is, $f$ correlates with a linear function $\chi_S$.
\end{theorem}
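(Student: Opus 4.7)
The plan is to adapt the strategy of Bhangale, Khot and Minzer~\cite{BKM23b} to general $k\geq 3$ and to the weaker hypothesis that $\nu$ contains BLR (Definition~\ref{defn:cont_BLR}) rather than having full even-weight support. The proof proceeds in two conceptual stages: (i) extract local BLR-type structure from the $k$-query test by using the embedded BLR triple in $\supp(\nu)$, and (ii) lift this local structure to a single global character $\chi_S$ times a bounded-degree polynomial $g$.

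First I would set up a random restriction framework. For a small constant $\rho>0$, include each coordinate $j\in[n]$ in an ``alive'' set $J$ independently with probability $\rho$, and fix each dead coordinate to a random $\mu_p$ value. By averaging, a constant fraction of restrictions preserves test bias, i.e.\ $\bigl|\E_{X\sim \nu^{\otimes J}}[\prod_i f_J(X_i)]\bigr| \geq \epsilon/2$ for the induced function $f_J$. Combining this with hypercontractivity on the $p$-biased hypercube, one shows that $f_J$ must place noticeable mass on its level-$\leq d$ Fourier part, where $d=d(\epsilon,p,k,\nu)$ is a constant. This step controls the degree appearing in the conclusion and reduces the global problem to a local one on $|J|$ coordinates.

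Next I would exploit the BLR substructure of $\nu$. Because $\supp(\nu)$ contains all tuples $(x_1,x_2,x_1\oplus x_2\oplus \tilde b,\tilde z)$, conditioning the last $k-3$ coordinates of $\nu^{\otimes J}$ to the constant string $\tilde z^{\otimes J}$ (paying only a bounded density factor) produces a biased-BLR test on the first three coordinates, after the $\oplus \tilde b$ shift. A $p$-biased BLR-style Fourier argument applied to $f_J$ then yields correlation with $\chi_T\cdot h$ for some $T\subseteq J$ and a polynomial $h$ of degree $\leq d$ with bounded $2$-norm. The hypothesis $\textnormal{span}_{\F_2}(\supp\nu)=\set{x:\sum_i x_i=0\modt}$ ensures the structure is a genuine $\F_2$-character and not a character of some proper subgroup (which is precisely the issue flagged in Section~\ref{sec:corner_case}).

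The main obstacle, and the heart of the argument, is lifting the local $\chi_T\cdot h$ correlation on random restrictions $J$ to a single global $\chi_S\cdot g$ correlation. This requires a consolidation step: show that the map $J\mapsto T(J)$ is the restriction of a fixed global $S\subseteq[n]$ (via concentration of the level-$\leq d$ Fourier mass across restrictions plus a union-bound/sunflower argument), and that the local polynomials $h(\,\cdot\,;J)$ average to a single global polynomial $g$ of degree $\leq d$ with $\E_{\mu_p^{\otimes n}}[g^2]\leq 1$. I expect this gluing/invariance step to be the most technically delicate, as in BKM23b. Finally, when $\nu$ has a pairwise independent coordinate $i^\ast$, I would feed the conclusion back into the test: pairwise independence means $(X_{i^\ast},X_j)$ is a product distribution for every $j\neq i^\ast$, so averaging the inner product against the $\mu_p$ marginal on the $i^\ast$-coordinate kills any non-constant low-degree piece of $g$, yielding the strengthened conclusion $g\equiv 1$.
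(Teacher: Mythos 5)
Your high-level roadmap (random restriction, exploiting the embedded BLR triple, then the BKM machinery, then pairwise independence to force $g\equiv 1$) is the same as the paper's, but two of your concrete mechanisms do not work as stated. First, conditioning the last $k-3$ query coordinates of $\nu^{\otimes J}$ to the constant string $\tilde z^{\otimes J}$ is not ``a bounded density factor'': under $\nu^{\otimes J}$ this event has probability $c^{\abs{J}}$ with $c=\Pr_{X\sim\nu}\sqbrac{(X_4,\dots,X_k)=\tilde z}<1$ in general, i.e.\ exponentially small, so no lower bound on the conditioned test bias survives. The paper avoids conditioning entirely by writing $\nu=(1-\beta)\nu'+\beta\mu$ with $\mu$ uniform on the embedded BLR tuples; sampling $X\sim\nu^{\otimes n}$ then splits $[n]$ into a restricted set $I$ (where the dead coordinates are set \emph{separately for each of the $k$ queries} according to $\nu'$, not to one shared $\mu_p$ value as in your setup) and its complement, on which queries $4,\dots,k$ are automatically the constant $\tilde z$ and queries $1,2,3$ form a genuine uniform-distribution BLR test; the classical Fourier analysis of BLR then yields a single large Fourier coefficient of the restricted function $f_{I\to Z_1}$ (Lemma~\ref{lemma:lfcurr}). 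Second, your hypercontractivity claim that $f_J$ must have noticeable level-$\leq d$ Fourier mass is false ($f=\chi_S$ with $\abs{S}$ large passes the test with probability $1$) and is not what is needed: the low-degree factor $g$ in the conclusion does not come from $f$ being close to low degree, but from the restriction-inverse/direct-product theorem of~\cite{BKM23b} (their Theorem 1.1), which is precisely the gluing step you defer; the paper invokes it rather than reproving it.

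The deduction of $g\equiv 1$ is also too quick. Pairwise independence of the coordinate $i^{*}$ only says that $(X_{i^{*}},X_j)$ is a product distribution for each single $j\neq i^{*}$; it does not make $X_{i^{*}}$ independent of the joint tuple $(X_j)_{j\neq i^{*}}$, so ``averaging against the $\mu_p$ marginal on the $i^{*}$ query'' does not factor the $k$-wise expectation and no cancellation follows. The paper needs two further ingredients here: an iterative list-decoding step that replaces $f$ inside the test by a sum $\sum_{i\in[r]}\chi_{S_i}\cdot g_i$ (this is where the condition $\textnormal{span}_{\F_2}(\supp(\nu))=\set{x:\sum_i x_i=0\modt}$ is used), and an invariance-principle argument transferring the resulting expectation to a Gaussian vector with the same covariance, where pairwise uncorrelatedness \emph{does} upgrade to full independence of the $i^{*}$ coordinate; only then can the $i^{*}$ factor be pulled out and shown to be $o_n(1)$ under the assumption that $f$ correlates with no $\chi_S$, giving the contradiction. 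Without these two steps your final paragraph does not go through.
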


The remainder of this section is devoted to the proof of the above theorem.
Let $k\geq 3$ be an integer, and let $\ p\in (0,1),\ \epsilon \in (0,1]$ be constants, and let $\nu \in \mc D(p,k)$ be a distribution containing BLR (see Definition~\ref{defn:cont_BLR}).
Also, let $f:\set{0,1}^n\to[-1,1]$ be a function such that

\begin{equation}\label{eqn:test_pass}
	\abs{\E_{X=(X_1,\dots,X_k)\sim \nu^{\otimes n}} \sqbrac{\prod_{i=1}^k f(X_i)} }\geq \eps.
\end{equation}

\subsection*{Step 1: Large Fourier Coefficient under Random Restriction.}\label{sec:large_fcurr}
We note that the proof of this step is where we differ from~\cite{BKM23b}.

Since the distribution $\nu\in \mc D(p,k)$ contains BLR, we can write $\nu = (1-\beta)\cdot \nu' + \beta\cdot \mu$, for some small constant $0<\beta<\frac{1}{2}\min\set{p,1-p}$, some distribution $\nu'$ over $\set{0,1}^k$, and with $\mu$ the uniform distribution over $\set{(x_1,x_2,x_1\oplus x_2\oplus \tilde{b},\tilde{z}) : x_1,x_2\in \set{0,1}}$, where $\tilde{b},\tilde{z}$ are as in Definition~\ref{defn:cont_BLR}.
Using this, we can describe choosing $X \sim \nu^{\otimes n}$ as the following two step process. First choose a set $I\subseteq [n]$, denoted $I\sim_{1-\beta} [n]$, by choosing $i\in I$ with probability $1-\beta$, independently for each $i\in [n]$.
Then, choose $Z\sim \nu'^{\otimes I}$ and $Y\sim \mu^{\bar{I}}$, and set $X = (Y,Z)$.

With the above, we can prove that the function $f$ satisfies the property of having a large fourier coefficient under random restrictions; the reader is referred to~\cite{Don14} for an introduction to Fourier analysis over the hypercube.

\begin{lemma}\label{lemma:lfcurr}
	With $\delta = \epsilon/2$, it holds that
	\[ \Pr_{I\sim_{1-\beta}[n],\ Z\sim \nu'^{\otimes I}}\sqbrac{\exists S\subseteq [n]\setminus I:\ \abs{\widehat{f_{I\to Z_1}}(S)}\geq \delta\ } \geq \delta.\]
	Here, $f_{I\to Z_1}$ refers to the restriction of the function $f$, with the variables in $I$ \emph{set to} $Z_1$.
\end{lemma}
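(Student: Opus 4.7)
The plan is to exploit the BLR-containment of $\nu$ by writing it as a convex combination $\nu = (1-\beta)\nu' + \beta \mu$, where $\mu$ is the uniform distribution on the four BLR points from Definition~\ref{defn:cont_BLR}. Since each of these four points lies in $\supp(\nu)$ with strictly positive mass, any sufficiently small $\beta>0$ will make $\nu' := (\nu - \beta\mu)/(1-\beta)$ a genuine probability distribution on $\set{0,1}^k$. Thereafter, $\nu^{\otimes n}$ decomposes coordinate-wise as in the lemma statement: first pick $I\sim_{1-\beta}[n]$, then independently draw $Z \sim \nu'^{\otimes I}$ and $Y \sim \mu^{\otimes \bar I}$, and output $X_i = (Y_i, Z_i)$.

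Conditioning on $(I, Z)$, let $A(I, Z) := \E_Y\bigl[\prod_{i=1}^k f(Y_i, Z_i)\bigr]$, so that the test value equals $\E_{I, Z}[A(I, Z)]$. The structure of $\mu$ forces $Y_i = \tilde z_{i-3}\mathbf{1}_{\bar I}$ for $i \geq 4$ (giving a deterministic factor of modulus $\leq 1$) and forces $Y_3 = Y_1 \oplus Y_2 \oplus \tilde b \mathbf{1}_{\bar I}$ with $Y_1, Y_2$ independent and uniform on $\set{0,1}^{\bar I}$. Hence the analysis of $A(I, Z)$ reduces (up to a scalar factor of modulus $\leq 1$) to the BLR-type expectation
\[
\E_{Y_1, Y_2}\bigl[f_{I\to Z_1}(Y_1)\cdot f_{I\to Z_2}(Y_2)\cdot f_{I\to Z_3}(Y_1 \oplus Y_2 \oplus \tilde b\mathbf{1}_{\bar I})\bigr]
\]
over the uniform hypercube on $\bar I$.

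Next, I will apply the standard Fourier computation over $\set{0,1}^{\bar I}$: expanding each factor in characters yields $\sum_{S \subseteq \bar I}\widehat{f_{I\to Z_1}}(S)\widehat{f_{I\to Z_2}}(S)\widehat{f_{I\to Z_3}}(S)\chi_S(\tilde b\mathbf{1}_{\bar I})$. Pulling out the largest Fourier coefficient of $f_{I\to Z_1}$ and bounding the remaining sum by Cauchy--Schwarz plus Parseval (using $|f|\leq 1$, hence $\|\widehat{f_{I\to Z_i}}\|_2 \leq 1$ for $i=2,3$) delivers the pointwise estimate $|A(I, Z)| \leq \max_{S \subseteq [n]\setminus I}|\widehat{f_{I\to Z_1}}(S)|$.

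Finally, assumption~(\ref{eqn:test_pass}) gives $\E_{I,Z}[|A(I,Z)|] \geq \epsilon$, while $|A(I,Z)| \leq 1$ pointwise. A one-line Markov argument then yields $\Pr_{I,Z}[|A(I, Z)| \geq \epsilon/2] \geq \epsilon/2$, so with $\delta = \epsilon/2$ the pointwise estimate produces the desired Fourier coefficient. I do not anticipate any serious obstacle: the only care required is checking that the decomposition $\nu = (1-\beta)\nu' + \beta\mu$ is valid for some $\beta > 0$, which uses BLR-containment (rather than mere support of the four BLR points) to guarantee positive mass, and that the coordinate $Y_1$ (equivalently, the first query) is the ``free'' one in the BLR component so that the Fourier-coefficient bound naturally falls on $f_{I\to Z_1}$ as stated.
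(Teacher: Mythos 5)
Your proposal is correct and follows essentially the same route as the paper: the same decomposition $\nu=(1-\beta)\nu'+\beta\mu$, the same observation that $Y_4,\dots,Y_k$ are deterministic so the inner expectation reduces (up to a factor of modulus at most $1$) to a BLR expression on $\set{0,1}^{\bar I}$, the same Fourier expansion with Cauchy--Schwarz/Parseval to bound it by $\max_{S\subseteq\bar I}|\widehat{f_{I\to Z_1}}(S)|$, and the same averaging argument giving $\Pr[\,\cdot\,]\geq\epsilon/2$ with $\delta=\epsilon/2$. No gaps.
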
	
\begin{proof}
	By Equation~\ref{eqn:test_pass}, we have
	\begin{align*}
		\epsilon &\leq \abs{\ \E_{X=(X_1,\dots,X_k)\sim \nu^{\otimes n}} \sqbrac{\prod_{i=1}^k f(X_i)}\ }
		\\&= \abs{\ \E_{I\sim_{1-\beta}[n],\ Z\sim \nu'^{\otimes I}}\E_{Y\sim \mu^{\otimes \bar{I}}} \sqbrac{\prod_{i=1}^k f_{I\to Z_i}(Y_i)}\ }
		\\&\leq \E_{I\sim_{1-\beta}[n],\ Z\sim \nu'^{\otimes I}}\abs{\ \E_{Y\sim \mu^{\otimes \bar{I}}} \sqbrac{\prod_{i=1}^k f_{I\to Z_i}(Y_i)}\ }		
	\end{align*}
	Observe that in the above expression, the random variables $Y_4,\dots,Y_k$ are constants (determined by $\tilde{z}$).
	Now, using a (classical) Fourier analytic argument to analyze the BLR linearity test over the uniform distribution (see Chapter 1 of~\cite{Don14}), we get
	\begin{align*}
		\epsilon &\leq \E_{I\sim_{1-\beta}[n],\ Z\sim \nu'^{\otimes I}}\abs{\ \E_{Y\sim \mu^{\otimes \bar{I}}} \sqbrac{\prod_{i=1}^3 f_{I\to Z_i}(Y_i)}\ }		
		\\&= \E_{I\sim_{1-\beta}[n],\ Z\sim \nu'^{\otimes I}}\abs{\ \sum_{S\subseteq \bar{I}} \widehat{f_{I\to Z_1}}(S)\cdot \widehat{f_{I\to Z_2}}(S)\cdot \widehat{f_{I\to Z_3}}(S)\cdot (-1)^{\tilde{b}\cdot \abs{S}} \ }
		\\&\leq \E_{I\sim_{1-\beta}[n],\ Z\sim \nu'^{\otimes I}} \sqbrac{\max_{S\subseteq \bar{I}}\abs{\widehat{f_{I\to Z_1}}(S)}}
		\\&\leq \Pr_{I\sim_{1-\beta}[n],\ Z\sim \nu'^{\otimes I}}\sqbrac{\exists S\subseteq \bar{I}:\ \abs{\widehat{f_{I\to Z_1}}(S)}\geq \epsilon/2}  + \epsilon/2.
		\qedhere
	\end{align*}
\end{proof}	

\subsection*{Step 2: Direct Product Test}
Using Theorem 1.1 in~\cite{BKM23b}, by Lemma~\ref{lemma:lfcurr} we get the existence of constants $d\in \N, \delta'>0$, a set $S\subseteq [n]$, and a polynomial $g:\set{0,1}^n\to \R$ of degree at most $d$, and with 2-norm $\E_{X\sim \mu_p^{\otimes n} }\sqbrac{g(X)^2}\leq 1$, such that \[ \abs {\E_{X\sim \mu_p^{\otimes n}}\sqbrac{f(X)\cdot \chi_S(X)\cdot g(X)}} \geq \delta' .\]
	
This proves the first part of Theorem~\ref{thm:bkm23}.
It remains to show that if $\nu$ has some pairwise independent coordinate, it is possible to remove the function $g$ in the above expression.
	
\subsection*{Step 3: List Decoding.} This step follows Section 4.2 and Section 4.3 in~\cite{BKM23b}.

Using an iterative list-decoding process, we can find a constant $r\in \N$, and functions $\chi_{S_1}, \dots, \chi_{S_r}$, and constant degree polynomials $g_1,\dots, g_r,$ such that it is possible to ``replace" $f$ by $\sum_{i\in [r]}\chi_{S_i}\cdot g_i$ in Equation~\ref{eqn:test_pass} (and lose at most some constant factor in $\epsilon$).
Now, this implies that for some constant $\epsilon'>0$, and some indices $j_1,\dots,j_k\in [r]$, we have  
\begin{equation}\label{eqn:after_list}
	\abs{\E_{(X_1,\dots,X_k)\sim \nu^{\otimes n}} \sqbrac{\prod_{i=1}^k \chi_{S_{j_i}}(X_i) g_{j_i}(X_i) } }\geq \epsilon'.
\end{equation}
We remark that for the next step, some extra structure on $S_{j_i}$'s is needed, and ensuring that it holds requires the condition on $\textnormal{span}_{\F_2}(\supp(\nu))$ in Definition~\ref{defn:cont_BLR}.
	
\subsection*{Step 4: Invariance Principle Argument.}
This step follows Section 4.4, Section 4.5, and Section 4.6 in~\cite{BKM23b}. 

Assume, for the sake of contradiction, that $f$ is not correlated well with any $\chi_S$; that is, $\E_{X\sim \mu_p^{\otimes n}} \sqbrac{f(X)\cdot \chi_S(X)} \leq o_n(1)$ for each $S\subseteq [n]$.
Using this, it can be shown, roughly, that for each $i\in [k]$, the expectation $\E_{X\sim \mu_p^{\otimes n}}\sqbrac{\chi_{S_{j_i}}(X) g_{j_i}(X)} \leq o_n(1)$; note that for this conclusion to hold, we might have to modify $S_{j_i}$'s and $g_{j_i}$'s, however it is possible to do so while maintaining Equation~\ref{eqn:after_list}.

Now, by an invariance principle argument~\cite{MOO10, Mos10, Mos20}, very roughly, it is possible to replace the expectation in Equation~\ref{eqn:after_list} over $(X_1,\dots,X_k)\sim \nu^{\otimes n}$, by an expectation over $(Z_1,\dots,Z_k) \sim \mc N(0,\Sigma)^{\otimes n}$, where $\Sigma\in \R^{k\times k}$ is the (normalized) covariance matrix of $\nu$.
Finally, we use that some coordinate $X_{i^*}$ is pairwise independent of each $X_i$, for $i\not=i^*$.
Since the Gaussian distribution is determined by its covariance matrix, this implies that $Z_{i^*}$ is mutually independent of $(Z_i)_{i\not=i^*}$.
We have
\begin{align*}
	\epsilon' &\leq \abs{\E_{X=(X_1,\dots,X_k)\sim \nu^{\otimes n}} \sqbrac{\prod_{i=1}^k \chi_{S_{j_i}}(X_i) g_{j_i}(X_i) } }
	\\ &\approx  \abs{\E_{Z=(Z_1,\dots,Z_k)\sim \mc N(0,\Sigma)^{\otimes n}} \sqbrac{\prod_{i=1}^k \chi_{S_{j_i}}(Z_i) g_{j_i}(Z_i) } }
	\\&\approx \abs{\ \E_{Z_{i^*}\sim \mc N(0,1)^{\otimes n}} \sqbrac{ \chi_{S_{j_{i^*}}}(Z_{i^*}) g_{j_{i^*}}(Z_{i^*})} }\cdot  \abs{\ \E_{Z} \sqbrac{\prod_{i\in [k], i\not=i^*} \chi_{S_{j_i}}(Z_i) g_{j_i}(Z_i) } }
	\\&\approx \abs{\ \E_{X_{i^*}\sim \mu_p^{\otimes n}} \sqbrac{ \chi_{S_{j_{i^*}}}(X_{i^*}) g_{j_{i^*}}(X_{i^*})} }\cdot  \abs{\ \E_{Z} \sqbrac{\prod_{i\in [k], i\not=i^*} \chi_{S_{j_i}}(Z_i) g_{j_i}(Z_i) } }
	\\ &\leq o_n(1),
\end{align*}
which is a contradiction.
\qed

\section*{Acknowledgements}
We thank Amey Bhangale, Yang P. Liu, and Dor Minzer for discussions that helped this project.
Amey and Dor politely declined to be co-authors.

\bibliographystyle{alpha}
\bibliography{main.bib}

\newcommand{\etalchar}[1]{$^{#1}$}
\begin{thebibliography}{BSSVW03}

\bibitem[AKK{\etalchar{+}}05]{AKKLR05}
Noga Alon, Tali Kaufman, Michael Krivelevich, Simon Litsyn, and Dana Ron.
\newblock Testing {R}eed-{M}uller codes.
\newblock {\em IEEE Trans. Inform. Theory}, 51(11):4032--4039, 2005.

\bibitem[ALM{\etalchar{+}}98]{ALMSS98}
Sanjeev Arora, Carsten Lund, Rajeev Motwani, Madhu Sudan, and Mario Szegedy.
\newblock Proof verification and the hardness of approximation problems.
\newblock {\em J. ACM}, 45(3):501--555, 1998.

\bibitem[AS98]{AS98}
Sanjeev Arora and Shmuel Safra.
\newblock Probabilistic checking of proofs: a new characterization of {NP}.
\newblock {\em J. ACM}, 45(1):70--122, 1998.

\bibitem[BCH{\etalchar{+}}96]{BCHKS96}
Mihir Bellare, Don Coppersmith, Johan H{\aa}stad, Marcos Kiwi, and Madhu Sudan.
\newblock Linearity testing in characteristic two.
\newblock {\em IEEE Trans. Inform. Theory}, 42(6, part 1):1781--1795, 1996.
\newblock (also in SFCS 1995).

\bibitem[BKLM24a]{BKLM24a}
Amey Bhangale, Subhash Khot, Yang~P. Liu, and Dor Minzer.
\newblock On approximability of satisfiable k-{CSP}s: {VI}.
\newblock 2024.
\newblock Available at \url{https://arxiv.org/pdf/2411.15133}.

\bibitem[BKLM24b]{BKLM24b}
Amey Bhangale, Subhash Khot, Yang~P. Liu, and Dor Minzer.
\newblock On approximability of satisfiable k-{CSP}s: {VII}.
\newblock 2024.
\newblock Available at \url{https://arxiv.org/pdf/2411.15136}.

\bibitem[BKM22]{BKM22}
Amey Bhangale, Subhash Khot, and Dor Minzer.
\newblock On approximability of satisfiable {$k$}-{CSP}s: {I}.
\newblock In {\em STOC}, pages 976--988, 2022.

\bibitem[BKM23a]{BKM23a}
Amey Bhangale, Subhash Khot, and Dor Minzer.
\newblock On approximability of satisfiable k-{CSP}s: {II}.
\newblock In {\em STOC}, pages 632--642, 2023.

\bibitem[BKM23b]{BKM23b}
Amey Bhangale, Subhash Khot, and Dor Minzer.
\newblock On approximability of satisfiable k-{CSP}s: {III}.
\newblock In {\em STOC}, pages 643--655, 2023.

\bibitem[BKM24a]{BKM24a}
Amey Bhangale, Subhash Khot, and Dor Minzer.
\newblock On approximability of satisfiable k-{CSP}s: {IV}.
\newblock In {\em STOC}, pages 1423--1434, 2024.

\bibitem[BKM24b]{BKM24b}
Amey Bhangale, Subhash Khot, and Dor Minzer.
\newblock On approximability of satisfiable k-{CSP}s: {V}.
\newblock {\em Electron. Colloquium Comput. Complex.}, {TR24-129}, 2024.

\bibitem[BKS{\etalchar{+}}10]{BKSSZ10}
Arnab Bhattacharyya, Swastik Kopparty, Grant Schoenebeck, Madhu Sudan, and
  David Zuckerman.
\newblock Optimal testing of {R}eed-{M}uller codes.
\newblock In {\em FOCS}, pages 488--497, 2010.

\bibitem[BLR93]{BLR93}
Manuel Blum, Michael Luby, and Ronitt Rubinfeld.
\newblock Self-testing/correcting with applications to numerical problems.
\newblock {\em J. Comput. System Sci.}, 47(3):549--595, 1993.
\newblock (also in STOC 1990).

\bibitem[BoCLR08]{BCLR08}
Michael Ben-or, Don Coppersmith, Mike Luby, and Ronitt Rubinfeld.
\newblock Non-abelian homomorphism testing, and distributions close to their
  self-convolutions.
\newblock {\em Random Structures Algorithms}, 32(1):49--70, 2008.
\newblock (also in APPROX-RANDOM 2004).

\bibitem[BSSVW03]{BSVW03}
Eli Ben-Sasson, Madhu Sudan, Salil Vadhan, and Avi Wigderson.
\newblock Randomness-efficient low degree tests and short {PCP}s via
  epsilon-biased sets.
\newblock In {\em STOC}, pages 612--621, 2003.

\bibitem[DDG{\etalchar{+}}17]{DDGKS17}
Roee David, Irit Dinur, Elazar Goldenberg, Guy Kindler, and Igor Shinkar.
\newblock Direct sum testing.
\newblock {\em SIAM J. Comput.}, 46(4):1336--1369, 2017.
\newblock (also in ITCS 2015).

\bibitem[DFH19]{DFH19}
Irit Dinur, Yuval Filmus, and Prahladh Harsha.
\newblock Analyzing {B}oolean functions on the biased hypercube via
  higher-dimensional agreement tests.
\newblock In {\em SODA}, pages 2124--2133, 2019.

\bibitem[Dur19]{Dur19}
Rick Durrett.
\newblock {\em Probability---theory and examples}.
\newblock Cambridge University Press, Cambridge, 2019.
\newblock Fifth edition.

\bibitem[FGL{\etalchar{+}}96]{FGLSS96}
Uriel Feige, Shafi Goldwasser, Laszlo Lov\'{a}sz, Shmuel Safra, and Mario
  Szegedy.
\newblock Interactive proofs and the hardness of approximating cliques.
\newblock {\em J. ACM}, 43(2):268--292, 1996.

\bibitem[HK07]{HK07}
Shirley Halevy and Eyal Kushilevitz.
\newblock Distribution-free property-testing.
\newblock {\em SIAM J. Comput.}, 37(4):1107--1138, 2007.
\newblock (also in APPROX-RANDOM 2003, 2005).

\bibitem[Hoe63]{Hoe63}
Wassily Hoeffding.
\newblock Probability inequalities for sums of bounded random variables.
\newblock {\em J. Amer. Statist. Assoc.}, 58:13--30, 1963.

\bibitem[KLX10]{KLX10}
Tali Kaufman, Simon Litsyn, and Ning Xie.
\newblock Breaking the {$\epsilon$}-soundness bound of the linearity test over
  {$\rm GF(2)$}.
\newblock {\em SIAM J. Comput.}, 39(5):1988--2003, 2010.
\newblock (also in APPROX-RANDOM 2008).

\bibitem[KLZM24]{KLMZ24}
Gil Kalai, Noam Lifshitz, Tamar Ziegler, and Dor Minzer.
\newblock A dense model theorem for the boolean slice.
\newblock In {\em FOCS}, 2024.
\newblock (to appear).

\bibitem[KS09]{KS09}
Swastik Kopparty and Shubhangi Saraf.
\newblock Tolerant linearity testing and locally testable codes.
\newblock In {\em APPROX-RANDOM}, pages 601--614. 2009.

\bibitem[MOO10]{MOO10}
Elchanan Mossel, Ryan O'Donnell, and Krzysztof Oleszkiewicz.
\newblock Noise stability of functions with low influences: invariance and
  optimality.
\newblock {\em Ann. of Math. (2)}, 171(1):295--341, 2010.

\bibitem[Mos10]{Mos10}
Elchanan Mossel.
\newblock Gaussian bounds for noise correlation of functions.
\newblock {\em Geom. Funct. Anal.}, 19(6):1713--1756, 2010.

\bibitem[Mos20]{Mos20}
Elchanan Mossel.
\newblock Gaussian bounds for noise correlation of resilient functions.
\newblock {\em Israel J. Math.}, 235(1):111--137, 2020.

\bibitem[O'D14]{Don14}
Ryan O'Donnell.
\newblock {\em Analysis of {B}oolean functions}.
\newblock Cambridge University Press, New York, 2014.

\bibitem[SW06]{SW06}
Amir Shpilka and Avi Wigderson.
\newblock Derandomizing homomorphism testing in general groups.
\newblock {\em SIAM J. Comput.}, 36(4):1215--1230, 2006.
\newblock (also in STOC 2004).

\end{thebibliography}

\end{document}